\documentclass[prx,superscriptaddress,twocolumn,longbibliography]{revtex4-2}


\usepackage[utf8]{inputenc}

\usepackage[T1]{fontenc}
\usepackage{lmodern}

\usepackage{qcircuit}
\usepackage{braket}
\usepackage{amsfonts,amsmath,amssymb,amsthm}
\usepackage{mathtools}

\usepackage{bm}
\usepackage{graphicx}
\usepackage{ascmac}
\usepackage{mathrsfs}
\usepackage{float}
\usepackage{url}
\usepackage{natbib}

\usepackage{algorithmic}
\usepackage{algorithm}

\theoremstyle{plain}
\newtheorem{thm}{Theorem}
\newtheorem{proposition}[thm]{Proposition}
\newtheorem{lem}[thm]{Lemma}

\newtheorem*{thm*}{Theorem}
\newtheorem*{lem*}{Lemma}
\newtheorem*{cor*}{Corollary}

\theoremstyle{definition}
\newtheorem{dfn}{Definition}

\theoremstyle{remark}
\newtheorem{rem}[thm]{Remark}
\newtheorem*{rem*}{Remark}

\usepackage[colorlinks=true, allcolors=blue]{hyperref}

\setlength {\marginparwidth}{2cm}

\hypersetup{breaklinks=true}

\begin{document}
\title{Heisenberg-limited adaptive gradient estimation for multiple observables}

\author{Kaito Wada}
\email{wkai1013keio840@keio.jp}
\affiliation{Graduate School of Science and Technology, Keio University, 3-14-1 Hiyoshi, Kohoku, Yokohama, Kanagawa, 223-8522, Japan}

\author{Naoki Yamamoto}
\email{yamamoto@appi.keio.ac.jp}
\affiliation{Department of Applied Physics and Physico-Informatics, Keio University, 3-14-1 Hiyoshi, Kohoku-ku, Yokohama, Kanagawa, 223-8522, Japan}
\affiliation{Quantum Computing Center, Keio University, Hiyoshi 3-14-1, Kohoku, Yokohama 223-8522, Japan}

\author{Nobuyuki Yoshioka}
\email{nyoshioka@ap.t.u-tokyo.ac.jp}
\affiliation{Department of Applied Physics, University of Tokyo, 7-3-1 Hongo, Bunkyo-ku, Tokyo 113-8656, Japan}
\affiliation{Theoretical Quantum Physics Laboratory, RIKEN Cluster for Pioneering Research (CPR), Wako-shi, Saitama 351-0198, Japan}
\affiliation{JST, PRESTO, 4-1-8 Honcho, Kawaguchi, Saitama, 332-0012, Japan}
\affiliation{\mbox{International Center for Elementary Particle Physics, The University of Tokyo, 7-3-1 Hongo, Bunkyo-ku, Tokyo 113-0033, Japan}}

\begin{abstract}
    In quantum mechanics, measuring the expectation value of a general observable has an inherent statistical uncertainty that is quantified by variance or mean squared error of measurement outcome. While the uncertainty can be reduced by averaging several samples, the number of samples should be minimized when each sample is very costly. This is especially the case for fault-tolerant quantum computing that involves measurement of multiple observables of non-trivial states in large quantum systems that exceed the capabilities of classical computers. In this work, we provide an adaptive quantum algorithm for estimating the expectation values of $M$ general observables within root mean squared error $\varepsilon$ simultaneously, using $\mathcal{O}(\varepsilon^{-1}\sqrt{M}\log M)$ queries to a state preparation oracle of a target state. This remarkably achieves the scaling of Heisenberg limit $1/\varepsilon$, a fundamental bound on the estimation precision in terms of mean squared error, together with the sublinear scaling of the number of observables $M$. The proposed method is an adaptive version of the quantum gradient estimation algorithm and has a resource-efficient implementation due to its adaptiveness. Specifically, the space overhead in the proposed method is $\mathcal{O}(M)$ which is independent from the estimation precision $\varepsilon$ unlike non-iterative algorithms. In addition, our method can avoid the numerical instability problem for constructing quantum circuits in a large-scale task (e.g., $\varepsilon\ll 1$ in our case), which appears in the actual implementation of many algorithms relying on quantum signal processing techniques. Our method paves a new way to precisely understand and predict various physical properties in complicated quantum systems using quantum computers.
\end{abstract}
\maketitle

\begin{figure*}[tb]
    \centering
    \includegraphics[width=0.9\textwidth]{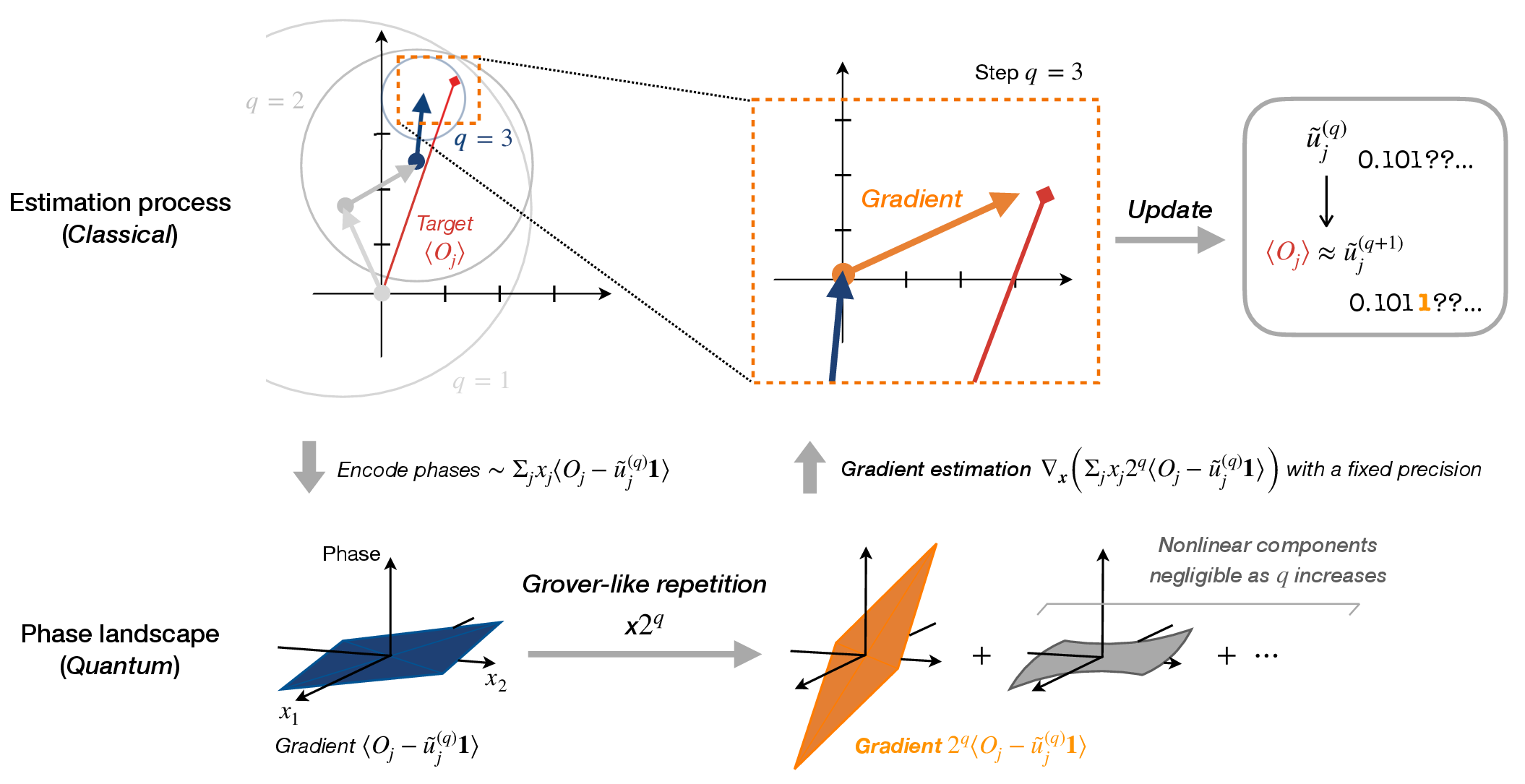}
    \caption{Graphical summary of the proposed method.
    Our algorithm estimates the expectation values of general observables $\{\braket{O_j}\}_{j=1}^M$ regarding a quantum state $\ket{\psi}$ prepared by a unitary $U_{\psi}$, with an adaptive procedure.
    In the $q$th iteration step, 
    we first encode the values of estimation errors $\{\braket{O_j-\tilde{u}^{(q)}_j\bm{1}}\}$ for a unitary gate, where $\tilde{u}^{(q)}_j\in [-1,1]$ denotes a temporal estimated value from the previous step, for the target value $\braket{O_j}$.
    In the upper left illustration, each circle denotes a closed ball with $1/2^q$ radius centered at the $q$th estimate $\{\tilde{u}^{(q)}_j\}$.
    Then, we coherently amplify the estimation errors by the factor of $2^q$ via $\mathcal{O}(2^q\sqrt{M})$ repetitions of Grover-like operation
    (or Hamiltonian simulation) on the encoding unitary and prepare the (approximate) probing state $\ket{\Upsilon(q)}$ (defined as in Eq.~\eqref{eq:targetstate}) whose phase at each point $\ket{\bm{x}}=\ket{x_1}\cdots\ket{x_M}$ is depicted in the lower right illustration. 
    The nonlinear components come from the Grover-like repetition, as detailed in Sec.~\ref{subsec:sp4itergradest_rev}.
    From the probing state $\ket{\Upsilon(q)}$, we read out \textit{zoomed-in} values $\{2^q\braket{O_j-\tilde{u}^{(q)}_j\bm{1}}\}$ simultaneously by quantum gradient estimation protocol with a fixed measurement precision.
    Using the measurement results, we update the temporal estimates $\tilde{u}_j^{(q)}$ to $\tilde{u}_j^{(q+1)}$.
    By repeating this procedure for $q=0,1,...,\lceil \log_2(1/\varepsilon)\rceil$, this algorithm estimates the $M$ observables simultaneously within at most root MSE $\varepsilon$.
    The algorithm uses totally $\mathcal{O}(\varepsilon^{-1}\sqrt{M}\log M)$ queries to the state preparation unitary $U_{\psi}$ that indicates the scaling of Heisenberg limit, that is, the inverse of root MSE $\varepsilon$, together with the nearly squared root dependence on the number of observables $M$, which comes from the uniform singular value amplification process in the encoding of $\{\braket{O_j-\tilde{u}^{(q)}_j\bm{1}}\}$.
    }
    \label{fig:main}
\end{figure*}

\section{Introduction}
\subsection{Background}
Attaining quantum enhancement in unknown parameter estimation lies as one of the most fundamental tasks in quantum technology.
It has been noticed in the quantum metrological community from the late 20th century that there exists a gap between the standard quantum limit (SQL), the statistical scaling of the measurement count $\mathcal{O}(1/\varepsilon^2)$ for target precision of $\varepsilon$ that is based on the central limit theorem, and the Heisenberg limit (HL), the scaling $\mathcal{O}(1/\varepsilon)$ due to quantum uncertainty relations that are quantified by a natural metric called the mean squared error (MSE).
Theoretically, the HL can be achieved 
using entanglement or coherence (i.e., sequential applications of a sensing channel) in quantum probes under the absence of
noise~\cite{giovannetti2004quantum, giovannetti2011advances}, while its experimental verification has not been realized for a long time until the phase estimation procedure by Higgins {\it et al.} that uses a novel adaptive measurement~\cite{higgins2007entanglement}.
Aside from Ref.~\cite{higgins2007entanglement},
intensive quest for experimental realization of the scaling beyond the SQL has provoked numerous interesting ideas such as the use of quantum error correction~\cite{kessler2014quantum, arrad2014increasing, dur2014improved, zhou2018achieving}, exploiting the non-Markovianity of the environment~\cite{matsuzaki2011magnetic, chin2012quantum}.

The preceding pursuit for the fundamental limitation by quantum mechanics naturally pertains to the quantum-enhanced measurement of multiple observables.
This long-standing question has been posed mainly in the community of quantum computing, since an overwhelming number of quantum algorithms estimate the expectation values of local and/or global observables; this task is typically summarized as $M=\mathcal{O}(N^k)$ (possibly non-commuting) observables estimation on an $N$-qubit target system, for some $k$.
Not to mention the scientific application of quantum computers for quantum simulation of many-body systems in natural science~\cite{lloyd1996universal, aspuru2005simulated, childs2018toward, gilyen2019quantum, beverland2022assessing, yoshioka2024hunting},
such a task is also ubiquitous in industrial uses such as quantum machine learning~\cite{biamonte2017quantum, 
liu2021rigorous, huang2022provably} and quantum finance~\cite{Stamatopoulos2020optionpricingusing, chakrabarti2021threshold}.
Note that, while one may employ the modified amplitude estimation algorithm~\cite{suzuki2020amplitude,ZapataQAE2021,wada2022quantum} individually for $M$ observables to simply obtain the HL scaling of $\mathcal{O}(M/\varepsilon)$ in queries to a target state preparation with respect to the root MSE $\varepsilon$, here we seek for simultaneous improvement on the target precision $\varepsilon$ and observable count $M$.
Simply put, our goal is to construct a protocol that achieves both the HL scaling $\mathcal{O}(1/\varepsilon)$ and the sublinear scaling with respect to $M$ in estimating all of $M$ observables.

Indeed, some previous works based on the gradient estimation algorithm~\cite{jordangradest,gilyen2019optimizing} have reported nearly HL scaling, although with multiplicative logarithmic correction $\log (1/\varepsilon)$~\cite{van2023quantum,PhysRevLett.129.240501}; 
there is no work that achieves the ultimate precision scaling following the HL in light of its definition.
Furthermore, existing works argue the estimation precision in terms of confidence intervals, which crucially fails to bound the worst-case behavior of a single run, unlike the MSE.
As is well known in the field of quantum metrology~\cite{berry2001optimal,higgins2009demonstrating,giovannetti2011advances,PhysRevA.102.042613}, the MSE successfully bounds other measures of uncertainty, while the converse does not hold in general.
Considering that many quantum algorithms  that estimate observables make queries to the state preparation oracle of a target quantum state whose complexity usually scales with the system size, it is crucial to design an estimation algorithm that achieves the HL scaling in terms of the MSE.

A key insight for enhancement can be borrowed from the history of phase estimation algorithm~\cite{kitaev1995quantum}.
It has been well-recognized that the textbook style of the phase estimation algorithm~\cite{nielsen2010quantum} encounters two major bottlenecks that prevents the algorithm from practical benefits; the large ancilla consumption of $\mathcal{O}(\log(1/\varepsilon))$ and the poor query complexity scaling with the target root MSE $\varepsilon$, i.e., only obeying the SQL $\mathcal{O}(1/\varepsilon^2)$~\cite{higgins2007entanglement,PhysRevLett.98.090501,PhysRevA.90.062313}.
While the former can be addressed by the iterative phase estimation~\cite{kitaev1995quantum}, 
it was not until the work by Higgins {\it et al.}~\cite{higgins2007entanglement} that 
both issues were overcome by utilizing an adaptive measurement scheme to achieve the HL with a constant number of ancilla.
Although a follow-up work has shown that HL scaling can be achieved even without relying on the adaptive scheme (but using an iterative procedure), there is in practice an increase of the estimation variance by a constant factor~\cite{higgins2009demonstrating}.
On the other hand, if the large ancilla consumption is acceptable, it is known that the quantum phase estimation algorithm with an entangled ancillary state, instead of a uniform superposition state, can achieve the HL~\cite{PhysRevA.54.4564,PhysRevLett.98.090501,PhysRevA.90.062313,PhysRevX.8.041015}.
From the above observations, we argue that adaptive strategy is crucial for resource-efficient implementation of quantum estimation that saturates the ultimate scaling limited by purely fundamental principles of quantum physics.

\if0
\begin{itemize}

    
    
    

    \item Heisenberg limit
    

    %
    


    %

\end{itemize}
\fi

\subsection{Summary of results}\label{sec:intro_mainres}
Driven by such a situation, we make a significant contribution to multiple observables estimation of number $M$ with root MSE of $\varepsilon$. 
Concretely, we explicitly construct an adaptive estimation scheme as summarized as Algorithm~\ref{alg:main}, or graphically as Fig.~\ref{fig:main} which satisfies the following features (each correspond to Theorem~\ref{thm:main_query_complexity}, Theorem~\ref{thm:main_thm_improved}, Lemmas~\ref{thm:sp_HS}, and \ref{thm:sp_Grover}, respectively):
\begin{itemize}
    \item {\it Heisenberg-limited scaling $\mathcal{O}(1/\varepsilon)$ and sublinear scaling regarding $M$.} The proposed observable estimation achieves the pure HL scaling with root MSE $\varepsilon$ as $$\mathcal{O}(\varepsilon^{-1}\sqrt{M}\log M)$$ 
    in query to the state preparation.
    Also, this scaling indicates a nearly quadratic improvement regarding $M$, compared to using the (modified) quantum amplitude estimation~\cite{brassard2002quantum,knill2007optimal,rall2020quantum,suzuki2020amplitude,ZapataQAE2021,wada2022quantum} $M$ times.
    For a particular set of observables e.g., projectors onto each basis state, we can also achieve the query complexity
    \begin{equation*}
        \mathcal{O}(\varepsilon^{-1}\log M)
    \end{equation*}
    i.e., the logarithmic dependence of $M$ combined with the HL scaling.
    \item {\it Constant space overhead.}
    The quantum circuits in the proposed scheme require at most additional $\mathcal{O}(M)$ qubits, which is independent from the target root MSE $\varepsilon$ unlike non-iterative methods~\cite{PhysRevLett.129.240501,van2023quantum} that use additional $\mathcal{O}(M\log(1/\varepsilon))$ qubits.
    \item {\it Robustness in high-precision regime.}
    When highly precise estimates (i.e., $\varepsilon \ll 1$) are required, the quantum circuits in our method have at most $\mathcal{O}(\log(1/\varepsilon))$ parameterized gates for QSVT~\cite{gilyen2019quantum}, which can be tuned by $\mathcal{O}({\rm polylog}(1/\varepsilon))$ classical computation, while the previous methods~\cite{PhysRevLett.129.240501,van2023quantum} require to tune $\mathcal{O}(1/\varepsilon)$ gates with $\mathcal{O}({\rm poly}(1/\varepsilon))$ classical computation.
    This significant reduction in classical computation allows us to avoid the numerical instability problem~\cite{Haah2019product,chao2020finding,dong2021efficient,Ying2022stablefactorization,PhysRevResearch.6.L012007, yamamoto2024robust} that spoils the quantum enhancement.
\end{itemize}
The first feature is a mathematical guarantee of estimation performance, and the other features highlight ease of practical implementation of our method.
We also mention as another important feature that  the proposed algorithm is the first adaptive extension of quantum gradient estimation~\cite{jordangradest, gilyen2019optimizing}, to the best of our knowledge.
Therefore, our method would be applicable to various tasks in which the gradient estimation algorithm offers quantum speedup, such as 
quantum machine learning~\cite{gilyen2019optimizing,cornelissen2021quantum,Cornelissen2022}, financial risk calculation~\cite{Stamatopoulos2022towardsquantum}, and molecular force calculation~\cite{PhysRevResearch.4.043210}.

Let us briefly describe the abstract structure of the algorithm and where the features stem from.
In our algorithm, we sample from an $\mathcal{O}(M+\log_2 d)$-qubit circuit 
with an alternating sequence of a global interaction to encode the expectation values of target observables and a controlled rotation over the target $\log_2 d$-qubit system (and some ancillary system). 
The global interaction for observables and the total circuit length are adjusted in an adaptive manner, so that the expectation values can be read out in high resolution with the space overhead kept to be independent from the target precision $\varepsilon$.
Then, we classically process the samples and use the processing results to construct the quantum circuit at the next step, in a similar way as the Heisenberg-limited phase estimation algorithms~\cite{higgins2009demonstrating,kimmel2015robust,PhysRevA.102.042613,dutkiewicz2022heisenberg}.
The sublinear scaling in the number $M$ of observables comes from the uniform singular value amplification process~\cite{low2017hamiltonian,gilyen2019quantum} embedded in the global interaction.
This reason for the speedup on $M$ is the same as the previous method~\cite{van2023quantum}.

We remark that the classical computation mentioned in the third point is required for the circuit construction, especially for quantum signal processing (QSP)~\cite{PhysRevX.6.041067,Low2019hamiltonian}.
QSP provides a systematic way to operate a 1-qubit system under a wide range of polynomial functions of degree $n$, using $\mathcal{O}(n)$ parameterized quantum gates, and it is also a key component of a more general technique for quantum matrix polynomials, called quantum singular value transformation (QSVT)~\cite{gilyen2019quantum}.
In the framework of QSP (also QSVT), we need to tune the parameterized gates classically for a desired polynomial.
Although finding this parameter for a degree-$n$ polynomial can be achieved in $\mathcal{O}({\rm poly}(n))$ classical computation time, it exhibits numerical instability for large $n$; this instability leads to undesired algorithmic errors in the resulting quantum circuit, posing a central challenge in the practical application of QSVT~\cite{Haah2019product}.
To resolve this, various optimization techniques have been investigated~\cite{Haah2019product,chao2020finding,dong2021efficient,Ying2022stablefactorization,PhysRevResearch.6.L012007, yamamoto2024robust}, and currently, they require $10^2$--$10^4$ seconds to tune $n\sim 10^4$ parameters.
In the previous method for multiple observables estimation~\cite{van2023quantum}, the number of circuit parameters is given by $n=\tilde{\mathcal{O}}(\sqrt{M}/\varepsilon)$, leading to a large runtime in classical computation.
For instance, we can naively extrapolate that classical runtime for $M=10^4$ and $\varepsilon=10^{-4}$ already requires prohibitively long $10^4$--$10^6$ seconds.
In contrast, our method requires to tune only $\mathcal{O}(\sqrt{M}\log(M/\varepsilon))$ parameters under a certain condition, by partially using a special polynomial whose parameters can be analytically determined.
This exponential improvement in classical computation time, regarding estimation precision $1/\varepsilon$, significantly lowers the barrier of the practical implementation of our method.

Finally, we provide numerical simulations to compare the number of total queries and ancilla qubits in our adaptive method and the previous non-iterative method~\cite{PhysRevLett.129.240501}.
The results clearly show the quantitative advantages of our method in terms of query and space complexity.
In particular, the previous method indicates the Heisenberg-limited scaling $\mathcal{O}(1/\varepsilon)$ with logarithmic factors on $1/\varepsilon$ regarding root MSE $\varepsilon$, while our method has no $\varepsilon$ dependence except for $\mathcal{O}(1/\varepsilon)$ in the total queries.

The remainder of this paper is organized as follows.
The problem discussed in this paper is described in Sec.~\ref{sec:problem_setup}.
In Sec.~\ref{sec:main_III}, we focus on the estimation performance of the proposed method, setting aside its circuit implementation.
Then, we describe the implementation of our method in Sec.~\ref{subsec:sp4itergradest_rev}.
Sec.~\ref{sec:num_simulation} provides the numerical simulations.
Finally, we conclude the paper in Sec.~\ref{sec:conclusion}.
The full proof of our propositions is given in Appendix.

\section{Problem setup}\label{sec:problem_setup}

We consider estimating quantum expectation values of given $d$-dimensional $M$ Hermitian operators $\{O_j\}_{j=1}^M$ with the spectral norm $\|O_j\|\leq 1$, regarding a quantum state $\ket{\psi}$. Here, $d$ is a power of 2 and $N$ denotes the number of target system qubits i.e., $N\equiv\log_2 d$.
The target state $\ket{\psi}$ is assumed to be prepared by a state preparation oracle $U_{\psi}:\ket{\bm{0}}\mapsto \ket{\psi}$ and an initial state $\ket{\bm{0}}:=\ket{0}^{\otimes \log_2 d}$, and we assume oracular access to $U_{\psi}$ and $U_{\psi}^\dagger$.
The observables are assumed to be accessed by some block-encoded unitaries over the $d$-dimensional system and some ancilla system; that is, for each observable $O_j$, we assume access to a unitary gate $B_j$ whose top-left block matrix is $O_j$.
Various descriptions of observables e.g., sparse matrices and linear combinations of Pauli strings can be easily converted to such an encoding~\cite{gilyen2019quantum}.
(Precise definition of this encoding is provided in Sec.~\ref{subsec:sp4itergradest_rev}.)
In this setup, our goal is to \textit{efficiently} obtain samples from estimators 
for the target values $\braket{O_j}:=\bra{\psi}O_j\ket{\psi}$ within the root mean squared error (MSE) $\varepsilon$. 
In particular, we aim to simultaneously achieve the HL scaling regarding $\varepsilon$ and the sublinear scaling regarding $M$, for estimating all $\braket{O_j}$. 

The performance of quantum algorithms for this task is usually quantified by the total number of queries to the state preparation $U_{\psi}$ and $U_{\psi}^\dagger$.
This is because $U_{\psi}$ has complexity that usually scales with the system size, and as a result, the state preparation $U_{\psi}$ is the most dominant factor in the total execution time for various settings.
Under the natural assumption that $U_{\psi}^\dagger$ has the same cost as $U_{\psi}$,
it is crucial to design an estimation algorithm that minimizes the statistical uncertainty using 
a limited number of queries to $U_{\psi}$ and $U_{\psi}^\dagger$.

\section{Adaptive gradient estimation for multiple quantum observables}\label{sec:main_III}

\renewcommand{\baselinestretch}{1.2}
\begin{figure}[tb]
\begin{algorithm}[H]
    \caption{Adaptive gradient estimation\\for multiple observables}\label{alg:main}
    \begin{algorithmic}[1]
    \smallskip
    \REQUIRE $\log_2{d}$-qubit state preparation unitary $U_{\psi}$ and $U_{\psi}^\dagger$; observables $\{O_j\}_{j=1}^M$ with the spectral norm $\|O_j\|\leq 1$ such that $M>\mathcal{O}(\log d)$ holds;
    confidence parameter $c\in (0,3/8(1+\pi)^2]$; target root mean squared error (MSE) $\varepsilon\in (0,1)$.

    \smallskip
    \ENSURE A sample $(\tilde{u}_1,...,\tilde{u}_M)$ from an estimator $\hat{\boldsymbol{u}}=(\hat{u}_1,...,\hat{u}_M)$ whose $j$-th element estimates $\langle\psi|O_j|\psi\rangle$ within the MSE $\varepsilon^2$ as
    $$
    \max_{j=1,2,...,M}~\mathbb{E}\left[\left(\hat{u}_j-\langle\psi|O_j|\psi\rangle\right)^2\right]\leq \varepsilon^2
    $$

    \STATE Set a fixed precision parameter $p:=3$ and temporal estimates $\tilde{u}_j^{(0)}:= 0$ for all $j$.
    
    \FOR{$q=0,1,...,{q_{\rm max}:=}\lceil\log_2(1/\varepsilon)\rceil$}
    
    

    
    \STATE Measure $\mathcal{O}(\log M/\delta^{(q)})$ approximate copies of the probing state
    \begin{equation*}
        \ket{\Upsilon(q)}:=\frac{1}{\sqrt{2^{pM}}}\sum_{\bm{x}\in G_p^M} e^{2\pi i2^p \sum_{j=1}^M x_j {2^{q}\pi^{-1}\braket{{O}_j-\tilde{u}^{(q)}_j\bm{1}}}} \ket{\boldsymbol{x}}
    \end{equation*}
    after $p$-qubit inverse quantum Fourier transformations.
    Here, $\delta^{(q)}:=c/8^{{q_{\rm max}}-q}$ is a failure probability.
    
    \STATE Set the coordinate-wise median of the measurement outputs as ${g}_j^{(q)}$.

    \STATE Update $\tilde{u}_j^{(q+1)}:=\tilde{u}_j^{(q)}+{\pi}{2^{-q}}g_j^{(q)}$
    
    \STATE Truncate $\tilde{u}_j^{(q+1)}$ in $[-1,1]$
    
    \ENDFOR
    
    \RETURN final estimates $\tilde{u}_j:= \tilde{u}_j^{({q_{\rm max}}+1)}$
    \end{algorithmic}
\end{algorithm}
\end{figure}
\renewcommand{\baselinestretch}{1.0}

To address the problem stated in Sec.~\ref{sec:problem_setup}, we propose an adaptive extension of the gradient estimation algorithm.
In the following, we first provide a brief overview and minimal examples of the gradient estimation algorithm in Sec.~\ref{subsec:grad_est}, and then describe the adaptive gradient estimation algorithm for multiple observables estimation in Sec.~\ref{subsec:adaptive_grad_est_performance}. Finally, the performance guarantee of the algorithm is provided in Sec.~\ref{subsec:estimation_performance}.

\subsection{Quantum gradient estimation} \label{subsec:grad_est}
Here, we briefly describe the gradient estimation algorithm~\cite{jordangradest, gilyen2019optimizing}; see Appendix~\ref{supple_sec:qgradest} for more details.
To clarify the idea of this algorithm, we first consider the gradient estimation of a 1-dimensional function $f(x)=gx$ at $x=0$.
In this algorithm, the function $f(x)$ is coherently evaluated over multiple grid points $\{x\}\subset \mathbb{R}^1$ around 0.
Specifically, we initialize $p$-qubit registers as the uniform superposition state $(1/\sqrt{2^p})\sum_{x}\ket{x}$ of $2^p$ grid points $x\in G_p$.
Here, $p$ denotes a precision parameter and $G_p$ is defined as a set of grid points:
\begin{equation}\label{eq:main_Gpdef}
    G_{p}:=\left\{\frac{\mu}{2^p}-\frac12 +\frac{1}{2^{p+1}}:\mu\in\{0,1,\cdots,2^p-1\}\right\}.
\end{equation}
In the following, we label the $p$-qubit computational basis $\ket{\mu}$ $(\mu\in \{0,1,...,2^{p}-1\})$ by each grid point $x$ in $G_p$ via one-to-one correspondence between $G_p$ and the computational basis set.
Then, using $\mathcal{O}(2^p)$ applications of a (phase) oracle that puts a relative phase $e^{if(x)}$ in each point $\ket{x}$, we have the resulting state 
\begin{equation}\label{eq:1d_gradest}
    \frac{1}{\sqrt{2^p}} \sum_{x\in G_p}e^{2\pi i 2^p gx}\ket{x}.
\end{equation}
Therefore, as well as the standard phase estimation algorithm, we can estimate $g$ in precision $\mathcal{O}(1/2^p)$ by $p$-qubit Fourier basis measurement on this state. 
Note that the Fourier transformation is slightly modified as 
\begin{equation}\label{supple_eq:modi_QFT}
    {\rm QFT}_{G_p}:\ket{x}\mapsto \frac{1}{\sqrt{2^p}}\sum_{k\in G_p} e^{2\pi i 2^p xk}\ket{k}.
\end{equation}
This is the same as the usual $p$-qubit QFT up to conjugation with a tensor product of $p$ single-qubit gates~\cite{gilyen2019optimizing}.

The 1-dimensional case can be easily generalized to multidimensional cases.
That is, for $M$-dimensional functions such as $f(\bm{x})\approx \sum_j g_jx_j$, we can perform the above procedure in $M$ parallel, using $\mathcal{O}(2^p)$ queries to an oracle for $\ket{\bm{x}}\mapsto e^{if(\bm{x})}\ket{\bm{x}}$, and approximately obtain the $M$ tensor products of the quantum state Eq.~\eqref{eq:1d_gradest} as
\begin{equation}\label{eq:Md_gradest}
    \bigotimes_{j=1}^M\left[\frac{1}{\sqrt{2^{p}}}\sum_{x_j\in G_p} e^{2\pi i2^p g_j x_j }\ket{x_j}\right].
\end{equation}
Then, applying $M$ tensor products of $p$-qubit Fourier basis measurements yields an $\mathcal{O}(1/2^p)$-precise estimate of the gradient ${g}_j$ for all $j=1,2,...,M$,
with a remarkable property that the implementation is independent 
from $M$ in terms of query to an oracle for $f(\bm{x})\approx \sum_j g_jx_j$. 
However, it must be noted that such an advantage is achieved
under the condition that the approximation $f(\bm{x})\approx \sum_j g_jx_j$ holds very well. 
In most cases, the higher order cannot be negligible and additional oracle queries are required to enhance the linearity.
For certain smooth functions, the smallest overhead is proven to be $\mathcal{O}(\sqrt{M})$, and this can be achieved, up to poly-logarithmic factors, by the method using higher-degree central-difference formula~\cite{gilyen2019optimizing}.
For practical applications of this algorithm, it is crucial in total quantum resources to efficiently prepare an oracle for each task.
Namely, we want to design  
a sufficiently linear function $f(\bm{x})\approx \sum_j g_jx_j$ whose gradient $\{g_j\}$ encodes the target quantities for each task and further prepare the corresponding oracle for $\ket{\bm{x}}\mapsto e^{if(\bm{x})}\ket{\bm{x}}$ or the final state Eq.~\eqref{eq:Md_gradest}.



\subsection{Proposed adaptive algorithm}\label{subsec:adaptive_grad_est_performance}
Now we are ready to introduce our algorithm for estimating the expectation values of multiple observables that uses the gradient estimation as a subroutine.
The outline of the algorithm is provided in Algorithm~\ref{alg:main}, and also its high-level graphical overview is shown in Fig.~\ref{fig:main}.
In the following, we first introduce the foundation of Algorithm~\ref{alg:main}, with a particular focus on the quantum part (i.e., the quantum state $\ket{\Upsilon (q)}$ in Step~3).
Then, we prove our main theorem (Theorem~\ref{thm:main_query_complexity}) to guarantee that Algorithm~\ref{alg:main} achieves both the HL scaling $\mathcal{O}(1/\varepsilon)$ and the sublinear scaling on $M$ for the total query complexity regarding the use of $U_{\psi}$ and $U^\dagger_{\psi}$.
Appendix~\ref{supple_sec:proposed_alg} gives complete proof of theorems together with showing concrete implementation method for the algorithm.

The key idea of the proposed adaptive algorithm is to prepare the following \textit{probing} state $\ket{\Upsilon(q)}$ at each iteration step $q$, from which the approximated expectation values can be extracted via the gradient estimation algorithm described in Sec.~\ref{subsec:grad_est}:
\begin{align}\label{eq:targetstate}
    &\ket{\Upsilon(q)}:=\notag\\
    &~~~\frac{1}{\sqrt{2^{pM}}}\sum_{\bm{x}\in G_p^M} e^{2\pi i2^p \sum_{j=1}^M x_j {2^{q}\pi^{-1}\braket{{O}_j-\tilde{u}^{(q)}_j\bm{1}}}} \ket{\boldsymbol{x}},
\end{align}
where $\bm{1}$ denotes the identity operator.
Also, $p$ denotes a fixed precision parameter and $G_p^M$ is defined as in Eq.~(\ref{eq:main_Gpdef}).
The quantities $\tilde{u}^{(q)}_j\in [-1,1]$ in the probing state Eq.~(\ref{eq:targetstate}) are temporal estimated values for $\braket{O_j}$ at the iteration step $q$.
Later we will show that $p=3$ is sufficient for our algorithm to successfully work.
We remark that the probing state $\ket{\Upsilon(q)}$ is equivalent to the quantum state Eq.~(\ref{eq:Md_gradest}) with gradient $g_j= {2^{q}\pi^{-1}\braket{{O}_j-\tilde{u}^{(q)}_j\bm{1}}}$.

The probing state $\ket{\Upsilon(q)}$ can be prepared by first intializing the $pM$-qubit probe system, the $\log_2 d$-qubit target system, and some ancilla systems to encode the observables.
Then, applying a $q$-dependent interaction unitary 
on the whole system, 
we approximately obtain the probing state. 
The query complexity regarding $U_{\psi}$ and $U^\dagger_{\psi}$ of this state preparation process is given as follows.
\begin{proposition}[Probing-state preparation]\label{thm:probing_sp_informal}
    Suppose that we have access to block-encoded observables $\{O_j\}_{j=1}^M$ in $d$ dimension via some unitaries, a $\log_2 d$-qubit state preparation $U_{\psi}$, and its inverse $U_{\psi}^\dagger$, such that $M>\mathcal{O}(\log d)$ holds.
    Then, we can prepare the probing state $\ket{\Upsilon(q)}$, up to 1/12 Euclidean distance error, for (any) integer $q\geq 0$ and $\tilde{u}^{(q)}_j\in [-1,1]$, using $\mathcal{O}(2^q\sqrt{M\log d})$ queries to $U_{\psi}$ and $U_{\psi}^\dagger$ in total.
\end{proposition}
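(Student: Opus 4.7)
My plan is to realise $\ket{\Upsilon(q)}$ as the image of the uniform superposition $\frac{1}{\sqrt{2^{pM}}}\sum_{\bm x\in G_p^M}\ket{\bm x}\otimes\ket{\bm 0}_{\rm sys}\ket{0}_{\rm anc}$ under a phase oracle $O_\tau:\ket{\bm x}\mapsto e^{i\tau\braket{H_{\bm x}}}\ket{\bm x}$, where $\tau:=2^{p+q+1}$ and $H_{\bm x}:=\sum_{j=1}^M x_j(O_j-\tilde u_j^{(q)}\bm 1)$; comparing with Eq.~\eqref{eq:targetstate}, this is exactly the probing state on the $\bm x$-register. So it suffices to build $O_\tau$ using $\mathcal O(2^q\sqrt{M\log d})$ queries to $U_\psi$ and $U_\psi^\dagger$.

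First I would construct a scalar block encoding of $\braket{H_{\bm x}}/\alpha$ that is diagonal in $\bm x$. A standard PREP--SELECT LCU driven by the probe coefficients $\{x_j\}$ and the given block encodings $B_j$ of the $O_j$'s (adjusted additively by $\tilde u_j^{(q)}\bm 1$) yields a block encoding of $H_{\bm x}/\alpha$ with subnormalisation $\alpha=\mathcal O(M)$ coming from $\sum_j|x_j|\,\|O_j-\tilde u_j^{(q)}\bm 1\|\le M$; conjugating by $U_\psi$ on the target system (and its inverse) converts this into a unitary $V$ such that $(\bra{\bm 0}_{\rm sys}\bra{0}_{\rm anc})\,V\,(\ket{\bm 0}_{\rm sys}\ket{0}_{\rm anc})=\sum_{\bm x}(\braket{H_{\bm x}}/\alpha)\ket{\bm x}\!\bra{\bm x}$. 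One call to $V$ costs $\mathcal O(1)$ queries to $U_\psi$ and $U_\psi^\dagger$.

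Next I would obtain $O_\tau$ by QSVT on $V$ with a polynomial $P(z)$ approximating $z\mapsto e^{i\tau\alpha z}$. A naive Jacobi--Anger truncation on $[-1,1]$ gives degree $\mathcal O(\tau\alpha)=\mathcal O(2^qM)$, which is too large; the $\sqrt M$ improvement comes from noting that, under the uniform measure on $G_p^M$, the scalar $\braket{H_{\bm x}}=\sum_j x_j\braket{O_j-\tilde u_j^{(q)}\bm 1}$ is a Hoeffding-type sum of $M$ independent bounded terms, so the amplitude of those $\ket{\bm x}$ with $|\braket{H_{\bm x}}|>R\alpha$ is exponentially small in $MR^2$. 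Choosing $R=\Theta(\sqrt{\log d/M})$ makes this outlier mass fit comfortably inside the $1/12$ error budget, and on the window $[-R,R]$ the Jacobi--Anger expansion of $e^{i\tau\alpha z}$ rescaled to $[-R,R]$ (multiplied by a rectangle cutoff that keeps $|P|\le 1$ on $[-1,1]$, so that QSVT realises $P$ unitarily) admits a polynomial of degree only $\mathcal O(\tau\alpha R+\log(1/\varepsilon'))=\mathcal O(2^q\sqrt{M\log d})$. Plugging $P$ into QSVT on $V$ realises $O_\tau$ with the advertised query cost; equivalently, the same unitary is obtained via Lemma~\ref{thm:sp_Grover} by iterating a Szegedy walk built from $V$ a total of $\mathcal O(2^q\sqrt{M\log d})$ times with appropriate QSP angles.

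The main obstacle is the construction in paragraph three: exhibiting a single QSVT polynomial that is simultaneously (i) $\varepsilon'$-close to $e^{i\tau\alpha z}$ on $[-R,R]$, (ii) bounded by $1$ on $[-1,1]$, and (iii) of degree only $\mathcal O(\tau\alpha R+\log(1/\varepsilon'))$, \emph{without} the amplification and simulation degrees multiplying to reintroduce an $\alpha$-sized overhead. I would assemble this from uniform spectral amplification~\cite{low2017hamiltonian} and the function-of-block-encoding formalism of \cite{gilyen2019quantum}, and close the argument by a triangle inequality combining the polynomial approximation error on $[-R,R]$, the exponentially small outlier amplitude from $|\braket{H_{\bm x}}|>R\alpha$, and the residual QSVT flag error into the required $1/12$ Euclidean distance.
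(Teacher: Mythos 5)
There is a genuine gap at the heart of your third paragraph: the polynomial you need does not exist at the degree you claim. You require a single QSVT polynomial $P$ of degree $n=\mathcal{O}(\tau\alpha R+\log(1/\varepsilon'))=\mathcal{O}(2^q\sqrt{M\log d})$ that is $\varepsilon'$-close to $e^{i\tau\alpha z}$ on $[-R,R]$ while satisfying $|P|\leq 1$ on all of $[-1,1]$ (the latter is not optional --- it is a necessary condition for QSP phase factors to exist). But Bernstein's inequality gives $|P'(z)|\leq n/\sqrt{1-z^2}\leq 2n/\sqrt{3}$ for $z\in[-R,R]$ with $R\leq 1/2$, whereas the target $e^{i\tau\alpha z}$ completes $\tau\alpha R/2\pi\gg 1$ oscillations on $[-R,R]$, so any $1/3$-accurate approximant must have $|P'|\gtrsim \tau\alpha$ somewhere in that window. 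Hence $n=\Omega(\tau\alpha)=\Omega(2^{p+q}M)$: the window trick buys nothing, and your phase oracle $O_\tau$ built this way costs $\Omega(2^qM)$ calls to $V$, i.e.\ $\Omega(2^qM)$ queries to $U_\psi$. This is the same obstruction you correctly flagged ("the amplification and simulation degrees multiplying"), but it is not an implementation nuisance to be engineered around --- it is a degree lower bound, so no combination of uniform spectral amplification and Jacobi--Anger applied as a single polynomial to $V$ can close it.

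The paper escapes this by placing the amplification at a different level of the circuit, so that the two degrees multiply in a quantity that is \emph{not} the $U_\psi$ query count. Concretely (Lemmas~\ref{supple_lem:BE_ampH} and~\ref{supple_lem:statepre_HS}): one first amplifies the \emph{operator} block encoding of $M^{-1}\sum_j x_j\tilde O_j^{(q)}$ by $\gamma=M/\sigma$ with $\sigma=\mathcal{O}(\sqrt{M\log d})$, using the matrix (not scalar) Hoeffding bound over uniformly random $\bm{x}$ to guarantee $\|M^{-1}\sum_jx_jO_j\|\leq 1/(2\gamma)$ on a $(1-\delta')$-fraction of $G_p^M$; this step costs $m=\mathcal{O}(\gamma\log(\gamma/\varepsilon'))$ queries to the observable oracles and \emph{zero} queries to $U_\psi$. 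Only afterwards is the amplified encoding conjugated by $U_\psi,U_\psi^\dagger$ and fed into Hamiltonian simulation for time $t=2^{p+q+2}\sigma$, so $U_\psi$ appears $\mathcal{O}(t)=\mathcal{O}(2^q\sqrt{M\log d})$ times while the observable oracles appear $\mathcal{O}(tm)=\mathcal{O}(2^qM(q+\log M))$ times --- consistent with the $\Omega(\tau\alpha)$ composite-degree bound, and with the statement of Lemma~\ref{thm:sp_HS}, which indeed reports a cost \emph{linear} in $M$ for the block-encoded observables. Note also that your approach, if it worked, would need neither the $M>\mathcal{O}(\log d)$ hypothesis nor the $\sqrt{\log d}$ factor (scalar Hoeffding carries no $d$-dependence), which is a further sign that the mechanism producing the $\sqrt{M}$ in the proposition is the matrix concentration feeding the operator-level amplification, not a concentration of the scalar $\braket{H_{\bm x}}$.
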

\noindent
In the next section, we describe this probing-state preparation in detail.
In particular,
this proposition directly follows from Lemmas~\ref{thm:sp_HS} and \ref{thm:sp_Grover} shown in the next section, both of which give 
methods to prepare the probing state $\ket{\Upsilon(q)}$: one is based on the Hamiltonian simulation protocol (Lemma~\ref{thm:sp_HS}), and the other is based on the Grover-like repetition (Lemma~\ref{thm:sp_Grover}).

Suppose we have the probing state $\ket{\Upsilon(q)}$ for Step~3 at the $q$th iteration, with a temporal estimate $\tilde{u}^{(q)}_j$ for $\braket{O_j}$. 
Then, on the phase of $\ket{\bm{x}}$ in $\ket{\Upsilon(q)}$, the estimation error $\braket{O_j-\tilde{u}^{(q)}_j\bm{1}}$ is amplified by the factor of $2^q$; this means that when $\tilde{u}^{(q)}_j$ matches $\braket{O_j}$ in the first $q$ binary (fraction) digits as
\begin{equation}\label{eq:input_condition}
    \left|\braket{O_j}-\tilde{u}^{(q)}_j\right| \leq \frac{1}{2^q},
\end{equation}
the $2^q$-fold amplification \textit{zooms in} on the significant digits of $\braket{O_j}-\tilde{u}^{(q)}_j$ to shift the values toward upper digits. 
As proved in Ref.~\cite{gilyen2019optimizing}, the gradient estimation for $\ket{\Upsilon(q)}$ can simultaneously extract these zoomed-in values $2^q \pi^{-1}\braket{{O}_j-\tilde{u}^{(q)}_j\bm{1}}$ or equivalently the gradient of the linear function on the phase (with $1/\pi$): 
\begin{equation}\label{eq:target_phase_fn}
    {f(\bm{x})}:=\sum_j x_j 2^q\braket{O_j-\tilde{u}^{(q)}_j\bm{1}},
\end{equation}
with an additive error specified by the precision parameter $p$ and a certain success probability (see Lemma~\ref{supple_lem:original_gradest}).
Note that, in our case, the gradient estimation algorithm is simply to perform the computational basis measurement (with rewriting $\ket{\mu}$ as $\ket{x}$) on the probing state after applying a slightly modified version of $p$-qubit inverse quantum Fourier transformation, which is defined in Eq.~(\ref{supple_eq:modi_QFT}).

In Step~4 of Algorithm~\ref{alg:main}, we use the measurement outputs $\bm{k}^{(1)},\bm{k}^{(2)},...,\bm{k}^{(\mathcal{O}(\log M/\delta^{(q)}))}$ obtained in Step~3 to construct the coordinate-wise median $g_j^{(q)}$; that is, $g_j^{(q)}$ is defined as the middle value separating the greater and lesser halves of $\{k_j^{(1)},k_j^{(2)},...\}$. 
Then, under the condition of Eq.~\eqref{eq:input_condition}, we can prove that the median $g_j^{(q)}$ satisfies
\begin{equation}\label{eq:success_output_gradest}
    \left|g_j^{(q)}-\frac{2^q\braket{{O}_j-\tilde{u}^{(q)}_j\bm{1}}}{\pi}\right|\leq \frac{1}{2\pi},
\end{equation}
for all $j=1,2,...,M$ 
with success probability bigger than $1-\delta^{(q)}$; 
later we will carefully choose the probability $\delta^{(q)}$ in order that a final estimator has the minimal statistical uncertainty. 
Using the successfully obtained ${g}_j^{(q)}$ satisfying Eq.~\eqref{eq:success_output_gradest}, we update the temporal estimate in Step 5 as 
\[
    \tilde{u}_{j}^{(q+1)}:=\tilde{u}_{j}^{(q)}+\pi 2^{-q}g_{j}^{(q)}.
\]
Then, it is straightforward to prove 
\begin{equation*}
    \left|\braket{O_j}-\tilde{u}^{(q+1)}_j\right| \leq \frac{1}{2^{q+1}},
\end{equation*}
for all $j$.
Thus, using the outcomes from the gradient estimation algorithm at each step, we can iteratively determine one binary fraction digit of the target value $\braket{O_j}$
by updating $\tilde{u}^{(q)}_j$ to $\tilde{u}^{(q+1)}_j$. 
Repeating this iteration step for $q_{\rm max}:=\lceil \log_2(1/\varepsilon)\rceil$ times, we obtain estimates for $\{\braket{O_j}\}_j$ within the root MSE $\varepsilon$.

\subsection{Estimation performance and query complexity}\label{subsec:estimation_performance}

Now, we provide a performance guarantee, namely the HL scaling of the multiple observables estimation precision in terms of MSE, for the proposed Algorithm~\ref{alg:main}.
While a more detailed proof is given in Appendix~\ref{supple_sec;complexity_of_alg}, here we present the Theorem with a sketch of its proof in the following:
\begin{thm}[{Heisenberg-limited multiple observables estimation}]\label{thm:main_query_complexity}
    Let $\varepsilon\in(0,1)$ be a target precision.
    For given $M$ observables $\{O_j\}_{j=1}^M$ ($\|O_j\|\leq 1~\forall j$) and a state preparation oracle $U_{\psi}$, 
    there exists a quantum algorithm that outputs a sample from estimators $\{\hat{u}_j\}_{j=1}^M$ for $\{\braket{O_j}\}$ satisfying
    \begin{equation}\label{eq:thm2_maxmse}
        \max_{j=1,2,...,M}~{\rm MSE}\left[\hat{u}_j\right]\leq \varepsilon^2,
    \end{equation}
    using $\mathcal{O}(\varepsilon^{-1}\sqrt{M}\log M)$ queries to the state preparation oracles $U_{\psi}$ and $U_{\psi}^\dagger$ in total.
    Here, the mean squared error of an estimator $\hat{u}_j$ is defined as
    $$
        {\rm MSE}\left[\hat{u}_j\right] := \mathbb{E}\left[\left(\hat{u}_j-\braket{O_j}\right)^2\right].
    $$
\end{thm}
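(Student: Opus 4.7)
The plan is to decompose the mean squared error by conditioning on the first iteration step at which the coordinate-wise event in Eq.~\eqref{eq:success_output_gradest} fails, and to show that the schedule $\delta^{(q)}=c/8^{q_{\max}-q}$ is precisely the one that balances the halving of the estimation error on successful updates against the tail cost of late failures.

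I would first pin down the per-step success guarantee. By Proposition~\ref{thm:probing_sp_informal} together with the gradient-estimation primitive (Lemma~\ref{supple_lem:original_gradest}), a single Fourier-basis measurement on a $1/12$-approximate copy of $\ket{\Upsilon(q)}$ outputs, whenever the pre-step condition Eq.~\eqref{eq:input_condition} holds, a vector whose $j$th coordinate satisfies Eq.~\eqref{eq:success_output_gradest} with at least a fixed constant probability. A median-of-means amplification over the $\mathcal{O}(\log(M/\delta^{(q)}))$ samples in Step~3 boosts the single-coordinate success probability to $1-\delta^{(q)}/M$, and a union bound across $j$ yields the joint event $A_q:=\{g_j^{(q)}\text{ satisfies Eq.~\eqref{eq:success_output_gradest}}\ \forall j\}$ with $\Pr[A_q\mid A_0\cap\cdots\cap A_{q-1}]\geq 1-\delta^{(q)}$. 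A straightforward induction on $q$ then shows that, on $A_0\cap\cdots\cap A_q$, the update in Step~5 halves the error, $|\braket{O_j}-\tilde u_j^{(q+1)}|\leq 2^{-(q+1)}$, so the full-success branch yields $|\hat u_j-\braket{O_j}|\leq \varepsilon/2$ deterministically.

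Next I would bound the error on failure branches. If the first failure occurs at step $q$, then $|\braket{O_j}-\tilde u_j^{(q)}|\leq 2^{-q}$ still holds, and every remaining update satisfies $|\tilde u_j^{(q'+1)}-\tilde u_j^{(q')}|\leq \pi\, 2^{-(q'+1)}$ since $g_j^{(q')}\in G_p\subset[-1/2,1/2]$. Summing the geometric tail $q'=q,\ldots,q_{\max}$ gives the deterministic bound $|\hat u_j-\braket{O_j}|\leq (1+\pi)/2^q$ (capped at $2$ by the truncation step). Combining,
\[
  {\rm MSE}[\hat u_j]\leq \frac{\varepsilon^2}{4} + \sum_{q=0}^{q_{\max}} \frac{(1+\pi)^2}{4^q}\,\delta^{(q)}.
\]
Inserting $\delta^{(q)}=c/8^{q_{\max}-q}$ the sum collapses to the geometric series $c(1+\pi)^2 \cdot 8^{-q_{\max}}\sum_q 2^q \leq 2c(1+\pi)^2/4^{q_{\max}}\leq 2c(1+\pi)^2\varepsilon^2$, and the prescribed $c\leq 3/[8(1+\pi)^2]$ makes the total at most $\varepsilon^2/4+3\varepsilon^2/4=\varepsilon^2$, proving Eq.~\eqref{eq:thm2_maxmse}.

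For the query complexity, step $q$ requires $\mathcal{O}(\log M+q_{\max}-q)$ probing-state copies, each costing $\mathcal{O}(2^q\sqrt{M\log d})$ queries to $U_\psi$ and $U_\psi^\dagger$ by Proposition~\ref{thm:probing_sp_informal}. The geometric sum $\sqrt{M\log d}\sum_{q=0}^{q_{\max}} 2^q(\log M+q_{\max}-q)$ is dominated by $q=q_{\max}$ and reduces to $\mathcal{O}(\varepsilon^{-1}\sqrt{M\log d}\log M)$, which under the hypothesis $M>\mathcal{O}(\log d)$ collapses to the claimed $\mathcal{O}(\varepsilon^{-1}\sqrt{M}\log M)$. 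The main obstacle is the balanced accounting in the MSE sum: the failure-branch error amplitude $(1+\pi)/2^q$ and the probability $\delta^{(q)}=c/8^{q_{\max}-q}$ must be matched so that failures at late $q$ (where $\delta^{(q)}$ is of order a small constant) cost only $\mathcal{O}(\varepsilon^2)$ while failures at early $q$ are exponentially suppressed; this is what dictates both the specific base-$8$ exponent in the schedule and the admissible range of $c$.
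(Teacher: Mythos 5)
Your proposal is correct and follows essentially the same route as the paper's proof: conditioning on the first failed iteration, boosting the per-coordinate constant success probability to $1-\delta^{(q)}/M$ via the median and a union bound, the deterministic $(1+\pi)/2^q$ failure-branch bound from $g_j^{(q)}\in[-1/2,1/2]$, the identical geometric MSE sum under the schedule $\delta^{(q)}=c/8^{q_{\max}-q}$, and the same query accounting via Proposition~\ref{thm:probing_sp_informal}. The only detail you elide is that the constant per-shot success probability at the fixed precision $p=3$ rests on the paper's numerically verified bound (Eq.~\eqref{supple_eq:tightbound_g}) rather than directly on Lemma~\ref{supple_lem:original_gradest}, but this affects only constants, not the structure of the argument.
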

\begin{proof}[Sketch of the proof]
    While the outline of the proof is inspired from the Heisenberg-limited phase estimation or its application~\cite{higgins2009demonstrating,kimmel2015robust,dutkiewicz2022heisenberg,dutkiewicz2023advantage}, it is nontrivial for the case of adaptive gradient estimation to show that the condition given in Eq.~\eqref{eq:input_condition} is satisfied for all $j$.
Note that, if this condition holds and the precision parameter $p$ is taken as $p=3$, then a single shot measurement result $\bm{k}:=(k_1,...,k_M)\in G_p^M$ in Step~3 of Algorithm~\ref{alg:main} follows 
\begin{equation}\label{eq:grad_est_output_dist}
    {\rm Pr}\left[\left|k_j-\frac{2^q(\braket{O_j}-\tilde{u}_j^{(q)})}{\pi}\right|>\frac{1}{2\pi}\right]< \frac{1}{3},
\end{equation}
for every $j=1,2,\cdots,M$. 
The derivation of this inequality is based on our numerical finding; see Appendix~\ref{supple_sec;complexity_of_alg}. 
We remark that the failure probability represented by the left hand side of Eq.~\eqref{eq:grad_est_output_dist} can be exponentially suppressed to $\delta^{(q)}/M$ by using the coordinate-wise median ${g}_j^{(q)}$ of the measurement results over $\mathcal{O}(\log M/\delta^{(q)})$ copies of the (approximate) probing state, instead of the single shot result $k_j$.
This repetition is the origin of the $\log M$ term in the total query complexity.

Let us first discuss how to assure that the condition~\eqref{eq:input_condition} is satisfied.
In the first iteration step $q=0$, the condition Eq.~\eqref{eq:input_condition} is trivially satisfied, and the gradient estimation can yield $1/2\pi$-close estimates of the target quantities for all $j$.
This probability is at least $1-\delta^{(0)}$ due to the union bound.
As for the iteration step $q\geq 1$, if all of the previous steps succeed in the gradient estimation, 
we can show that Eq.~\eqref{eq:input_condition} holds for all $j$ at the iteration step $q$, as discussed in the previous subsection.
On the other hand, if the gradient estimation fails at the iteration step $q$ (while all processes have been successfully executed up to the $(q-1)$th step), the condition Eq.~\eqref{eq:input_condition} is not satisfied; as a result, outputs of the gradient estimation may not improve the temporal estimate in the subsequent processes.
However, in this case, it can be shown that the additive error $|\tilde{u}_j-\braket{O_j}|$ of the final estimate $\tilde{u}_j:=\tilde{u}_j^{(q_{\rm max}+1)}$ is at most $(1+\pi)/2^q$ because the outputs of gradient estimation are always in $[-1/2,1/2]$ from the definition of $G_p^M$.
From the above analysis, we can bound the additive error of the final estimate for all branches in Fig.~\ref{fig_main:final_est_and_branches}.

\begin{figure}[tb]
    \centering
    \includegraphics[width=0.4\textwidth]{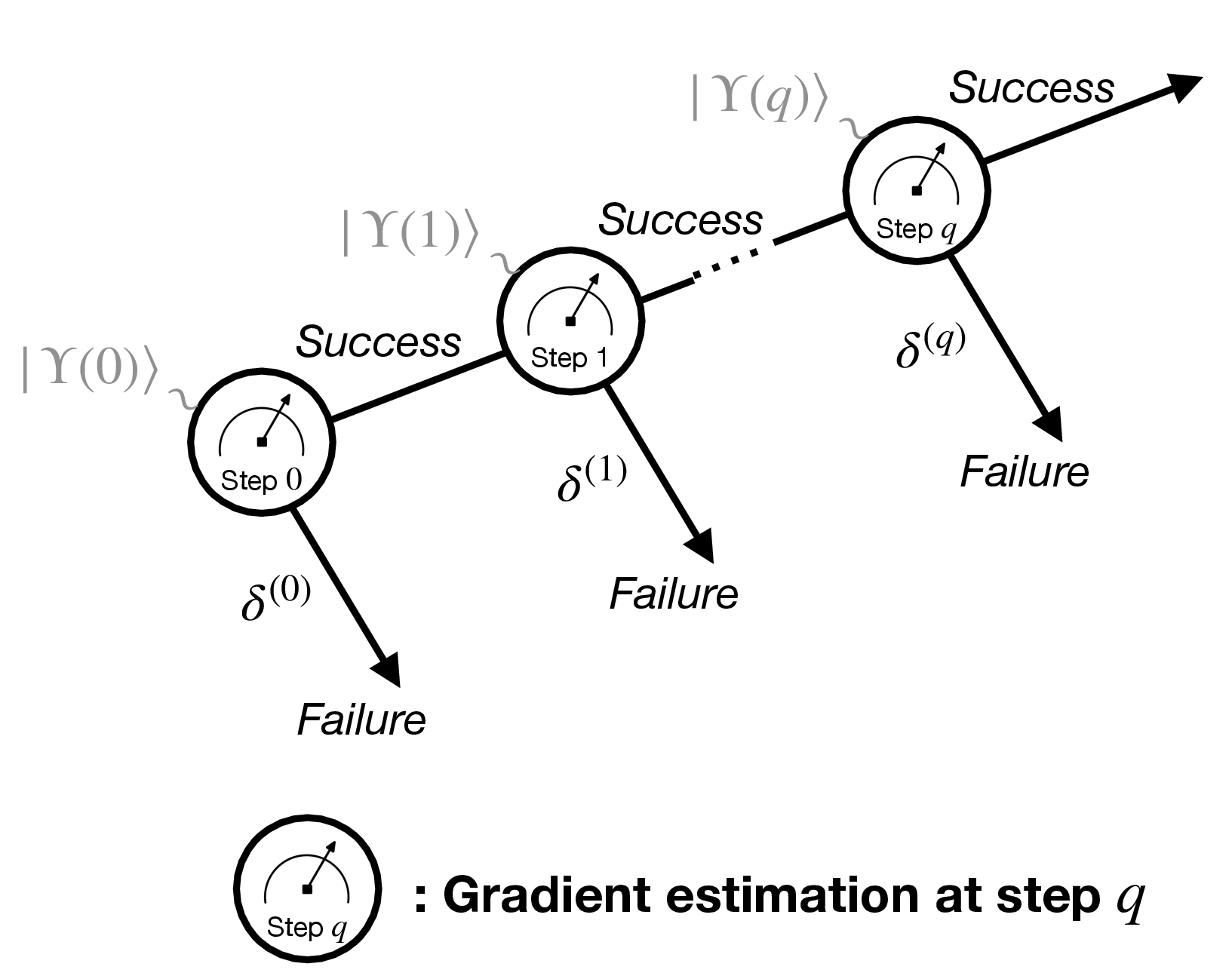}
    \caption{Probability tree diagram in Algorithm~\ref{alg:main}.}
    \label{fig_main:final_est_and_branches}
\end{figure}

The failure probability $\delta^{(q)}$ should be very small at the beginning of iteration steps $q$ because the error of top fraction bits have a significant impact on the MSE 
of the final estimator.
From the analysis in Ref.~\cite{dutkiewicz2023advantage}, the choice of $\delta^{(q)}:=c/8^{q_{\rm max}-q}$ ($c\in (0,3/8(1+\pi)^2)$) is sufficient, and then the MSE 
of the final estimator ${\hat{u}_j}$ (its realization is $\tilde{u}_j$) is calculated as at most $\varepsilon^2$ for all $j=1,2,...,M$, from the tree diagram Fig.~\ref{fig_main:final_est_and_branches} and the above error evaluation in each branch.

To derive the query complexity for the state preparation $U_{\psi}$, we use the result in Proposition~\ref{thm:probing_sp_informal}.
In the iteration step $q$, preparing $\mathcal{O}(\log M/\delta^{(q)})$ copies of the probing state Eq.~\eqref{eq:targetstate} uses $\mathcal{O}(2^q \sqrt{M\log d}\log (M/\delta^{(q)}))$ queries to the state preparation unitary $U_{\psi}$ and its inverse.
Thus, the summation of the queries over $q=0,1,...,q_{\rm max}~(q_{\rm max}:=\lceil \log_2(1/\varepsilon)\rceil$) is directly calculated as $\mathcal{O}(\varepsilon^{-1}\sqrt{M\log d}\log M)$, which completes the proof of Theorem~\ref{thm:main_query_complexity}. 
\end{proof}

In Theorem~\ref{thm:main_query_complexity}, the scaling $1/\varepsilon$ of queries to $U_{\psi}$ with respect to the root MSE $\varepsilon$ achieves the same scaling to the Heisenberg limit (HL) in the gradient estimation, which is derived in Appendix~\ref{supple_sec:qgradest}.
We remark that, to the best of our knowledge, this is the first derivation of the HL in gradient estimation in terms of MSE, while the previous work~\cite{gilyen2019optimizing} shows a similar lower bound (additionally, including $\sqrt{M}$ dependency) on queries to an oracle for a target function $f(\bm{x})$ on $\bm{x}\in G_p^M$, in another metric of statistical uncertainty (i.e., a confidence interval).
Also, in the stringent measure of statistical uncertainty, the root MSE $\varepsilon$~\cite{berry2001optimal}, all the existing works~\cite{PhysRevLett.129.240501,van2023quantum} in multiple observables estimation only proves the nearly HL scaling that includes logarithmic terms on $\varepsilon$ such as $\log (M/\varepsilon)$.

In addition to the Heisenberg-limited scaling regarding $\varepsilon$, our protocol has the nearly squared root dependence regarding the total number of observables $M$.
This is a nearly quadratic improvement regarding $M$ compared to the standard method (i.e., quantum amplitude estimation) of estimating the expectation values of observables.

Furthermore, when we focus on another metric of statistical uncertainty --- a confidence interval comprised of estimators with an additive error $\pm \epsilon_{\rm add}$ for some confidence level, which is considered in Refs.~\cite{PhysRevLett.129.240501,van2023quantum}, the query complexity of our method is essentially optimal in terms of $M$ and $\epsilon_{\rm add}$ as well as these previous methods.
Here, the query lower bound in multiple observables estimation in terms of a confidence interval is proved in Ref.~\cite{PhysRevLett.129.240501} as follows.
\begin{lem}
    [Corollary~3 in Ref.~\cite{PhysRevLett.129.240501}]\label{lem:google_querylb}
    Let $M$ be a positive integer power of 2 and let $\epsilon_{\rm add}\in (0,1/3\sqrt{M})$. Let $\mathcal{A}$ be any algorithm that takes as an input an arbitrary set of $M$ observables $\{O_j\}_{j=1}^M$. Suppose that, for every quantum state $\ket{\psi}$, accessed via a state preparation oracle $U_{\psi}$, $\mathcal{A}$ outputs estimates of each $\bra{\psi}O_j\ket{\psi}$ to within additive error $\epsilon_{\rm add}$ with high probability at least $2/3$. Then, there exists a set of observables $\{O_j\}_{j=1}^M$ such that $\mathcal{A}$ applied to $\{O_j\}_{j=1}^M$ must use $\Omega(\epsilon_{\rm add}^{-1}\sqrt{M})$ queries to $U_{\psi}$.
\end{lem}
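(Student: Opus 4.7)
The plan is to prove the lower bound by reducing a hidden-string discrimination problem to multiple-observables estimation, and then combining the $\sqrt{M}$ lower bound for Grover search with the $1/\epsilon_{\rm add}$ lower bound for amplitude estimation via a multivariate polynomial argument, so that the product scaling rather than the sum is obtained. As the hard family I would fix a hidden string $x\in\{0,1\}^{M}$ and build the state preparation
\begin{equation*}
    U_{\psi}^{(x)}\ket{\bm 0}=\frac{1}{\sqrt{M}}\sum_{j=1}^{M}\ket{j}\bigl(\cos\theta_{x_j}\ket{0}+\sin\theta_{x_j}\ket{1}\bigr),
\end{equation*}
where $\theta_{0}$ and $\theta_{1}$ are chosen so that $\sin^{2}\theta_{1}-\sin^{2}\theta_{0}=\Theta(M\epsilon_{\rm add})$; such a $U_{\psi}^{(x)}$ can be realised from a single query to a standard bit/phase oracle of $x$ and $O(\log M)$ gates. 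Take the observables $O_{j}=\ket{j}\!\bra{j}\otimes(2\ket{1}\!\bra{1}-\bm 1)$, which satisfy $\|O_{j}\|\le 1$. Then $\braket{\psi_{x}|O_{j}|\psi_{x}}=\frac{1}{M}(2\sin^{2}\theta_{x_j}-1)$, so the two possible values differ by more than $2\epsilon_{\rm add}$ under the hypothesis $\epsilon_{\rm add}<1/(3\sqrt{M})$. Any algorithm $\mathcal{A}$ that returns $\epsilon_{\rm add}$-accurate estimates of every $\braket{O_{j}}$ with probability at least $2/3$ therefore decodes the entire hidden string $x$ with the same probability, using the same number of queries to $U_{\psi}^{(x)}$ and its inverse.

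The core technical step is then to show that recovering every bit of $x$ with constant advantage through this "weak-gap" oracle requires $\Omega(\sqrt{M}/\epsilon_{\rm add})$ queries. I would carry this out by the polynomial method: after $T$ queries, the acceptance probability of any fixed output rule becomes a multivariate polynomial in the $2M$ real amplitudes $\cos\theta_{x_j},\sin\theta_{x_j}$ of total degree at most $2T$. Restricting to inputs of Hamming weight $0$ versus $1$ reduces the question to approximating the $M$-bit \emph{OR} function by a polynomial of degree $2T$, except that the values at $0$ and $1$ on each coordinate are only $\Theta(\epsilon_{\rm add})$ apart rather than $\Theta(1)$. A standard rescaling to reach a polynomial of constant gap inflates the effective degree by a factor of $1/\epsilon_{\rm add}$, after which the Nisan--Szegedy / Paturi lower bound for approximating $\mathrm{OR}_{M}$ gives effective degree $\Omega(\sqrt{M})$, i.e.\ $T=\Omega(\sqrt{M}/\epsilon_{\rm add})$. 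An alternative would be to use the (multiplicative) quantum adversary method with an adversary matrix supported on pairs of inputs differing in one coordinate and weighted by $1/\epsilon_{\rm add}$ to reflect the narrow amplitude gap.

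The main obstacle is precisely obtaining the product $\sqrt{M}\cdot\epsilon_{\rm add}^{-1}$ rather than the trivial additive bound $\sqrt{M}+\epsilon_{\rm add}^{-1}$ which follows from applying each lower bound in isolation. The rescaling argument sketched above must be performed simultaneously in all $M$ coordinates, and the regularity assumption $\epsilon_{\rm add}<1/(3\sqrt{M})$ in the hypothesis is exactly what is needed to keep the amplitudes in the regime where the OR--approximation bound applies without degradation. Making the rescaling and the polynomial-method bookkeeping rigorous, in particular tracking how the $2T$-degree polynomial in the raw amplitudes translates into a polynomial in the Boolean variables after symmetrisation, is the delicate part of the proof and is the step carried out carefully in Ref.~\cite{PhysRevLett.129.240501}.
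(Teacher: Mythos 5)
First, note that the paper does not actually prove this statement: Lemma~\ref{lem:google_querylb} is imported verbatim as Corollary~3 of Ref.~\cite{PhysRevLett.129.240501}, so there is no internal proof to compare against and your proposal has to stand on its own. It does not. The central difficulty you correctly identify --- obtaining the \emph{product} $\sqrt{M}\cdot\epsilon_{\rm add}^{-1}$ rather than the additive combination --- is precisely the step your argument fails to supply. Restricting the hidden string to Hamming weight $0$ versus $1$ turns the problem into an OR/Grover instance, and the polynomial method applied to the Boolean variables $x_j$ then gives a polynomial with a \emph{constant} output gap ($p(\bm 0)\ge 2/3$, $p(e_j)\le 1/3$), hence only $T=\Omega(\sqrt{M})$; the parameter $\epsilon_{\rm add}$ never enters, because the degree bound in the Boolean variables is blind to the fact that each $x_j$ only perturbs the oracle amplitudes by $O(M\epsilon_{\rm add})$. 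Your proposed fix --- ``a standard rescaling to reach a polynomial of constant gap inflates the effective degree by a factor of $1/\epsilon_{\rm add}$'' --- is not a valid operation: an affine change of variables does not change the degree of a polynomial. The $1/\epsilon$ factor in amplitude-estimation-type lower bounds comes from Markov/Bernstein inequalities applied to a symmetrized polynomial that is bounded over a \emph{large} range of Hamming weights and must jump over a short subinterval (as in Nayak--Wu), and that structure is destroyed the moment you restrict to weight $\le 1$. The true source of the product here is a direct-sum phenomenon: each bit $x_j$ is individually $\Omega(1/(\sqrt{M}\epsilon_{\rm add}))$-hard because flipping it perturbs $\ket{\psi_x}$ by only $O(\sqrt{M}\epsilon_{\rm add})$, and recovering all $M$ bits multiplies this by $M$ --- which requires a strong direct product theorem or an adversary-composition argument, exactly the step you explicitly defer to the reference. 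As written, the proposal therefore establishes at most $\max\{\Omega(\sqrt{M}),\,\Omega(1/\epsilon_{\rm add})\}$, the ``trivial additive bound'' you set out to beat.

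There is also a quantitative flaw in the hard instance itself. With $O_j=\ket{j}\!\bra{j}\otimes(2\ket{1}\!\bra{1}-\bm 1)$ one has $|\braket{\psi_x|O_j|\psi_x}|\le 1/M$ for every state, so once $\epsilon_{\rm add}\ge 1/M$ the constant estimator $\tilde u_j\equiv 0$ is already $\epsilon_{\rm add}$-accurate with zero queries and no lower bound can follow; moreover your requirement $\sin^2\theta_1-\sin^2\theta_0=\Theta(M\epsilon_{\rm add})\le 1$ forces $\epsilon_{\rm add}=O(1/M)$. Since the lemma is asserted for all $\epsilon_{\rm add}<1/(3\sqrt{M})$, the subrange $\epsilon_{\rm add}\in[1/M,\,1/(3\sqrt{M}))$ (nonempty for $M>9$) is not covered and would need a different encoding without the $1/M$ dilution. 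In short: the setup is a reasonable starting point, but the proof of the key multiplicative lower bound is absent, and the mechanism you sketch for it would not yield it.
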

\noindent
For a given $\varepsilon$, Algorithm~\ref{alg:main} also satisfies
\begin{equation}
    {\rm Pr}\left[{\rm max}_j\left|\hat{u}_j-\braket{O_j}\right|\leq\varepsilon/2\right]\geq 1-8c/7,
\end{equation}
by using $\mathcal{O}(\varepsilon^{-1}\sqrt{M}\log (M/c))$ queries to $U_{\psi}$ and $U_{\psi}^\dagger$.
This can be confirmed by calculating the probability of the all-success branch in Fig.~\ref{fig_main:final_est_and_branches}.
Consequently, we establish from the Lemma~\ref{lem:google_querylb} that our protocol achieves the worst-case query optimality (up to the $\log M$ correction) in the high-precision regime $\epsilon_{\rm add}\in (0,1/3\sqrt{M})$.

For a particular set of observables, Ref.~\cite{van2023quantum} shows an improvement on the dependence of $M$ in the query complexity.
A similar improvement can be attained while retaining the HL scaling $\mathcal{O}(1/\varepsilon)$ in MSE.
\begin{thm}[Improved $M$-dependence]\label{thm:main_thm_improved}
    In Theorem~\ref{thm:main_query_complexity}, we can improve the total number of queries to $U_{\psi}$ and $U^\dagger_{\psi}$ in terms of $M$, while keeping the HL scaling $\mathcal{O}(\varepsilon^{-1})$ for root MSE $\varepsilon$, to
    \begin{equation}
        \mathcal{O}\left({\varepsilon}^{-1}\sqrt{\mathcal{B}_M}\log M\right)
    \end{equation}
    for any \textit{known} $\mathcal{B}_{M}$ satisfying 
    \begin{equation}
        \left\|\sum_{j=1}^M O_j^2\right\|\leq \mathcal{B}_{M}\leq M.
    \end{equation}
\end{thm}
\noindent
The proof of this theorem is essentially the same as that of Theorem~\ref{thm:main_query_complexity} except for the state preparation (more precisely, we use Algorithm~\ref{alg:main} with a modified Step~3); see Appendix~\ref{sec:proof_improved_thm}.
This theorem recovers the query complexity in Theorem~\ref{thm:main_query_complexity} when we only know $\|O_j\|\leq 1$ on observables, leading to $\mathcal{B}_M =M$.
For a particular set of observables, $\mathcal{B}_M$ can be taken as e.g., $\mathcal{B}_M=\mathcal{O}(1)$ for 
$O_j=\ket{j}\bra{j}$ in multidimensional amplitude estimation~\cite{van2021quantum,van2023quantum}, so the logarithmic dependence of $M$ together with the HL scaling $\mathcal{O}(1/\varepsilon)$ is attained in such cases.

Finally, we remark that the preparation and measurement for multiple copies of the probing state in Step 3 of Algorithm~\ref{alg:main} can be performed in parallel if we are allowed to use multiple quantum computers, which may be practically important to reduce the total execution time.

\section{State preparation for adaptive gradient estimation}\label{subsec:sp4itergradest_rev}

In the performance guarantee (Theorem~\ref{thm:main_query_complexity}) provided in Sec.~\ref{subsec:adaptive_grad_est_performance}, we have exploited Proposition~\ref{thm:probing_sp_informal} that states the query complexity to prepare the probing state defined in Eq.~\eqref{eq:targetstate}.
In this section, we aim to show Proposition~\ref{thm:probing_sp_informal} by providing two quantum algorithms to prepare the probing state Eq.~\eqref{eq:targetstate}, using the state preparation $U_{\psi}$, $U_{\psi}^\dagger$, and the unitary gates $\{B_j\}_{j=1}^M$ that encode the target observables $\{O_j\}_{j=1}^M$ with the help of some ancilla system.
The two methods are summarized in Lemmas~\ref{thm:sp_HS} and \ref{thm:sp_Grover}; Lemma~\ref{thm:sp_HS} provides a method based on Hamiltonian simulation protocol, and Lemma~\ref{thm:sp_Grover} provides a method based on Grover-like repetition.
Both Lemmas~\ref{thm:sp_HS} and \ref{thm:sp_Grover} prove that the total space complexity to prepare the probing state $\ket{\Upsilon(q)}$ is $\mathcal{O}(M+\log_2 d)$.
In addition, Lemma~\ref{thm:sp_Grover} shows an exponential improvement in the classical computation time for constructing explicit quantum circuits under a certain condition, compared to Lemma~\ref{thm:sp_HS}.

In the following, we first review an efficient quantum computation method of block-encoded matrices. Then, we describe the two methods for the probing-state preparation.


\begin{figure*}[bt]
\centering
\begin{tabular}{c}
\\
\Qcircuit @C=1.2em @R=1.2em {
  \lstick{\ket{0}^{\otimes \lceil\log_2 m\rceil}}& {/} \qw & \gate{V} & \multigate{1}{W} & \gate{V^\dagger} & \qw \\
  & {/} \qw & \qw & \ghost{W} & \qw & \qw \\
  }
\\
\\
\end{tabular}
\caption{Linear combination of unitaries (LCU) method. 
The circuit is a $\lceil\log_2 m\rceil$-block-encoding of $\|c\|_1^{-1}\sum_{i=1}^m c_i U_i$ ($c_i>0$).
PREPARE $V:\ket{0}^{\otimes \lceil\log_2 m\rceil} \mapsto \sum_{i=1}^{m} \sqrt{\frac{c_i}{\|c\|_1}}|i\rangle$ and SELECT $W=\sum_{i=1}^m|i\rangle\langle i|\otimes U_i$.}
\label{fig:lcu_main}
\end{figure*}

\begin{figure*}[bt]
\centering
\begin{tabular}{c}
\\
\\
~~~~~\Qcircuit @C=0.6em @R=1.4em {
  \lstick{\ket{0}}&\gate{H}                           & \targ  & \gate{e^{-iZ\phi_n }} & \targ 
  &\qw & \targ & \gate{e^{-iZ\phi_{n-1} }} & \targ
  &\qw&\cdots&& \qw & \targ  & \gate{e^{-iZ\phi_1 }} & \targ
  &\gate{H} & \qw \\
  \lstick{\ket{0}^{\otimes a}}&\multigate{1}{U}  & \ctrlo{-1} & \qw & \ctrlo{-1}  
  &\multigate{1}{U^\dagger}& \ctrlo{-1}& \qw & \ctrlo{-1} 
  &\qw&\cdots& & \multigate{1}{U}& \ctrlo{-1}& \qw & \ctrlo{-1}
  & \qw&\qw \\
                              &\ghost{U}         & \qw & \qw & \qw                             
  &\ghost{U^\dagger}& \qw & \qw & \qw
  &\qw &\cdots&& \ghost{U} & \qw & \qw & \qw
  & \qw&\qw \\
  }
\\
\\
\end{tabular}
\caption{Quantum circuit for quantum singular value (eigenvalue) transformation for real polynomials $P$ of odd degree $n$.
$U$ denotes an $a$-block-encoding of a Hermitian operator $A$.
The NOT gate is controlled by $\ket{0}^{\otimes a}$ of the $a$ qubits, which is represented by the white circle $\circ$.
For a given real polynomial $P$ of degree $n$, the $n$ circuit parameters $\{\phi_i\}_{i=1}^{n}$ are calculated in $\mathcal{O}({\rm poly}(n))$ classical computation time.
With such parameters, this circuit results in an $(a+1)$-block-encoding of $P(A)$.
}
\label{fig:qet_circuit}
\end{figure*}

\begin{figure*}[htb]
    \centering
    \includegraphics[width=0.8\textwidth]{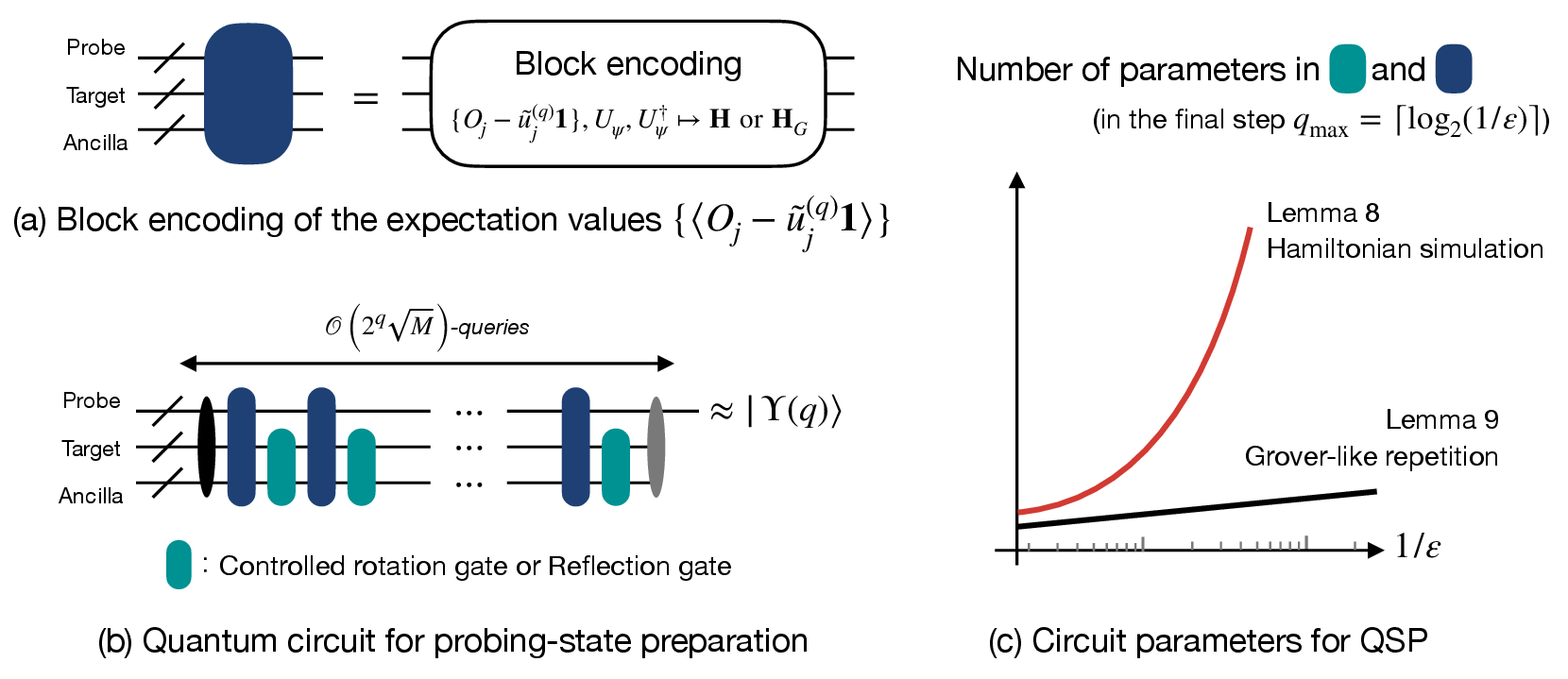}
    \caption{Preparation of the probing state $\ket{\Upsilon(q)}$ at the $q$-th iteration.
    The blue unitary gate in (a) and (b) denotes a block-encoding of the Hamiltonian $\mathbf{H}$ or $\mathbf{H}_{\rm G}$ in Eq.~\eqref{eq:sp_HS_targetHamiltonian} or \eqref{eq:Grover_sp_Hamiltonian}, respectively, and the green unitary gate in (b) is a controlled rotation or reflection gate.
    The blue and green unitary gates contain parameterized gates for QSP, and the total number of circuit parameters scales as in (c) with respect to the target root MSE $\varepsilon$.
    As the graph in (c) indicates, the method in Lemma~\ref{thm:sp_Grover} has a much smaller number of circuit parameters for QSP compared to that of Lemma~\ref{thm:sp_HS}.}
    \label{fig:sp_circuit_prop}
\end{figure*}

\subsection{Overview of quantum matrix arithmetics}\label{supple_sec:preliminary}
Here, we review some important results in efficiently calculating block-encoded matrices on a quantum computer.
The basic tool to represent matrices by unitary operators of dilated quantum systems is \textit{block encoding}: 
\begin{dfn}
    [Block encoding]\label{dfn:block_encoding}
    For positive values $\alpha,\varepsilon$ and a non-negative integer $a$, we say that an $(n+a)$-qubit unitary $U$ is an ($\alpha,a,\varepsilon$)-block-encoding of an $n$-qubit operator $A$, if 
    $$
    \|A-\alpha (\bra{0}^{\otimes a}\otimes \bm{1})U (\ket{0}^{\otimes a}\otimes \bm{1})\|\leq \varepsilon.
    $$
    For simplicity, we shorten the perfect (i.e., $\alpha=1$ and $\varepsilon=0$) block encoding of $A$ as $a$-block-encoding of $A$.
\end{dfn}
\noindent
For instance, any unitary operator (e.g., a Pauli operator $X\otimes Z\otimes \cdots$) is trivially a $0$-block-encoding of itself.
There are various ways to construct block encodings; see Ref.~\cite{gilyen2019quantum}.
Specifically, we here focus on the method called the linear combination of unitaries (LCU)~\cite{childs2012hamiltonian,berry2015hamiltonian}.
Let $A:=\sum_{i=1}^m c_iU_i$ be a linear combination of unitary operators $\{U_i\}_{i=1}^m$ with real coefficients $c_i\in \mathbb{R}$. 
Without loss of generality, we assume $c_i>0$ because $-1$ can be absorbed into $U_i$.
In order to implement $A$, we use the following two unitary operations.
The first one, called PREPARE, encodes the positive coefficients $\{c_i\}_{i=1}^{m}$ as
$$
V:\ket{\bm{0}} \mapsto \sum_{i=1}^{m} \sqrt{\frac{c_i}{\|c\|_1}}|i\rangle,
$$ 
where $\|\bullet\|_1$ denotes $L^1$-norm, and $|\bm{0}\rangle$ and $|i\rangle$ denote an initial state and the computational basis in a $\lceil {\log_2 m\rceil}$-qubit ancilla system, respectively.
The other, called SELECT, encodes the unitary operartors $U_i$ conditioned by the $\lceil {\log_2 m\rceil}$-qubit ancilla system:
$$
W=\sum_{i=1}^m|i\rangle\langle i|\otimes U_i.
$$
Using the two operations $V$ and $W$, it can be shown that the unitary operator $(V^\dagger\otimes \bm{1})W(V\otimes \bm{1})$ is a $(\|c\|_1,\lceil {\log_2 m\rceil},0)$-block-encoding of $A$, as in Fig.~\ref{fig:lcu_main}.
Note that if the coefficients in LCU are controlled by other qubit registers, it may be useful to modify the PREPARE operator instead of including the phase of $c_i$ to the SELECT operator, in order to save the number of controlled operations.

Once we have a block encoding of a target operator, we can systematically transform the block encoding to perform various tasks. Here, we show examples of such transformations that will be used in the following subsection.

\begin{lem}
    [Uniform singular value amplification~\cite{low2017hamiltonian,gilyen2019quantum}]\label{supple_lem:uniform_amp}
    Let $\gamma>1$ and let $\delta,\varepsilon\in (0,1/2)$.
    Suppose we have an $a$-block-encoding $U$ of $A,~(\|A\|\leq (1-\delta)/\gamma)$. Then, we can implement a $(1,a+1,\varepsilon)$-block-encoding of $\gamma A$ with $m=\mathcal{O}({\gamma}\delta^{-1}\log({\gamma/\varepsilon}))$ queries to $U$ or $U^\dagger$, $2m$ uses of NOT gates controlled by $a$-qubit, $\mathcal{O}(m)$ single-qubit gates, and $\mathcal{O}({\rm poly}(m))$ classical computation to find quantum circuit parameters.
\end{lem}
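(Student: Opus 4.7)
The plan is to reduce the claim to an instance of quantum singular value transformation (QSVT) applied to a carefully designed odd polynomial. Concretely, I will construct an odd real polynomial $P(x)$ of degree $n=\mathcal{O}(\gamma\delta^{-1}\log(\gamma/\varepsilon))$ that is uniformly close to the linear function $\gamma x$ on the spectral window of $A$ while remaining bounded by $1$ on $[-1,1]$. Once such a $P$ is in hand, the QSVT circuit in Fig.~\ref{fig:qet_circuit}, with $n$ alternating applications of $U$ and $U^\dagger$, $2n$ multi-controlled NOT gates, and $n+1$ single-qubit rotations parameterized by QSP phase angles $\{\phi_i\}_{i=1}^n$, realizes an $(a+1)$-block-encoding of $P(A)$. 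Since $P(A)$ approximates $\gamma A$ in spectral norm up to $\varepsilon$, this yields the desired $(1,a+1,\varepsilon)$-block encoding of $\gamma A$.

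The core analytic step is the polynomial construction. Because $\|A\|\leq (1-\delta)/\gamma$, the singular values of $A$ lie in $[0,(1-\delta)/\gamma]$, so I need $P$ to satisfy $|P(x)-\gamma x|\leq \varepsilon$ on $x\in[-(1-\delta)/\gamma,(1-\delta)/\gamma]$ together with $|P(x)|\leq 1$ on $[-1,1]$. I would follow the error-function smoothing strategy: start from a scaled error function $\mathrm{erf}(k\,x)$ with sharpness $k=\Theta(\gamma/\delta)$, use it to build a smoothed ``linear-then-clipped'' profile whose slope near the origin matches $\gamma$, and truncate its Chebyshev expansion at degree $n$. Alternatively one can invoke the rectangular-window/linear-amplification polynomial of Low--Chuang~\cite{low2017hamiltonian} or the general approximation toolkit of Gily\'en et al.~\cite{gilyen2019quantum}. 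Either route gives an odd $P$ with the desired properties; the factor $\delta^{-1}$ arises from the width $\delta/\gamma$ of the transition region, while the $\log(\gamma/\varepsilon)$ factor comes from the exponential tails of the error function (or from Jackson-type truncation estimates).

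With $P$ fixed, the remainder of the proof is essentially bookkeeping. I apply the QSVT theorem (the odd-degree variant depicted in Fig.~\ref{fig:qet_circuit}) to convert the $a$-block-encoding $U$ of $A$ into an $(a+1)$-block-encoding of $P(A)$, which costs exactly $n$ queries to $U$ or $U^\dagger$, $2n$ NOT gates controlled on the $a$-qubit ancilla being in $\ket{0}^{\otimes a}$, and $\mathcal{O}(n)$ single-qubit rotations. Computing the phase factors $\{\phi_i\}_{i=1}^{n}$ corresponding to $P$ is the classical preprocessing step, solvable in $\mathcal{O}(\mathrm{poly}(n))$ time by any of the standard phase-factor solvers. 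Finally, the block-encoding error is bounded by $\|P(A)-\gamma A\|\leq \varepsilon$, which holds because the singular values of $A$ stay inside the window where $P$ approximates $\gamma x$.

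The main obstacle is the sharp polynomial approximation. Proving the precise degree bound $n=\mathcal{O}(\gamma\delta^{-1}\log(\gamma/\varepsilon))$ requires careful control of the Chebyshev truncation error of the smoothed ramp, and one must check that imposing the hard constraint $|P(x)|\leq 1$ on all of $[-1,1]$ (needed to interpret $P(A)$ inside a block encoding) does not inflate the degree by more than a constant. All other ingredients are black-box applications of the QSVT machinery.
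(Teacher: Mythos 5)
Your proposal is correct and follows essentially the same route the paper (and its cited sources~\cite{low2017hamiltonian,gilyen2019quantum}) takes: the paper itself only remarks that one takes an odd real polynomial $P$ of degree $m=\mathcal{O}(\gamma\delta^{-1}\log(\gamma/\varepsilon))$ with $P(x)\approx\gamma x$ on the relevant window and $|P(x)|\le 1$ on $[-1,1]$, and implements $P(A)$ via the QSVT circuit of Fig.~\ref{fig:qet_circuit} with classically precomputed phases. Your accounting of the query, gate, and classical-preprocessing costs matches the lemma's statement.
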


\begin{lem}
    [Quantum eigenvalue transformation by Chebyshev polynomials~\cite{gilyen2019quantum}]\label{supple_lem:QET_by_Tmx}
    Let $m$ be a positive integer, and let $U$ be an $a$-block-encoding of a Hamiltonian $H$. Then, for the $m$-th Chebyshev polynomial of the first kind $T_m(x)$, we can implement a $(1,a,0)$-block-encoding of $T_m(H)$, with $m$ uses of $U$ or $U^\dagger$ and $m$ uses of reflection on $\ket{0}^{\otimes a}$.
\end{lem}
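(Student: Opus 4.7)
The plan is to invoke the qubitization framework: an $a$-block-encoding of a Hermitian operator naturally gives rise to a ``walk operator'' whose action decomposes the ambient Hilbert space into two-dimensional invariant subspaces, on each of which it rotates by an angle determined by the corresponding eigenvalue. Combined with the Chebyshev identity $T_m(\cos\theta)=\cos(m\theta)$, $m$-fold iteration of this walk realizes $T_m(H)$ in the same block that $U$ uses to encode $H$.

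Concretely, let $\Pi:=\ket{0}^{\otimes a}\bra{0}^{\otimes a}\otimes\bm{1}$ and $R:=2\Pi-\bm{1}$. The first step is to construct a walk operator $W$ built from $U$, $U^\dagger$, and $R$ such that for every eigenpair $(\lambda,\ket{\psi_\lambda})$ of $H$ with $\lambda=\cos\theta_\lambda$, the subspace
\[
\mathcal{S}_\lambda := \mathrm{span}\bigl\{\ket{0}^{\otimes a}\ket{\psi_\lambda},\, U\ket{0}^{\otimes a}\ket{\psi_\lambda}\bigr\}
\]
is invariant under $W$, and $W$ acts on $\mathcal{S}_\lambda$ as a rotation by angle $\theta_\lambda$ in a suitable orthonormal basis. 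Granted this qubitization structure, $W^m$ acts on $\mathcal{S}_\lambda$ as rotation by $m\theta_\lambda$, so
\[
\bra{0}^{\otimes a}\bra{\psi_\lambda}\, W^m \,\ket{0}^{\otimes a}\ket{\psi_\lambda} = \cos(m\theta_\lambda) = T_m(\lambda).
\]
Summing over the eigenbasis of $H$ then gives $(\bra{0}^{\otimes a}\otimes\bm{1})\,W^m\,(\ket{0}^{\otimes a}\otimes\bm{1}) = T_m(H)$, i.e., $W^m$ is a $(1,a,0)$-block-encoding of $T_m(H)$. The resource count is transparent: each walk step uses one query to $U$ or $U^\dagger$ together with one reflection $R$ on $\ket{0}^{\otimes a}$, so $W^m$ uses $m$ of each, matching the stated complexity. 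Notably, no classical precomputation of QSP phase angles is required, since the walk construction realizes the Chebyshev target in closed form.

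The main obstacle is constructing the walk operator with the required qubitization structure. Although $H = (\bra{0}^{\otimes a}\otimes\bm{1})\,U\,(\ket{0}^{\otimes a}\otimes\bm{1})$ is Hermitian, the dilation $U$ itself generally is not, so the naive choice $W=RU$ need not preserve $\mathcal{S}_\lambda$. The standard remedy is one of two equivalent moves: either enlarge the ancilla by a single qubit and form a symmetrized self-adjoint block-encoding $\tilde{U}$ that combines $U$ and $U^\dagger$ in coherent superposition, so that $W=R\tilde{U}$ has the desired two-dimensional invariant subspaces; or define $W$ as an alternating composition of $U$, $U^\dagger$, and $R$ whose net action per step is rotation by $\arccos(\lambda)$. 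Either route requires verifying that the induced rotation angle is exactly $\arccos(\lambda)$ and that the total query and reflection counts remain $m$; this bookkeeping is the technical core of the proof and follows the constructions of Refs.~\cite{low2017hamiltonian,gilyen2019quantum}.
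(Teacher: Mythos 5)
The paper does not actually prove this lemma: it is imported from Gilyén \emph{et al.}~\cite{gilyen2019quantum} (their Lemmas~8--9), and the surrounding text only remarks that the Chebyshev phase sequence is analytically known, so the generic QSVT circuit of Fig.~\ref{fig:qet_circuit} collapses to alternating block-encoding queries and reflections on $\ket{0}^{\otimes a}$. Your qubitization sketch is precisely the standard argument underlying that citation, so in substance you are taking the same route, and the outline (two-dimensional invariant subspaces, $T_m(\cos\theta)=\cos(m\theta)$, one query plus one reflection per Chebyshev degree) is correct. One caveat on your closing paragraph: the two remedies you present as ``equivalent'' are not interchangeable given the lemma's stated resource count. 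The symmetrization route adds an ancilla qubit and would produce an $(1,a+1,0)$-block-encoding rather than the claimed $(1,a,0)$-block-encoding, so only the alternating composition $\cdots U\,(2\Pi-\bm{1})\,U^{\dagger}\,(2\Pi-\bm{1})\cdots$ is consistent with the statement; there the per-step rotation by $\arccos\lambda$ follows from Jordan's lemma applied to the pair of reflections, and one must also use that $T_m$ has definite parity so that the singular-value transformation of the Hermitian $H$ coincides with the eigenvalue transformation $T_m(H)$. That parity point is worth making explicit, since for a general polynomial the projected alternating product transforms singular values, not signed eigenvalues.
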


\begin{lem}
    [Optimal block-Hamiltonian simulation~\cite{Low2019hamiltonian}]\label{supple_lem:opt_blockHS}
    Let $t\in \mathbb{R}\backslash\{0\}$, $\varepsilon''\in(0,1)$, and let $U$ be a $(1,a,0)$-block-encoding of a Hamiltonian $H$.
    Then, we can implement a $(1,a+2,\varepsilon'')$-block-encoding of $e^{itH}$, with $4Q$ queries to controlled $U$ or its inverse, $2Q$ uses of NOT gates controlled by $(a+1)$-qubit, $\mathcal{O}(Q)$ uses of single-qubit or two-qubit gates, and $\mathcal{O}({\rm poly}(Q))$ classical computation to find quantum circuit parameters, where $Q=\mathcal{O}(|t|+\log(1/\varepsilon''))$.
\end{lem}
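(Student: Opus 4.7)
The plan is to implement $e^{itH}$ polynomially in $H$ by combining a truncated Jacobi--Anger expansion with the quantum eigenvalue transformation of Lemma~\ref{supple_lem:QET_by_Tmx} and the QSP construction of Fig.~\ref{fig:qet_circuit}. Writing
\begin{equation*}
e^{itx}=J_0(t)+2\sum_{k=1}^{\infty}i^{k}J_k(t)T_k(x),
\end{equation*}
with $J_k$ the Bessel functions of the first kind and $T_k$ the Chebyshev polynomials of the first kind, I would truncate at some degree $Q$ and bound the tail via the super-exponential decay $|J_k(t)|\lesssim (e|t|/(2k))^{k}$ valid for $k>|t|$. This forces $\sum_{k>Q}|J_k(t)|\leq \varepsilon''/2$, and hence an operator-norm approximation error of at most $\varepsilon''$, as soon as $Q=\mathcal{O}(|t|+\log(1/\varepsilon''))$, which fixes the degree appearing in the statement.

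Next, I would split the truncated polynomial into its real (even-degree) and imaginary (odd-degree) parts,
\begin{equation*}
C_Q(x):=J_0(t)+2\sum_{\substack{k\leq Q\\ k\text{ even}}}(-1)^{k/2}J_k(t)T_k(x),\qquad S_Q(x):=2\sum_{\substack{k\leq Q\\ k\text{ odd}}}(-1)^{(k-1)/2}J_k(t)T_k(x),
\end{equation*}
each of which is a real polynomial of definite parity with sup-norm at most $1$ on $[-1,1]$. Applying the QSP/QSVT circuit of Fig.~\ref{fig:qet_circuit} then produces an $(a+1)$-block-encoding of $C_Q(H)$ and of $S_Q(H)$ separately, each at the cost of $Q$ alternating queries to controlled-$U$ and controlled-$U^\dagger$, $Q$ multiply-controlled NOTs on the $a$-qubit ancilla, and $\mathcal{O}(Q)$ single-qubit phase rotations whose angles are determined classically in $\mathcal{O}(\mathrm{poly}(Q))$ time.

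Finally, I would combine $C_Q(H)+iS_Q(H)\approx e^{itH}$ via LCU on one further ancilla qubit, absorbing the phase $i$ into the PREPARE step. The resulting circuit is a $(1,a+2,\varepsilon'')$-block-encoding of $e^{itH}$, with the two extra ancillae accounting respectively for the QSP flag qubit and the LCU control; the query count becomes $4Q$ via a factor of $2$ for the two parity-restricted QSVT circuits and a further factor of $2$ for the controlled invocations demanded by the LCU SELECT, while the accuracy is inherited from the truncation bound through $\|e^{itH}-C_Q(H)-iS_Q(H)\|\leq 2\sum_{k>Q}|J_k(t)|\leq \varepsilon''$. The main obstacle I anticipate is arranging this LCU so that the overall subnormalization is exactly $1$ rather than the naive $1/2$ one gets from a uniform-superposition PREPARE over two unitaries; the elegant resolution, following~\cite{Low2019hamiltonian}, is to Chebyshev-expand directly on the qubitization walk operator of $U$, whose eigenphases are $\pm\arccos\lambda$, so that a single QSP sequence of length $2Q$ implements $e^{\pm itH}$ simultaneously on the two qubitization eigenspaces with unit amplitude. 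Verifying the existence of real QSP phase factors for the Jacobi--Anger polynomials $C_Q,S_Q$ (parity, sup-norm bound, and the standard complementary-polynomial condition of QSP) is where the bulk of the remaining technical work sits.
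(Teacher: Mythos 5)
The paper does not prove this lemma: it is imported verbatim by citation from Low and Chuang's qubitization paper, with the remark that the explicit circuit is analogous to Fig.~\ref{fig:qet_circuit} and is given in the cited references. Your sketch correctly reproduces the standard argument of those references --- truncated Jacobi--Anger expansion with the Bessel tail bound fixing $Q=\mathcal{O}(|t|+\log(1/\varepsilon''))$, parity-split Chebyshev polynomials implemented by QSP on the qubitized walk operator, and an LCU recombination --- and you correctly identify the one genuine subtlety (the naive $1/2$ subnormalization) together with its resolution via the walk operator's $\pm\arccos\lambda$ eigenphases. The only loose ends are routine: the truncated polynomials can exceed sup-norm $1$ by $\mathcal{O}(\varepsilon'')$ and must be rescaled into the error budget, and the existence of the QSP phase factors follows from the standard parity/norm/complementary-polynomial characterization rather than needing fresh work.
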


Lemma~\ref{supple_lem:uniform_amp} and Lemma~\ref{supple_lem:QET_by_Tmx} can be implemented with a quantum circuit in Fig.~\ref{fig:qet_circuit}.
Also, we can implement the optimal Hamiltonian simulation Lemma~\ref{supple_lem:opt_blockHS} with a similar circuit as Fig.~\ref{fig:qet_circuit}; the explicit circuit constructions are provided in Refs.~\cite{low2017optimalHSbyQSP,Low2019hamiltonian,gilyen2019quantum}.

Importantly, the quantum circuit in Fig.~\ref{fig:qet_circuit} reflects the underlying structure that are common in the above lemmas; this quantum circuit implements a general method, called the quantum singular value transformation (QSVT), to transform singular values (eigenvalues) of a block-encoded matrix based on a large class of polynomials~\cite{gilyen2019quantum}.
The QSVT uses the idea of quantum signal processing (QSP)~\cite{PhysRevX.6.041067,Low2019hamiltonian} that characterizes achievable 1-qubit unitary transformations comprised of an alternating 1-qubit gate sequence of the \textit{signal} rotation with a unknown angle and the \textit{processing} rotation with a controllable angle.
To bridge the gap between QSP and QSVT, \textit{Qubitization}~\cite{Low2019hamiltonian} is a crucial technique that splits (a part of) ancilla-target systems into some qubits labeled by the eigenvalue (singular values~\cite{gilyen2019quantum}) and constructs parallel signal rotations over the qubits (e.g., $U$ and $U^\dagger$ in Fig.~\ref{fig:qet_circuit}).
Here, the rotation angle depends on the corresponding singular value.
Then, using additional processing rotations with controllable parameters $\{\phi_i\}$ (likewise the controlled rotation between $U$ and $U^\dagger$ in Fig.~\ref{fig:qet_circuit}), we can transform the singular values in parallel by a polynomial that depends on the controllable parameters $\{\phi_i\}$; the achievable polynomials in QSVT are characterized by QSP.
See the review~\cite{PRXQuantum.2.040203} for details of the theoretical perspective of QSVT.

In practice, a typical flow of QSVT consists of two steps: (i) finding the circuit parameter $\{\phi_i\}_{i=1}^n$ (called the phase sequence) for a given degree-$n$ real polynomial $P$ on classical computers, (ii) running the $\mathcal{O}(n)$-depth quantum circuit in Fig.~\ref{fig:qet_circuit} on a quantum computer, using the classically tuned parameters.
Note that in the process (ii), we may need a post-selection on ancilla qubits. 
For a given degree-$n$ polynomial $P$ that has definite parity and $P(x)\in[-1,1]$ for $x\in [-1,1]$, the $n$ circuit parameters $\{\phi_i\}_{i=1}^n$ in Fig.~\ref{fig:qet_circuit} can be found by $\mathcal{O}({\rm poly}(n,\log(1/\delta)))$ classical computation for some error $\delta$~\cite{gilyen2019quantum}.
Then, using the parameters this circuit results in an $(a+1)$-block-encoding of $P(A)$.
In Lemma~\ref{supple_lem:uniform_amp}, we can take an odd real polynomial $P(x)$ with degree $m=\mathcal{O}({\gamma}\delta^{-1}\log({\gamma/\varepsilon}))$ such that $P(x)\approx \gamma x$ holds~\cite{low2017hamiltonian,gilyen2019quantum}.
In particular, the phase sequence $\{\phi_i\}$ for the Chebyshev polynomial of the first kind $T_{n}$ is analytically calculated and has a unique structure (Lemma 9 in Ref.~\cite{gilyen2019quantum}); as a result, we can eliminate the additional ancilla qubit and replace the $2n$ controlled NOT gates with $n$ reflections on $\ket{0}^{\otimes a}$ in the circuit of Fig.~\ref{fig:qet_circuit}.

\subsection{Probing-state preparation}
Before proceeding to the proof of Lemmas~\ref{thm:sp_HS} and \ref{thm:sp_Grover}, we first provide an overview of our method to prepare $\ket{\Upsilon(q)}$ in Fig.~\ref{fig:sp_circuit_prop}. 
As seen in the figure, the proposed two methods to prepare $\ket{\Upsilon(q)}$ have a similar structure: Fig.~\ref{fig:sp_circuit_prop}(a) shows the block encoding of a Hamiltonian that encodes the expectation values of observables $\braket{O_j-\tilde{u}^{(q)}_j\bm{1}}$ and Fig.~\ref{fig:sp_circuit_prop}(b) shows alternating applications of the block encoding and a processing operation with tuned parameters, which is based on Lemma~\ref{supple_lem:QET_by_Tmx} or Lemma~\ref{supple_lem:opt_blockHS}.
In the proof of Lemmas~\ref{thm:sp_HS} and \ref{thm:sp_Grover}, we 
depict the circuit for (a) and (b), respectively; then we evaluate the approximation error between the final state of the circuit and the probing state $\ket{\Upsilon(q)}$.

More detailed proofs and explicit quantum circuit diagrams for these lemmas are provided in Appendix~\ref{supple_sec:proposed_alg}.

\subsubsection{Hamiltonian simulation}
Let us first present an informal lemma (its formal version is Lemma~\ref{supple_lem:statepre_HS} in Appendix~\ref{supple_sec;sp4iterativegradest}) regarding the complexity of probing-state preparation when we utilize the Hamiltonian simulation based on QSP:

\begin{lem}
    [Informal]\label{thm:sp_HS}
    Suppose that we have access to block-encoded observables $\{O_j\}_{j=1}^M$ in $d$ dimension, a $\log_2 d$-qubit state preparation $U_{\psi}$, and its inverse $U_{\psi}^\dagger$, such that $M>\mathcal{O}(\log d)$. Then, we can prepare the probing state $\ket{\Upsilon(q)}$ for any integer $q\geq 0$ and $\tilde{u}^{(q)}_j\in [-1,1]$ up to $1/12$ Euclidean distance error, using an $$\mathcal{O}(M+\log_2 d)\mbox{-qubit}$$ circuit regardless of $q$.
    Furthermore, each quantum circuit with $q$ requires $$\mathcal{O}({\rm poly}(2^q\sqrt{M\log d})+{\rm poly}(\sqrt{M}(q+\log M)))$$ classical computation for finding circuit parameters, and it consists of $$\mathcal{O}(2^q\sqrt{M\log d})~~\mbox{uses~of}~~U_{\psi}~\mbox{and}~U_{\psi}^\dagger,$$ 
    $\mathcal{O}(2^qM(q+\log M))$ uses of unitary gates for block-encoded observables, and $\mathcal{O}(2^qM(q+\log M)\log dM)$ uses of single-qubit and two-qubit gates.
\end{lem}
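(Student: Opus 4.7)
My plan is to realize $\ket{\Upsilon(q)}$ by Hamiltonian-simulating an effective diagonal operator $A$ on the probe register whose eigenvalues are exactly the linear forms $\sum_j x_j\braket{O_j-\tilde{u}_j^{(q)}\bm{1}}$, so that applying $e^{itA}$ with $t=2^{p+q+1}$ to the uniform superposition $2^{-pM/2}\sum_{\bm{x}\in G_p^M}\ket{\bm{x}}$ reproduces Eq.~\eqref{eq:targetstate} exactly. All the technical work lies in assembling a block-encoding of $A$ whose normalization factor is small enough that Hamiltonian simulation meets the claimed query budget.

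The first step is to block-encode the auxiliary Hermitian operator
\begin{equation*}
\mathbf{H}\,:=\,\sum_{j=1}^M X_j\,\otimes\,(O_j-\tilde{u}_j^{(q)}\bm{1}),
\end{equation*}
where $X_j$ is the diagonal position operator (with eigenvalues $x_j\in G_p$) on the $j$th $p$-qubit probe sub-register. The LCU template of Fig.~\ref{fig:lcu_main}, with a $\log_2 M$-qubit PREPARE in uniform superposition and SELECT $\sum_j\ket{j}\bra{j}\otimes X_j\otimes(B_j-\tilde{u}_j^{(q)}\bm{1}_{\text{anc}})$, gives an $(\alpha,\mathcal{O}(a+\log M),0)$-block-encoding of $\mathbf{H}$ with $\alpha=\mathcal{O}(M)$. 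I would then apply the uniform singular value amplification of Lemma~\ref{supple_lem:uniform_amp} with amplification factor $\gamma=\Theta(\sqrt{M/\log d})$ to reduce this normalization to $\alpha'=\mathcal{O}(\sqrt{M\log d})$. Next, I would conjugate the amplified block-encoding with $\bm{1}\otimes U_\psi$ on the right and $\bm{1}\otimes U_\psi^\dagger$ on the left; folding the (initialized and later projected) target register into the ancilla, a direct computation shows that the sandwich is a block-encoding of $A$ with the same normalization $\alpha'$. Finally, I would apply the optimal block-Hamiltonian simulation of Lemma~\ref{supple_lem:opt_blockHS} to this sandwich for time $t=2^{p+q+1}$, realizing $e^{itA}$ up to precision $\varepsilon'$ in $Q_{\text{HS}}=\mathcal{O}(t\alpha'+\log(1/\varepsilon'))=\mathcal{O}(2^q\sqrt{M\log d})$ calls to the sandwich, and act with it on the initial uniform superposition before post-selecting the ancilla and target registers onto $\ket{\bm{0}}$.

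The main obstacle is to justify the amplification ratio $\gamma=\Theta(\sqrt{M/\log d})$, which is precisely what replaces the naive $M$-dependence by $\sqrt{M\log d}$ in the final query count; this requires a spectral-norm condition on the effective action of $\mathbf{H}$ whose validity relies on the hypothesis $M>\mathcal{O}(\log d)$. With that in hand, the rest of the argument is routine: I would partition the $1/12$ error budget additively into constant-fraction contributions from the amplification approximation, the Hamiltonian-simulation precision $\varepsilon'$, and the post-selection amplitude loss, fixing $\varepsilon'$ inverse-polynomially in the relevant parameters. The resource counts then follow by multiplication: the $Q_{\text{HS}}$ sandwich queries yield $\mathcal{O}(2^q\sqrt{M\log d})$ uses of $U_\psi$ and $U_\psi^\dagger$; each amplified block-encoding expands into $\mathcal{O}(\sqrt{M/\log d}(q+\log M))$ unamplified calls, producing $\mathcal{O}(2^qM(q+\log M))$ uses of the observable block-encodings; PREPARE, $X_j$-synthesis, and QSP phase rotations contribute $\mathcal{O}(\log(dM))$ elementary gates per call, matching the stated bound; and the classical tuning of the QSP phase sequence decomposes into the ${\rm poly}$ terms of Lemmas~\ref{supple_lem:uniform_amp} and \ref{supple_lem:opt_blockHS}, giving the stated $\mathcal{O}({\rm poly}(2^q\sqrt{M\log d})+{\rm poly}(\sqrt{M}(q+\log M)))$ classical runtime.
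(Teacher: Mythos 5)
Your overall architecture — LCU block-encoding of the $\bm{x}$-weighted combination of observables, uniform singular value amplification by $\gamma=\Theta(\sqrt{M/\log d})$, conjugation by $U_\psi$, and optimal block-Hamiltonian simulation for time $\mathcal{O}(2^{p+q})$ — is the same route the paper takes. However, you have deferred and, more importantly, mischaracterized the one step that actually carries the content of the lemma: the justification of the amplification. The precondition of Lemma~\ref{supple_lem:uniform_amp} is a spectral-norm bound $\|A\|\leq(1-\delta)/\gamma$ on the encoded operator, and for your global operator $\mathbf{H}=\sum_j X_j\otimes(O_j-\tilde{u}_j^{(q)}\bm{1})$ this bound is simply \emph{false} in the worst case: taking all $x_j$ at the extreme value $\approx 1/2$ and commuting, aligned $O_j-\tilde{u}_j^{(q)}\bm{1}$ gives $\|\sum_j x_j(O_j-\tilde{u}_j^{(q)}\bm{1})\|=\Theta(M)$, i.e.\ norm $\Theta(1)$ after the LCU normalization by $M$, no matter how $M$ compares to $\log d$. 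So there is no single "spectral-norm condition on the effective action of $\mathbf{H}$" that the hypothesis $M>\mathcal{O}(\log d)$ validates; that hypothesis only ensures $\sigma=\lceil\sqrt{2M\ln(2d/\delta')}\rceil<M$, i.e.\ that the amplification ratio $\gamma=M/\sigma$ exceeds $1$ at all.

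What actually closes this gap in the paper (Lemma~\ref{supple_lem:BE_ampH}) is a probabilistic argument over grid points: viewing $\bm{x}$ as uniformly distributed on $G_p^M$, a matrix concentration inequality for random matrix series shows that
$\|\tilde{H}^{(\bm{x})}\|=\|M^{-1}\sum_j x_j O_j\|<1/(2\gamma)$
for all $\bm{x}$ in a subset $F\subset G_p^M$ with $|F|\geq(1-\delta')|G_p^M|$. The amplification is then applied to the $\bm{x}$-controlled block-encoding $U'_{\rm SEL}$, and it produces a faithful block-encoding of $\sigma^{-1}\sum_j x_j O_j$ only for $\bm{x}\in F$; on the complement the output is uncontrolled, and since the probe register starts in the uniform superposition, the bad grid points contribute an additional Euclidean-distance error of order $\sqrt{\delta'}$ that must be charged against the $1/12$ budget alongside the Hamiltonian-simulation error $\varepsilon''+\sqrt{2\varepsilon''}$ and the amplification error $t\varepsilon'$ (which forces $\varepsilon'=\mathcal{O}(\sqrt{\delta'}/(2^{p+q}\sigma))$, not merely a constant). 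Without this measure-versus-worst-case distinction and the accompanying error accounting, the claimed $\sqrt{M\log d}$ normalization — and hence the entire $\sqrt{M}$ speedup over naive LCU — is unjustified. The remainder of your resource counting is consistent with the paper once this is supplied.
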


\begin{proof}[Sketch of the proof]
    To encode the ideal phase $f(\bm{x})$ in Eq.~\eqref{eq:target_phase_fn},
    we construct the Hamiltonian $\mathbf{H}$ with the observables $\{O_j\}$ as
\begin{equation}\label{eq:sp_HS_targetHamiltonian}
        \mathbf{H}:=\sum_{\bm{x}\in G_p^M} \tilde{f}(\bm{x})\ket{\bm{x}}\bra{\bm{x}},
\end{equation}
where the target observables are approximately encoded in the eigenvalues $\tilde{f}(\bm{x})$ as
\begin{equation}\label{eq:main_tildef_aaprx}
    \tilde{f}(\bm{x}) \approx \frac{1}{{\sigma}}\sum_{j=1}^M x_j\braket{\tilde{O}^{(q)}_j},~~\sigma=\mathcal{O}(\sqrt{M\log d}),
\end{equation}
Here, $\sigma$ denotes the rescaling factor of the Hamiltonian $\mathbf{H}$ such that $\mathbf{H}$ can be encoded in a unitary operator.
Also, we defined $\tilde{O}_j^{(q)}:=(O_j-\tilde{u}^{(q)}_j\bm{1})/2$ for the identity $\bm{1}$.
It is crucial for the total query complexity of $U_{\psi}$ to construct the block encoding of $\mathbf{H}$ with the $\mathcal{O}(\sqrt{M})$ rescaling factor, while naive block encoding for $\mathbf{H}$ has $\mathcal{O}(M)$ rescaling factor, as mentioned below.

Setting aside the details for now, we suppose that we have a (perfect) block encoding of the Hamiltonian $\mathbf{H}$ that acts on the $pM$-qubit probe system, $\log_2 d$-qubit target system, and ancilla systems (specified below), as illustrated in Fig.~\ref{fig:sp_circuit_prop}.
Then, the optimal Hamiltonian simulation protocol Lemma~\ref{supple_lem:opt_blockHS} yields a quantum circuit $W$ for an $\epsilon'$-precise block encoding of time evolution operator 
\begin{equation}\label{eq:sphs_propagator}
    e^{i\mathbf{H}t} = \sum_{\bm{x}\in G_p^M} e^{i\tilde{f}(\bm{x})t}\ket{\bm{x}}\bra{\bm{x}}
\end{equation}
with time $t>0$, using $\mathcal{O}(t+\log(1/\epsilon'))$ queries to the block encoding of $\mathbf{H}$.
Note that in the case $t=2\sigma t'$ for some positive integer $t'$, the resulting time evolution operator approximates $t'$ times applications of the \textit{phase oracle} for an affine linear function $f(\bm{x})=\sum_j x_j \braket{O_j}$ in the theory of gradient estimation.
Then, applying $W$ for $$t:=2^{p+q+2}\sigma$$ 
to the uniform superposition state $\ket{+}^{\otimes pM}$ in the probe system and the initial state $\ket{\bm{0}}$ in the ancilla-target systems, we can approximately prepare the probing state:
\begin{align}\label{eq:HS_targstate_approx}
    W\ket{+}^{\otimes pM}\ket{\bm{0}}\approx \ket{\Upsilon(q)}\ket{\bm{0}}.
\end{align}
From the triangle inequality, this approximation error is given by the sum of $\epsilon'$ (more precisely, $\epsilon'+\sqrt{2\epsilon'}$ in Euclidean distance) from the transformation $\mathbf{H}\mapsto e^{i\mathbf{H}t}$ and the error from the observable encoding in Eq.~\eqref{eq:main_tildef_aaprx}, that is,
\begin{equation}\label{eq:sp_HS_approx_error}
    \epsilon'+\sqrt{2\epsilon'}+\left\|\frac{1}{\sqrt{2^{pM}}}\sum_{\bm{x}\in G_p^M} e^{i\tilde{f}(\bm{x})t}\ket{\bm{x}}-\ket{\Upsilon(q)}\right\|.
\end{equation}

To quantify the error of the third term in Eq.~\eqref{eq:sp_HS_approx_error} (or the error of Eq.~\eqref{eq:main_tildef_aaprx}), we here construct the block encoding of the Hamiltonian $\mathbf{H}$ from the state preparation $U_{\psi}$ ($U_{\psi}^\dagger$) and unitary gates $\{B_j\}$ that are $a$-block-encodings of observables $\{O_j\}$ for some $a\in \mathbb{N}$.
The block encoding of $\mathbf{H}$ consists of (i) the LCU method, (ii) adding the $pM$-qubit control to the LCU operations, (iii) the singular value amplification, and (iv) multiplying the conjugation of $U_{\psi}$.
Let us describe these procedures step by step in the following.

First, we utilize the LCU method with the controlled version of each $B_j$ to construct a quantum circuit $U^{(\bm{x})}$ for a block encoding of $$\frac{1}{M}\sum_{j=1}^M x_j \tilde{O}^{(q)}_j$$ (with the rescaling factor $M$, instead of $\sigma$) for a given $\bm{x}\in G_p^M$. 
The LCU requires additional $\mathcal{O}(\log M)$ qubits to encode the coefficients $\{x_j/M\}$.
In step (ii), using controlled versions of $U^{(\bm{x})}$, we can obtain a quantum circuit for 
\begin{equation}\label{eq:U_sel}
    U'_{\rm SEL}:=\sum_{\bm{x}\in G_{p}^M} \ket{\bm{x}}\bra{\bm{x}}\otimes U^{(\bm{x})},
\end{equation}
which can be implemented with at most $\mathcal{O}(pM\log(1/\delta))$ depth regarding the implementation error $\delta$ of the coefficients $\{x_j/M\}$; see Remark~\ref{rem:design_prepare} in Appendix~\ref{supple_sec:proposed_alg}.

To understand why we must have step (iii), it is informative to observe the action realized in step (iv).
Namely, using $U'_{\rm SEL}$, $U_{\psi}$, and its conjugation $U^\dag_{\psi}$, we obtain
\begin{equation}\label{eq:hs_proof_1}
    \left(\bm{1}\otimes U_{\psi}^\dagger\right)\cdot U'_{\rm SEL}\cdot \left( \bm{1}\otimes U_{\psi}
\right),
\end{equation}
which is a block encoding of the following Hamiltonian:
$$
\mathbf{H}':=\sum_{\bm{x}} \left(\sum_j{x}_j\cdot\frac{\langle \tilde{O}^{(q)}_j\rangle}{M}\right) \ket{\bm{x}}\bra{\bm{x}}.
$$
This Hamiltonian $\mathbf{H}'$ has the normalization factor $M$ that is quadratically larger than $\sigma=\mathcal{O}(\sqrt{M})$ in $\mathbf{H}$.
In estimating the expectation value $\langle \tilde{O}^{(q)}_j\rangle$ via gradient estimation, we need to amplify $\mathbf{H}'$ by the factor $M$ to prepare Eq.~(\ref{eq:targetstate}).
This can be achieved by using a time evolution
$e^{i\mathbf{H}'t}$ of $t=2^{p+q+2}M$, but the linear dependence on $M$ in the evolution time $t$ results in no speedup for the total queries to $U_{\psi}$ regarding the number $M$ of observables.

To obtain the quadratic speedup regarding $M$, the method in Ref.~\cite{van2023quantum} uses the singular value amplification~Lemma~\ref{supple_lem:uniform_amp}.
Importantly, this amplification can be performed with no use of $U_{\psi}$.
From the random matrix series inequality, we can show that for a large part of $G_{p}^M$ (more precisely, for a subset $F\subset G_p^M$ such that $|F|\geq (1-\delta')|G_p^M|$ for any $\delta'>0$), the condition of the amplification is satisfied for $\gamma=M/\sigma=\mathcal{O}(\sqrt{M})$, as well as the analysis in Ref.~\cite{van2023quantum}.
Therefore, we can amplify the block encoding $U^{(\bm{x})}$ by $\gamma$ for such $\bm{x} \in F$; as a result, we have a quantum circuit for 
\begin{equation}\label{eq:U_obs}
    U_{\rm obs}:=\sum_{\bm{x}\in G_{p}^M} \ket{\bm{x}}\bra{\bm{x}}\otimes U^{(\bm{x})}_{\rm obs},
\end{equation}
where $U^{(\bm{x})}_{\rm obs}$ is an $\epsilon''$-precise block encoding of the Hamiltonian $\sigma^{-1}\sum_j x_j\tilde{O}^{(q)}_j$ if $\bm{x}\in F$.
By multiplying $U_{\psi}$ and $U_{\psi}^\dagger$ into $U_{\rm obs}$ as well as Eq.~\eqref{eq:hs_proof_1}, we finally arrive at the block encoding of the Hamiltonian $\mathbf{H}$.
We note that the amplification is valid when $\sigma = \mathcal{O}(\sqrt{M\log_2 d})$ is smaller than $M$, and this is satisfied by the condition $M>\mathcal{O}(\log d)$.

The main contribution to the space complexity comes from the steps (i) the LCU method and (ii) adding the $pM$-qubit control to $U^{(\bm{x})}$.
The other processes (iii), (iv), and the Hamiltonian simulation protocol $\mathbf{H}\mapsto e^{i\mathbf{H}t}$ introduce constant or no ancilla qubits.
The LCU method (i) introduces $\mathcal{O}(\log M)$-qubit registers, and thus the number of qubits of $U'_{\rm SEL}$ scales as $pM+\log M+\log d$.
Recalling that the precision parameter $p$ is fixed to a constant (i.e, $p=3$) in the probing state $\ket{\Upsilon(q)}$, we conclude that the total space complexity is $\mathcal{O}(M+\log_2 d)$, which is independent from the root MSE $\varepsilon$.

As for the gate complexity, we here focus on the number of queries to $U_{\psi}$ (or $U_{\psi}^\dagger$) and $U'_{\rm SEL}$.
A comprehensive analysis on the total gate complexity is provided in Lemma~\ref{supple_lem:statepre_HS} in Appendix~\ref{supple_sec;sp4iterativegradest}.
Here, we need to carefully choose the parameters $\varepsilon',\varepsilon''$, and $\delta'$ so that the entire approximation error Eq.~(\ref{eq:sp_HS_approx_error}) is smaller than 1/12; in particular, this is achieved by taking $\varepsilon'$ and $\delta'$ as some constants, and $\varepsilon''=\mathcal{O}(1/(2^q\sigma))$.
Since the block encoding of $\mathbf{H}$ consists of two uses of $U_{\psi}$ and $U_{\psi}^\dagger$ and $m:=\mathcal{O}(M\sigma^{-1}\log(M/(\sigma \varepsilon'')))$ uses of control-$U'_{\rm SEL}$, the total queries can be evaluated by multiplying $\mathcal{O}(t)$ for the Hamiltonian simulation protocol, which yields the gate complexity as stated in Lemma~\ref{thm:sp_HS}.
In addition, the number of circuit parameters for the singular value amplification in the step (iii) and the Hamiltonian simulation protocol are given by $m$ and $t$, respectively.
Thus, we need to tune the parameters via classical computation with time complexity of $\mathcal{O}({\rm poly}(m))$ and $\mathcal{O}({\rm poly}(t))$ time, respectively, as discussed in Sec.~\ref{supple_sec:preliminary}.
\end{proof}

Before proceeding to the Grover-like repetition, we remark that
our scheme has a significant improvement in space complexity compared to the previous non-iterative counterparts~\cite{PhysRevLett.129.240501,van2023quantum}.
While the non-iterative methods determine all of the $\mathcal{O}(\log 1/\varepsilon)$ binary fraction bits of $\braket{O_j}$ by a quantum circuit with additional $\mathcal{O}(M\log(1/\varepsilon))$ (or $\mathcal{O}(M\log(M/\varepsilon))$) qubits to read out the observables, our scheme determines only 1 binary fraction bit of $\braket{O_j}$ at each iteration step.
Specifically, the previous non-iterative method~\cite{van2023quantum} uses the Hamiltonian simulation with $t=2^{p}$ in Eq.~\eqref{eq:sphs_propagator}, where the number of read-out qubits is $p=\mathcal{O}(\log(\sqrt{M\log d}/\varepsilon))$ (i.e., $q=0$).
In contrast, by the adaptive nature of our scheme, space overhead of quantum circuits for the probing state $\ket{\Upsilon(q)}$ is $\mathcal{O}(M)$ ($\mathcal{O}(M+\log_2 d)$ in total), and this is independent of the estimation precision $\varepsilon$.
Here, a similar improvement of space overhead can be found in the previous works on the adaptive (iterative) versions of quantum phase estimation~\cite{kitaev1995quantum,higgins2007entanglement,higgins2009demonstrating,kimmel2015robust,dutkiewicz2022heisenberg}.
Also, we remark that the significant reduction of space overhead directly leads to the reduction of the number of controlled operations, which is also crucial in practical implementation.




\subsubsection{Grover-like repetition}

The quantum circuit employed in Lemma~\ref{thm:sp_HS} requires classical tuning of $\tilde{\mathcal{O}}(\sqrt{M}/\varepsilon)$ circuit parameters (more precisely, $\tilde{\mathcal{O}}(2^q \sqrt{M})$ parameters in step $q$) for QSP.
The classical computation for finding $\tilde{\mathcal{O}}(\sqrt{M}/\varepsilon)$ parameters is also required in the existing work by van Apeldoorn {\it et al.}~\cite{van2023quantum}.
As discussed in Sec.~\ref{sec:intro_mainres}, this is in practice a significant overhead in classical computation time, for instance, $10^4$--$10^6$ seconds for the case of $M=10^4$ and $\varepsilon=10^{-4}$.
Although the classical computation of $n$ parameters can be done with time complexity of $\mathcal{O}({\rm poly}(n))$~\cite{gilyen2019quantum}, it is extremely challenging if we desire to compute for $n>10^6$ which may be required in, e.g., quantum chemistry applications. 
Even if we choose to compensate for the computation time with low accuracy in parameter finding, this may eventually negate the quantum enhancement in the gradient estimation algorithm.


To avoid the numerical instability, we here provide an alternative way to prepare the probing state~\eqref{eq:targetstate} using the \textit{Grover-like repetition}, which is a special case of QSP such that the corresponding quantum circuit parameters are analytically derived.

\begin{lem}
    [Informal]\label{thm:sp_Grover}
    Suppose we have access to block-encoded observables $\{O_j\}_{j=1}^M$ in $d$ dimension, a $\log_2 d$-qubit state preparation $U_{\psi}$, and its inverse $U_{\psi}^\dagger$, such that $M>\mathcal{O}(\log d)$.
    If the iteration step $q\geq 0$ satisfies the following condition $(\delta':=2^{-14})$
    \begin{equation}\label{xabst:q_threshold}
        q\geq \log_4\left[\frac{2^{{3}}\cdot 33^3}{625\ln (2d/\delta')}\frac{\left\lceil{\sqrt{2(M+1)\ln(2d/\delta')}}\right\rceil}{\sqrt{\ln (2d/\delta')}}\right],
    \end{equation}
    and for given $\tilde{u}^{(q)}_j\in [-1,1]$,
    \begin{equation}\label{eq:cond4grad_est_spG}
        \left|\bra{\psi}\left(O_j-\tilde{u}^{(q)}_j\bm{1}\right)\ket{\psi}\right|\leq 2^{-q}~~\mbox{for~all}~~j=1,2,...,M,
    \end{equation}
    holds,
    then we can successfully prepare the probing state $\ket{\Upsilon(q)}$ up to $1/12$ Euclidean distance error, with probability at least $0.462$ with ancilla qubits measurement result indicating success, using an $$\mathcal{O}(M+\log_2 d)\mbox{-qubit}$$
    circuit regardless of $q$.
    Furthermore, each quantum circuit with $q$ requires $$\mathcal{O}({\rm poly}(\sqrt{M}(q+\log M)))$$ 
    classical computation, and it has the same gate complexity as that of Lemma~\ref{thm:sp_HS}.
\end{lem}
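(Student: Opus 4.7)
The plan is to recycle the four-step recipe from the proof of Lemma~\ref{thm:sp_HS} (LCU, controlled LCU, uniform singular-value amplification, and conjugation by $U_{\psi}$) to construct a block-encoding of a Hamiltonian
\[
\mathbf{H}_{\rm G}\;=\;\sum_{\bm{x}\in G_p^M}\tilde{f}(\bm{x})\,\ket{\bm{x}}\bra{\bm{x}},
\qquad
\tilde{f}(\bm{x})\;\approx\;\sigma^{-1}\sum_{j=1}^M x_j\,\langle \tilde{O}^{(q)}_j\rangle,
\]
with normalisation $\sigma=\mathcal{O}(\sqrt{M\log d})$. The decisive modification is that I will feed $\mathbf{H}_{\rm G}$ into a Grover-like iteration built from the Chebyshev QSVT of Lemma~\ref{supple_lem:QET_by_Tmx} rather than into the optimal Hamiltonian simulation of Lemma~\ref{supple_lem:opt_blockHS}. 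Since the Chebyshev phase sequence is known analytically, the classical preprocessing contracts to the singular-value-amplification step alone.

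Next I will pick an iteration count $m=\mathcal{O}(2^{p+q}\sigma)$ and analyse the walk acting on $\ket{+}^{\otimes pM}\ket{\bm{0}}$. On each two-dimensional qubitised subspace indexed by $\bm{x}$ the walk is a rotation by the angle $2m\arcsin(\tilde{f}(\bm{x}))$, so after a fixed ancilla projector I obtain, up to the amplification and LCU errors, the state $2^{-pM/2}\sum_{\bm{x}}e^{\pm i\,2m\arcsin(\tilde{f}(\bm{x}))}\ket{\bm{x}}$ on a heralded branch. Expanding $\arcsin(y)=y+y^3/6+\mathcal{O}(y^5)$ and tuning $m$ so that the linear part $2m\tilde{f}(\bm{x})$ matches the phase $2\pi\cdot 2^p\cdot 2^q\pi^{-1}\sum_j x_j\langle O_j-\tilde{u}^{(q)}_j\bm{1}\rangle$ of $\ket{\Upsilon(q)}$ reproduces the probing state in leading order.

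The hard part will be pushing the cubic residual of the $\arcsin$ expansion and the heralding probability into the regime claimed in the lemma. For the phase error, hypothesis~\eqref{eq:cond4grad_est_spG} together with the random-matrix concentration already used in Lemma~\ref{thm:sp_HS} gives $|\tilde{f}(\bm{x})|=\mathcal{O}(2^{-q}/\sigma)$ on a subset $F\subset G_p^M$ of density at least $1-\delta'$, and substituting $m\sim 2^q\sigma$ produces a per-mode residual of order $2^{-2q}/\sigma^2$. The explicit threshold~\eqref{xabst:q_threshold} is calibrated so that this cubic residual, together with the LCU rounding error and the amplification slack $\delta'=2^{-14}$, keeps the total Euclidean distance below $1/12$. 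For the heralding probability, I will lower-bound the $\bm{x}$-averaged squared overlap between the post-selection projector and the rotated state by $\cos^2\theta_\star$ for the worst admissible $\theta_\star$ in the small-$\arcsin$ window, with the contribution from $G_p^M\setminus F$ suppressed by the same concentration bound; the explicit constant $0.462$ emerges from this worst-case angle.

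Resource counting then follows the same bookkeeping as Lemma~\ref{thm:sp_HS}: the walk consumes $\mathcal{O}(m)=\mathcal{O}(2^q\sigma)$ uses of the block-encoding of $\mathbf{H}_{\rm G}$, each of which costs $\mathcal{O}(1)$ queries to $U_{\psi}$ and $U_{\psi}^\dagger$ and $\mathcal{O}(M/\sigma)$ queries to the controlled LCU operator, so the overall query counts to $U_{\psi}$ and to the observable block-encodings match those of Lemma~\ref{thm:sp_HS}. The only QSP block that still requires numerical tuning is the singular-value amplification, whose degree is $\mathcal{O}(\sqrt{M}(q+\log M))$, yielding the advertised $\mathcal{O}(\mathrm{poly}(\sqrt{M}(q+\log M)))$ classical preprocessing.
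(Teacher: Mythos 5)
Your high-level strategy matches the paper's: build the $\sigma^{-1}$-normalized Hamiltonian via the same LCU/amplification pipeline as Lemma~\ref{thm:sp_HS}, replace the Hamiltonian-simulation step by the Chebyshev eigenvalue transformation of Lemma~\ref{supple_lem:QET_by_Tmx} (whose phase sequence is analytic), control the nonlinearity by Hoeffding-type concentration on a high-density subset of the grid, and calibrate the threshold~\eqref{xabst:q_threshold} against the cubic residual. However, there is a genuine gap at the heart of your step two. The Chebyshev transformation produces the operator $T_t(\mathbf{H}_{\rm G})$, and the heralded amplitude on $\ket{\bm{x}}$ is the \emph{real} number $T_t(\tilde{f}(\bm{x}))=\cos\bigl(t\arccos\tilde{f}(\bm{x})\bigr)=\tfrac12\sum_{s=\pm1}e^{ist\arccos\tilde{f}(\bm{x})}$. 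Your notation $e^{\pm i\,2m\arcsin(\tilde f(\bm x))}$ hides the fact that you obtain the \emph{sum over both sign branches}, not a single coherent phase: the post-selected state is $\propto\sum_{\bm x}\cos(\cdots)\ket{\bm x}$, which has real amplitudes of non-unit modulus and is nowhere near the probing state $\ket{\Upsilon(q)}$, whose amplitudes are pure phases. No choice of $m$ or Taylor expansion of $\arcsin$ fixes this; it is a structural obstruction of the qubitized walk.

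The paper resolves this with an ingredient your proposal omits entirely: the Hamiltonian is enlarged to act on $G_p^M\times G_1$ by appending one extra ancilla qubit and a dummy $(M{+}1)$-th observable $O_{M+1}=\tfrac{\pi(1/4+4l)}{2^{p+q}}I$ whose expectation value is exactly known. This tags the sign branch $s=\pm1$ onto orthogonal Fourier-basis states $\mathrm{QFT}_{G_1}\ket{\pm 1/4}$ of the extra qubit, so that applying $\mathrm{QFT}_{G_1}^\dagger$ followed by a controlled $X^{\otimes pM}$ (which maps $\ket{\bm x}\mapsto\ket{-\bm x}$) coherently flips the $s=-1$ branch back onto the $s=+1$ branch and recovers the pure-phase state. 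The choice $t=2^{p+q+2}\sigma$ is also needed so that the $e^{ist\pi/2}$ offset from $\arccos$ equals $1$. Your heralding-probability argument inherits this problem: the constant $0.462$ in the paper is $\mathcal{N}_t^2/2$ with $\mathcal{N}_t\approx 1$, where the intrinsic factor $1/2$ comes precisely from the mean square of the cosine amplitude over the enlarged grid, not from a worst-case rotation angle. Without the sign-disentangling ancilla, your construction cannot produce $\ket{\Upsilon(q)}$ to within $1/12$, so the proof as proposed would fail.
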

The formal version of this lemma is Lemma~\ref{supple_lem:statepre_Grover} in Appendix~\ref{supple_sec;sp4iterativegradest}.
Here, we remark that the condition Eq.~\eqref{eq:cond4grad_est_spG} is trivially satisfied at the iteration step $q$ in Algorithm~\ref{alg:main} when all the previous gradient estimations are successfully performed, as discussed in Sec.~\ref{sec:main_III}.

\begin{proof}[Sketch of the proof]

This method uses eigenvalue transformation of (a slightly modified version of) the Hamiltonian $\mathbf{H}$ in Eq.~(\ref{eq:sp_HS_targetHamiltonian}) based on the Chebyshev polynomial of the first kind $T_{t}(x)$, that is, Lemma~\ref{supple_lem:QET_by_Tmx}.
Here, the degree $t$ is given by $t=2^{p+q+2}\sigma$.
We note that since the resulting operator $T_t[\mathbf{H}]$ is non-unitary, post-selection of ancilla qubits in the block encoding is required to implement $T_t[\mathbf{H}]$, meaning that the state preparation by this method is stochastic.
As mentioned in Sec.~\ref{supple_sec:preliminary}, the eigenvalue transformation based on the Chebyshev polynomials $T_t(x)$ can be considered as a special case in QSVT; that is, it has an analytical solution of circuit parameters (or phase sequence).
Therefore, in this alternative method, the circuit parameter finding is required only for the construction of the Hamiltonian $\mathbf{H}$; the total runtime for the classical computation is at most
$\mathcal{O}({\rm polylog}(1/\varepsilon))$ (more precisely, $\mathcal{O}({\rm poly}(\sqrt{M}(q+\log M)))$ at step $q\leq \lceil\log_2 (1/\varepsilon) \rceil$).

Now, we consider the action of the resulting operator $T_t[\mathbf{H}]$.
Applying $T_t[\mathbf{H}]$ to the uniform superposition state $\ket{+}^{\otimes pM}$, we obtain an unnormalized state proportional to
\begin{align}\label{eq:show_nonlineariy}
    \sum_{s=\pm 1}\sum_{\bm{x}\in G_p^M}e^{ist \arccos{\left[\tilde{f}(\bm{x})\right]}}\ket{\bm{x}}.
\end{align}
Therefore, we need to correct the phase of $\ket{\bm{x}}$ from $e^{\pm it\arccos{[\tilde{f}(\bm{x})]}}$ to $e^{it\tilde{f}(\bm{x})}$ in order to match the probing state $\ket{\Upsilon(q)}$.

As for the sign $s=\pm 1$ of the phase, we can correct it by slightly modifying the Hamiltonian $\mathbf{H}$ with the help of an additional 1-qubit ancilla system.
Let $\mathbf{H}_{\rm G}$ be a Hamiltonian defined as 
\begin{equation}\label{eq:Grover_sp_Hamiltonian}
    \mathbf{H}_{\rm G} := \sum_{(\bm{x},y)\in G_p^M\times G_1} \tilde{f}'(\bm{x},y)\ket{\bm{x},y}\bra{\bm{x},y},
\end{equation}
where $\ket{y}$ denotes a computational basis on the additional ancilla system and 
\begin{equation}\label{eq:gv_temp_eq}
    \tilde{f}'(\bm{x},y) \approx \frac{1}{{\sigma'}}\left(y\braket{O_{M+1}}+\sum_{j=1}^M x_j\braket{\tilde{O}^{(q)}_j}\right).
\end{equation}
Here, $O_{M+1}$ denotes a 1-qubit observable proportional to the identity, and $\sigma'=\mathcal{O}(\sqrt{M\log d})$ is the rescaling factor for block encoding.
The block encoding of $\mathbf{H}_{\rm G}$ can be constructed in a similar way to that of the Hamiltonian $\mathbf{H}$; the corresponding circuit has the same complexity as that of $\mathbf{H}$.
Then, applying $T_t[\mathbf{H}_{\rm G}]$ to the uniform superposition state $\ket{+}^{\otimes pM+1}$, we can show that the following holds:
(for simplicity, we write $O_{M+1}$ as $\tilde{O}^{(q)}_{M+1}$)
\begin{widetext}
\begin{align}\label{eq:spg_temp1}
    &\frac{1}{\mathcal{N}_t\sqrt{2^{pM}}}\sum_{\bm{x},y} T_{t}\left(\tilde{f}'(\bm{x},y)\right)\ket{\bm{x},y}\approx\frac{1}{2\sqrt{2^{pM}}}\sum_{s=\pm 1}\sum_{(\bm{x},x_{M+1})\in G_p^M\times G_1} e^{ist\left({\pi}/{2}-{(\sigma')}^{-1}\sum_{j=1}^{M+1}x_j\langle \tilde{O}^{(q)}_j\rangle\right)}\ket{\bm{x}}\ket{x_{M+1}}\notag\\[6pt]
    &~~~~~=\frac{1}{\sqrt{2}}\sum_{s=\pm 1}\left(\frac{1}{\sqrt{2^{pM}}}\sum_{\bm{x}\in G_p^M} e^{2\pi i 2^p \sum_{j=1}^M (sx_j) \pi^{-1}2^{q+1}\langle \tilde{O}^{(q)}_j\rangle}\ket{\bm{x}}\otimes \frac{1}{\sqrt{2}}\sum_{k\in G_1}e^{2\pi i2k(s/4)}\ket{k}\right),
\end{align}
\end{widetext}
where $\mathcal{N}_t$ is the normalization factor. In the second equality, we chose the constant factor of $O_{M+1}$ as
$$
O_{M+1}:=\frac{\pi(1/4+4l)}{2^{p+q}}I,
$$
where $l$ is an arbitrary integer satisfying $\|2^{q+1}O_{M+1}\|\leq 1$ and $I$ denotes the 1-qubit identity.
In the next paragraph, we explain the approximation of the first line in Eq.~\eqref{eq:spg_temp1} in detail.
Suppose that the approximation holds, then we can correct the sign $s=\pm1$ in Eq.~\eqref{eq:spg_temp1} and as a result obtain the probing state $\ket{\Upsilon(q)}$.
That is, by applying the inverse quantum Fourier transformation to the 1-qubit additional anciila system in the state Eq.~\eqref{eq:spg_temp1}, followed by controlled $X^{\otimes pM}$ gate (which flips the sign as $X^{\otimes pM}\ket{\bm{x}}=\ket{-\bm{x}}$), we obtain the probing state Eq.~\eqref{eq:targetstate} in the $pM$-qubit registers.

Now, we argue that the approximation in the first line of Eq.~\eqref{eq:spg_temp1} follows from the non-linearity of the phase function $\arccos{x}$ in Eq.~\eqref{eq:show_nonlineariy}.
Here, we recall that $\tilde{f}'(\bm{x},y)$ in Eq.~\eqref{eq:gv_temp_eq} is approximately given by the linear combination of $\braket{O_j-\tilde{u}^{(q)}_j\bm{1}}/\sigma'$.
Using this fact, we can show that $|\tilde{f}'(\bm{x},y)|$ scales as $\mathcal{O}(\sqrt{M}/(2^{q}\sigma'))$ for a large part of $G_p^M\times G_1$, from the assumption of $|\braket{\tilde{O}^{(q)}_j}|\leq 1/2^{q+1}$ and the Hoeffding's inequality for $(\bm{x},y)$ which is uniformly distributed.
Moreover, the error between the function $\arccos(x)$ and a linear function $\pi/2-x$ can be upper bounded by $\mathcal{O}(|x|^3)$.
As a result, the non-linearity of $\arccos{x}$ in Eq.~\eqref{eq:show_nonlineariy} can be suppressed for a large part of $G_p^M\times G_1$ as
$$
\left|\arccos{[\tilde{f}'(\bm{x},y)]}-\frac{\pi}{2}+\tilde{f}'(\bm{x},y)\right|=\mathcal{O}\left(\left(\frac{\sqrt{M}}{2^q\sigma'}\right)^{3}\right).
$$
In the approximation error of Eq.~\eqref{eq:spg_temp1}, the above error is amplified by the factor of $t=\mathcal{O}(2^q\sigma')$ and the amplified error decreases 
\begin{equation}\label{eq:approxerr_arccos}
    \mathcal{O}\left(\frac{\sqrt{M}}{4^q}\frac{M}{(\sigma')^2}\right)
\end{equation}
for $\sigma'=\mathcal{O}(\sqrt{M\log d})$.
Then, we prove the approximation error in Eq.~\eqref{eq:spg_temp1} is at most $1/12$ in Euclidean distance error under the condition that the amplified error in Eq.~\eqref{eq:approxerr_arccos} is sufficiently small, which leads to the assumption in Eq.~\eqref{xabst:q_threshold} of this lemma.
At the same time, we can evaluate the normalization factor as $|\mathcal{N}_t-1|\leq 2\sqrt{6\delta'}$, thereby proving the post-selection probability $\mathcal{N}_t^2/2>0.462$.

Finally, we mention the space and gate complexity of this Lemma.
As well as in the proof of Lemma~\ref{thm:sp_HS}, we can count the number of qubits in the block encoding of $\mathbf{H}_{\rm G}$, and it scales as $pM+\log M+\log d$ in total.
Thus, the method has the space complexity $\mathcal{O}(M+\log_2 d)$ that is also independent of the target precision $\varepsilon$.
As for the gate complexity, the main contribution comes from the eigenvalue transformation $\mathbf{H}_{\rm G}\mapsto T_{t}[\mathbf{H}_{\rm G}]$, which uses $t=\mathcal{O}(2^q\sigma')$ queries of the block encoding of $\mathbf{H}_{\rm G}$.
This indicates that the gate complexity of the method is the same as that of the method in Lemma~\ref{thm:sp_HS}.
\end{proof}

In this method, we require finding $\tilde{\mathcal{O}}(q\sqrt{M})$ circuit parameters, which is due to the singular value amplification to construct the Hamiltonian Eq.~\eqref{eq:Grover_sp_Hamiltonian}, while the Grover-like repetition itself does not require any phase angle computation.
Thus, the alternative method shows an exponential improvement of runtime in classical computation with respect to the iteration $q$ (or $\varepsilon$), compared to the method in Lemma~\ref{thm:sp_HS} and the previous method~\cite{van2023quantum}, if certain conditions on the iteration step $q$ are satisfied.
Recalling that the iteration step is upper bounded by $q_{\rm max}:=\lceil \log_2(1/\varepsilon)\rceil$ for the target root MSE $\varepsilon$, the alternative method can be used if there exists the iteration steps $q$ satisfying $q\leq {q_{\rm max}}$ and the inequality Eq.~\eqref{xabst:q_threshold}.
In Sec.~\ref{sec:numerical_exp}, we numerically investigate the range of iteration steps $q$ for the use of Lemma~\ref{thm:sp_Grover}, especially for the case of $M=\mathcal{O}((\log_2 d)^k)$ for some $k$.


Finally, we mention the success probability of Lemma~\ref{thm:sp_Grover}.
While we cannot prepare the probing state $\ket{\Upsilon(q)}$ deterministically in the alternative method, the success probability can be easily boosted by a constant number of repetitions of the circuit runs.
Therefore, the total query complexity remains unchanged when we use Lemma~\ref{thm:sp_Grover} instead of Lemma~\ref{thm:sp_HS} to prepare the probing state of Eq.~\eqref{eq:targetstate}.

\subsubsection{Applicability condition for the probing-state preparation by Grover-like repetition}\label{sec:numerical_exp}
Here, we investigate the threshold of iteration step $q$ in Lemma~\ref{thm:sp_Grover}.
We rewrite the threshold Eq.~\eqref{xabst:q_threshold} as $q^*:=\log_2(1/\varepsilon^*)$ by defining $\varepsilon^*$ as
\begin{align}\label{eq:def_vareps_star}
    1/\varepsilon^* &:= \sqrt{{C{\ln^{-3/2} (2d/\delta')}\left\lceil{\sqrt{2(M+1)\ln(2d/\delta')}}\right\rceil}},
\end{align}
where $C:=2^{3}\cdot 33^3/625$.
Since the iteration step $q$ runs from 0 to $q_{\rm max}:=\lceil \log_2(1/\varepsilon)\rceil$, we need that the threshold $1/\varepsilon^*$ is smaller than the inverse of the target root MSE $1/\varepsilon$ in order to use the state preparation by the Grover-like repetition.

\begin{figure}[tb]
    \centering
    \includegraphics[width=0.4\textwidth]{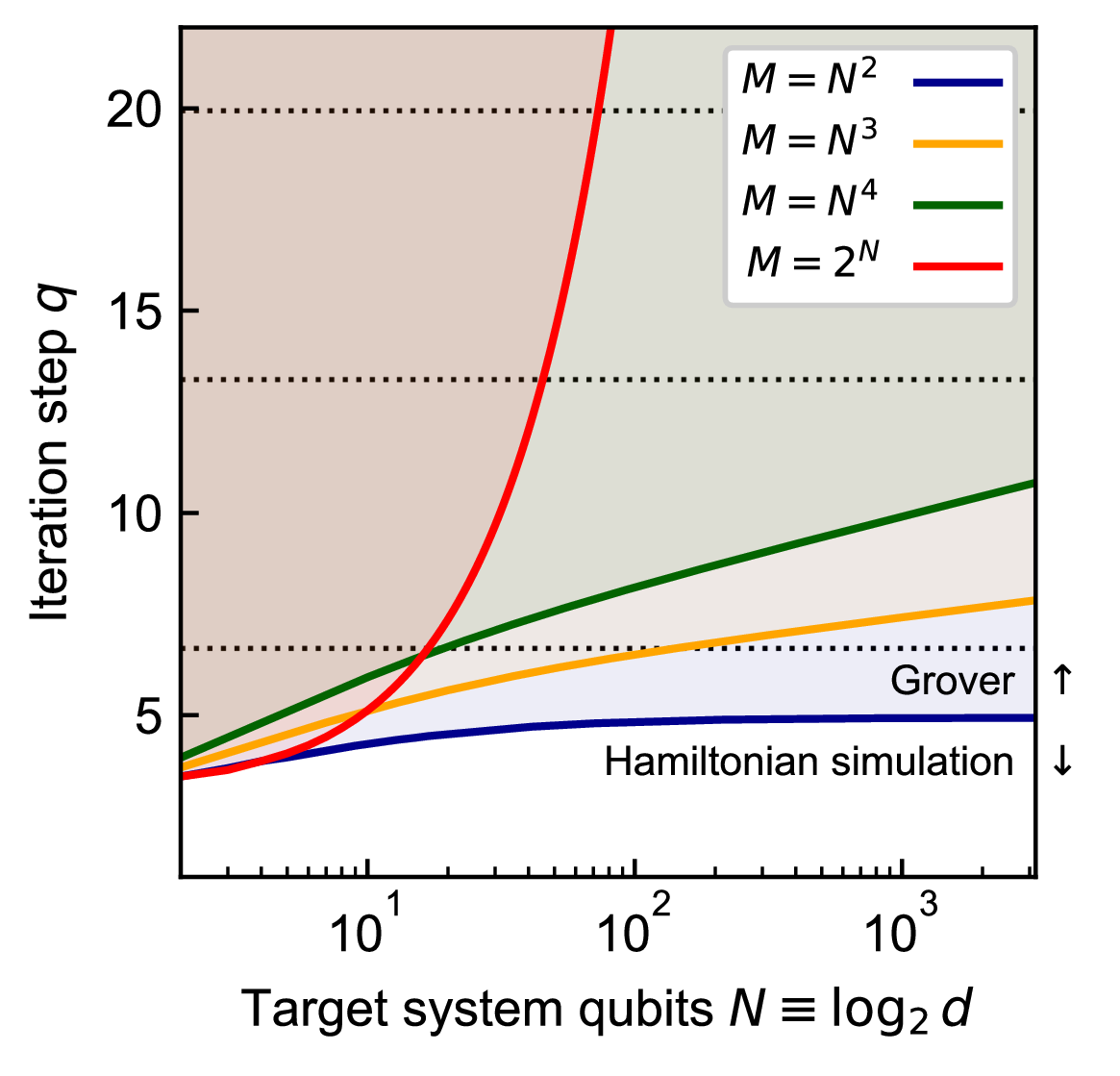}
    \caption{The condition for the Grover-based state preparation (Lemma~\ref{thm:sp_Grover}).
    The solid lines represent the threshold $q^*:=\log_2(1/\varepsilon^*)$ of iteration steps in Eq.~\eqref{eq:def_vareps_star} (or Eq.~\eqref{xabst:q_threshold}).
    When the iteration step $q$ exceeds the solid lines, we can use the Grover-like repetition to prepare the probing state Eq.~\eqref{eq:targetstate}, instead of using Hamiltonian simulation.
    The horizontal dotted lines represent $q_{\rm max}:=\lceil \log_2(1/\varepsilon)\rceil$ for $\varepsilon=10^{-2}$ (bottom), $10^{-4}$ (middle), and $10^{-6}$ (top), respectively.}
    \label{fig_main:threshold_Grover}
\end{figure}

To clarify this point, we here plot the threshold $1/\varepsilon^*$ for several cases $M=N^2,N^3,N^4,$ and $2^N$ under target system with the number of qubits $N\equiv \log_2 d$.
In Fig.~\ref{fig_main:threshold_Grover}, we confirm that the threshold $\log_2(1/\varepsilon^*)$ converges to some constant ($\sim 5$) in the case of $M=N^2$, which is consistent with Eq.~(\ref{eq:def_vareps_star}) because $(1/\varepsilon^*)^2=\mathcal{O}(\sqrt{M}/N)$ holds.
Thus, particularly in this case $M=\mathcal{O}(N^2)$, the method in Lemma~\ref{thm:sp_Grover} is available in a wide range of target precision $\varepsilon$ regardless of $N$.
On the other hand, if the number of observables scales as $2^N$, the threshold $\log_2(1/\varepsilon^*)$ increases linearly regarding $N$ (note that the horizontal axis in Fig.~\ref{fig_main:threshold_Grover} is logarithmic).
This means that in the case of $M=\mathcal{O}(2^N)$, the Grover-based method is applicable only for small-size systems, otherwise the target precision $\varepsilon$ is exponentially small with respect to $N$ (the base of the exponential is $2^{-1/4}\approx 0.841$).

Now, we focus on the case that we require more precise estimates as the number of observables $M$ increases.
From the definition of $\varepsilon^*$, if the desired precision $\varepsilon$ satisfy $\varepsilon^2 \ll {N/(C\sqrt{M})}$, then there exists iteration steps $q$ for the Grover-based method.
Specifically, considering the desired precision is given by $\varepsilon=c_{\rm mse}/\sqrt{M}\in (0,1/\sqrt{M})$ for some $c_{\rm mse}\in (0,1)$, we can evaluate the difference between the upper bound $q_{\rm max}$ of iteration steps and the threshold $q^*$ as follows:
\begin{equation}
    q_{\rm max}-q^* \geq \Omega\left(\log\left(N\sqrt{M}\right)\right).
\end{equation}
Thus, the range of iteration step $q$ such that the Grover-based method is available enlarges in this case, as $N$ or $M$ increases.

\section{Numerical simulation}\label{sec:num_simulation}
To evaluate the performance of observables estimation methods based on quantum gradient estimation, we here consider the following simple problem.
Let $U_{\psi}$ be a 1-qubit unitary that prepares a target state $\ket{\psi}$.
As for target observables $\{O_j\}_{j=1}^M$ with the spectral norm $\|O_j\|\leq 1$, we put the following assumptions for the sake of efficient classical simulation: $[O_i,O_j]=0$ holds for any pair of $i,j$, and the target state $\ket{\psi}$ is a common eigenstate of the observables $\{O_j\}_{j=1}^M$, that is, 
\begin{equation}
    O_j\ket{\psi}=g_j\ket{\psi} 
\end{equation}
holds for $j=1,2,...,M$ and eigenvalues $g_j\in [-1,1]$.
The goal is to estimate the expectation values $\langle O_j\rangle:=\bra{\psi}O_j\ket{\psi}=g_j$ with a root MSE $\varepsilon$ for all $j=1,2,...,M$.
In the following, we show the number of ancilla qubits and total queries to $U_{\psi}$ and its inverse in our method and the previous method~\cite{PhysRevLett.129.240501}, for solving the above problem with $M=30$.
For more details on the simulation setup, please see Appendix~\ref{apdx:num_sim_details}.

\begin{figure}[tb]
    \centering
    \includegraphics[width=0.45\textwidth]{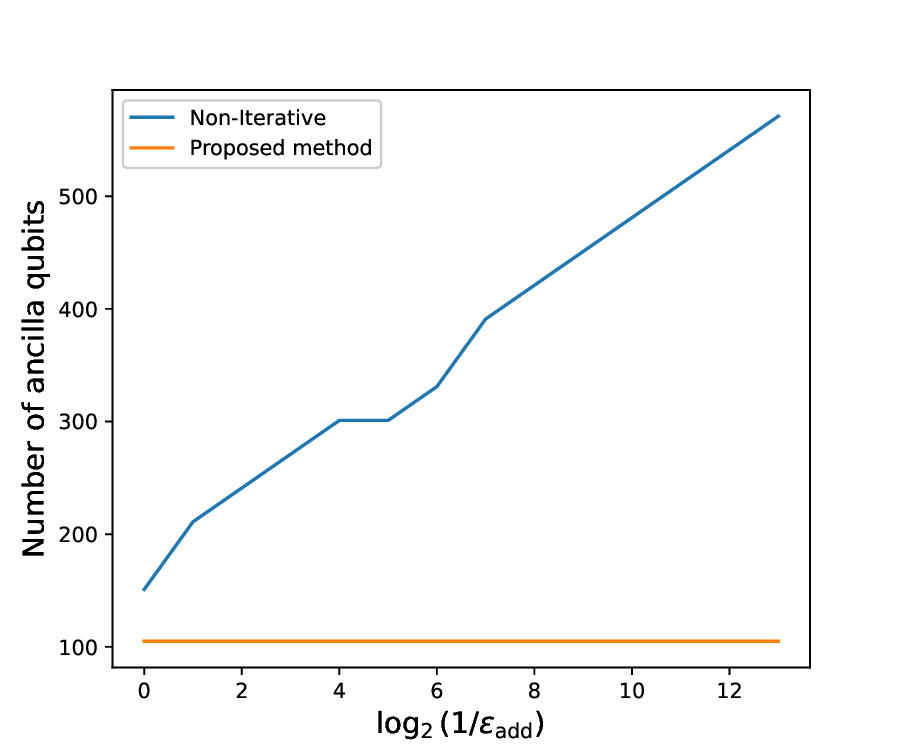}
    \caption{The number of additional ancilla qubits for $M=30$. 
    The ancilla count in the proposed adaptive method does not depend on target precision $\epsilon_{\rm add}$ (more precisely, target root MSE $\varepsilon$), unlike the previous one~\cite{PhysRevLett.129.240501}.}
    \label{fig:num_ancilla_qubits}
\end{figure}

\begin{figure}[tb]
    \centering
    \includegraphics[width=0.5\textwidth]{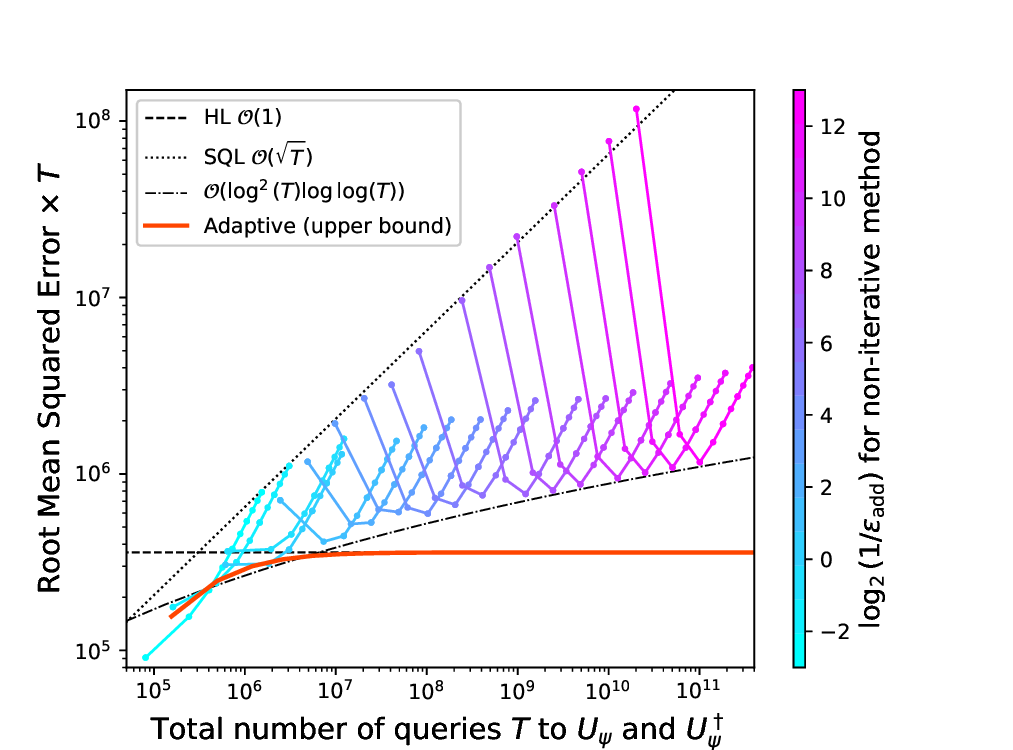}
    \caption{The comparison between the total number of queries and the root MSE.
    The orange line represents a theoretical upper bound in our method.
    The colored dots show the worst-case results of the previous non-iterative method~\cite{PhysRevLett.129.240501} among randomly-generated 26 sets of $\{g_j\}$, with various input pairs $(\varepsilon_{\rm add}, \delta)$.
    For each color, we fix $\varepsilon_{\rm add}$ as in the color bar and vary the failure probability $\delta=2^{-j}$ $(j=0,1,...)$, which gives the number of samples $N_{\rm med}=2\lceil\log(1/\delta)\rceil+1$ for a final median calculation.}
    \label{fig:rmse_vs_T_graph_rescaled}
\end{figure}

Figure~\ref{fig:num_ancilla_qubits} shows the number of additional ancilla qubits (except for target system qubits) in our method and the previous method~\cite{PhysRevLett.129.240501}.
We numerically calculate the number of ancilla qubits for the previous method, which is well approximated by 
\begin{equation}
    \lceil{\log_2(24/\varepsilon_{\rm add})}\rceil\times M.
\end{equation}
As for our method, the total number of qubits is explicitly calculated as
\begin{equation}
    3M+\lceil\log_2 M\rceil + \log_2 d + a + 9
\end{equation}
when we use the method based on Lemma~\ref{thm:sp_HS}; the multiplicative constant hidden in $\mathcal{O}(M)$ is exactly given by 3.
Note that in the figure, we ignore the factor $a$ for a block encoding of a single observable in our method and ancilla qubits that might be required for implementing Hamiltonian simulation for given observables in the previous method.
This figure clearly shows that our adaptive method has a constant ancilla overhead with respect to the target precision, unlike the non-iterative method.

Next, we show the relationship between the total queries to state preparation $U_\psi$ (and its inverse) and the resulting root MSE in Fig.~\ref{fig:rmse_vs_T_graph_rescaled}.
The number of total queries in our method is based on an upper bound of actually required queries, which is derived in the proof of Theorem~\ref{thm:main_query_complexity}.
As for the previous method which takes an additive error $\varepsilon_{\rm add}$ and a failure probability $\delta$ as input, we evaluate the resulting total queries and root MSEs by changing input pairs $(\varepsilon_{\rm add},\delta)$.
More precisely, we plot the dots with the same colors for a fixed $\varepsilon_{\rm add}=2^{-a}$ ($a=-3,-2,...,13$) and various $\delta$, by estimating the root MSE of a single observable $\braket{O_1}$ and calculating the corresponding total queries {$\tilde{T}_{\rm NonIter}$}.

We remark that $\tilde{T}_{\rm NonIter}$ in the previous method represents the actual number of queries divided by a factor ($>1$) that rapidly decreases and converges to a constant as $1/\varepsilon_{\rm add}$ increases, in order to highlight the asymptotic behavior.
When including the asymptotically negligible factor ($>1$), all the colored dots shift above our horizontal line that represents a theoretical upper bound; see Fig.~\ref{fig:rmse_vs_T_graph_norescaled} in Appendix~\ref{apdx:num_sim_details} for the worst-case and average-case results without any rescaling.

Our method achieves the Heisenberg-limited scaling in MSE; the line for our method is horizontal i.e., $\varepsilon T=\mathcal{O}(1)$ in Fig.~\ref{fig:rmse_vs_T_graph_rescaled}.
On the other hand, in the previous method, the sequence of minimal values of the same-colored dots shows the (asymptotic) scaling of $\Omega(\log^2 T\log\log T)$ for the total queries $T$.
This indicates that the Heisenberg-limited scaling with a logarithmic overhead is achieved by carefully choosing $\delta$ or $N_{\rm med}$ for a final median calculation. 
When we take a single shot outcome from a circuit as a final estimate ($N_{\rm med}=1$), the previous method has the scaling of SQL.

We remark that the previous method has $\mathcal{O}(\log(M/\delta))$ factor in the total query because it takes the median of independent $N_{\rm med}=\mathcal{O}(\log(M/\delta))$ samples as a final estimate.
Naively, in order to achieve MSE $\varepsilon^2$, we can take $\delta=\mathcal{O}(\varepsilon^2)$~\cite{lin2022heisenberg}.
Such a $N_{\rm med}=\mathcal{O}(\log(1/\varepsilon))$ (for a fixed $M$) factor might not be tight.
However, Fig.~\ref{fig:rmse_vs_T_graph_rescaled} also indicates that the logarithmic correction $N_{\rm med}=\mathcal{O}(\log(1/\varepsilon))$ is required, as follows.
Figure~\ref{fig:rmse_vs_T_graph_rescaled} indicates the following lower bound for the (rescaled) total number of queries: $\varepsilon \tilde{T}_{\rm NonIter}$ = $\Omega( \log^2(\tilde{T}_{\rm NonIter})\log\log(\tilde{T}_{\rm NonIter}))$, or 
$$\tilde{T}_{\rm NonIter} = \Omega({\varepsilon}^{-1} \log^2({1}/{\varepsilon})\log\log ({1}/{\varepsilon})).$$
Therefore, by comparing an upper bound $\tilde{T}_{\rm NonIter}=N_{\rm med}\times \mathcal{O}(\varepsilon_{\rm add}^{-1}\log(1/\varepsilon_{\rm add})\log\log(1/\varepsilon_{\rm add}))$~\cite{gilyen2019optimizing} with the above empirical lower bound, we consider $N_{\rm med}=\mathcal{O}(\log(1/\varepsilon))$ to be tight for a given root MSE $\varepsilon$.

Finally, these two algorithms cannot work in the query regime $<10^5$ under the current problem setup.
(Actually, the query limit is more severe in the previous method; see Fig.~\ref{fig:rmse_vs_T_graph_norescaled}.)
This is attributed to a large constant factor in the total number of queries; even if we want to obtain estimates within $\sim 1$ error, these algorithms require a much larger queries to run.

\section{Conclusion and discussion}\label{sec:conclusion}
In this work, we proved that the expectation values of multiple observables can be simultaneously estimated with quantum resources at the scaling of Heisenberg limit $1/\varepsilon$ for a root MSE $\varepsilon$.
At the same time, the total resource shows a nearly quadratic improvement with respect to the number of observables $M$, compared to the standard method for this task i.e., the (modified) quantum amplitude estimation~\cite{brassard2002quantum,knill2007optimal,rall2020quantum,suzuki2020amplitude,ZapataQAE2021,wada2022quantum}.
The resources are quantified by the total number of queries to a state preparation unitary $U_{\psi}$ whose complexity usually scales as the size of quantum system.
We prove these results by explicitly constructing an adaptive quantum algorithm.
The key idea of the proposed method is to prepare a quantum state with phases that encode the expectation values of observables, followed by the measurement for quantum gradient estimation~\cite{jordangradest, gilyen2019optimizing}.
Then, our method determines a single binary digit of the target observables from the measurement outcomes and update the quantum state so that the next measurement has sufficient resolution to determine the next fraction bit.
Importantly, our method can be considered as an extension of the iterative or adaptive phase estimation algorithms~\cite{kitaev1995quantum,higgins2007entanglement,higgins2009demonstrating,kimmel2015robust,dutkiewicz2022heisenberg} to the gradient estimation algorithm.

In addition to the Heisenebrg-limited scaling in MSE, the proposed method significantly reduces the requirement of actual implementation, compared to the state-of-the-art algorithms for multiple observables estimation~\cite{PhysRevLett.129.240501,van2023quantum}.
First, the adaptive nature of the proposed method allows us to reduce an additional space overhead to $\mathcal{O}(M)$ qubits, compared with $\mathcal{O}(M\log(1/\varepsilon))$ qubits in the previous methods.
This results in a significant improvement on the space overhead when precise estimates of observables are required.
Also, the proposed method can be executed in a parallel way during each iteration step, leading to reduction of the total execution time if we can use several quantum computers.

Next, in constructing quantum circuits of the proposed method, we provide two methods along with their quantum circuit diagrams; one is based on the optimal Hamiltonian simulation protocol and the other is based on Grover-like repetition.
While the former can be used in any iteration steps $q$, in the final step, 
it requires classical finding of $\mathcal{\tilde{O}}(\sqrt{M}/\varepsilon)$ circuit parameters for quantum signal processing (QSP)~\cite{PhysRevX.6.041067,Low2019hamiltonian}, as well as the previous methods.
As shown in the previous works~\cite{Haah2019product,chao2020finding,dong2021efficient,Ying2022stablefactorization,PhysRevResearch.6.L012007, yamamoto2024robust}, such a circuit parameter finding requires $\sim M/\varepsilon^2$ [sec] in classical computation time; the resulting numerical instability is one of the central problems in practical application of QSP or its extension, quantum singular value transformation (QSVT)~\cite{gilyen2019quantum}.
The alternative method based on Grover-like repetition can avoid this problem by partially using the QSP for Chebyshev polynomials whose circuit parameters are analytically derived.
As a result, the Grover-based method requires to tune only ${\mathcal{O}}(\sqrt{M}\log (M/\varepsilon))$ circuit parameters, under a specific condition on $q$.
This shows an exponential improvement in classical computation with respect to the target root MSE $\varepsilon$, thereby reducing the barrier of actual implementation significantly.
As for the condition of the alternative method, we numerically investigate it in various setups such as $M=\mathcal{O}(N^k)$ for some $k$ and the number of target system qubits $N$.

We leave several questions to be further investigated.
For the non-asymptotic advantage regarding $M$ of the proposed algorithm over the quantum amplitude estimation, it is crucial to minimize the constant factor that is hidden in the total queries $\mathcal{O}(\varepsilon^{-1}\sqrt{M}\log M)$ to the state preparation.
Specifically, the advantage of the proposed method over the individual applications of the (Heisenberg-limited) amplitude estimation, whose total queries are $\mathcal{O}(M/\varepsilon)$, is valid, 
when the constant factor $\mathcal{C}$ (divided by the constant factor of the amplitude estimation) of the proposed method is smaller than $\sqrt{M}/\log M$.
Therefore, it remains open to tightly evaluate $\mathcal{C}$, or equivalently  the crossover regarding $M$, and to optimize the proposed method in order to minimize its constant factor.
Also, the gate complexity for the other components such as block-encoded observables would be improved when we use initial knowledge of the target state and observables.

From the aspect of the proposed method being an adaptive version of quantum gradient estimation, it is valuable to discuss the application of our method for the tasks in which the original gradient estimation algorithm offers quantum speedup.
This certainly reduces the hardness of their actual implementation on early fault-tolerant quantum computers. 
Furthermore, exploring the speedup regarding the number of target parameters of the proposed method in the field of multiparameter quantum metrology is another interesting direction for future work.


\section*{Acknowledgements}
We thank fruitful discussions with Arjan Cornelissen, Yuki Koizumi, and Soichiro Morisaki.
K.W. was supported by JSPS KAKENHI Grant Number JP 24KJ1963 and JST SPRING, Grant Number JPMJSP2123.
N. Yamamoto was supported by MEXT Quantum Leap Flagship Program Grant Number JPMXS0118067285 and JPMXS0120319794, and JSPS KAKENHI Grant Number 20H05966. 
N. Yoshioka wishes to thank by 
JST CREST Grant Number JPMJCR23I4,  
JST PRESTO No. JPMJPR2119, 
JST ERATO Grant Number  JPMJER2302,  
JST Grant Number JPMJPF2221, 
and the support from IBM Quantum. 

\bibliography{ref}

\appendix
\newpage

\section{Quantum gradient estimation}\label{supple_sec:qgradest}
\subsection{Review of Jordan's algorithm}
For a given blackbox of a real scalar function $f(\bm{x})$ on $\bm{x}\in\mathbb{R}^{M}$, the quantum algorithm introduced by Stephen P. Jordan~\cite{jordangradest} can efficiently estimate the $M$-dimensional gradient of $\nabla f(\bm{0})$, with use of less queries to the blackbox compared to classical case.
We here review this algorithm based on its analysis by Ref.~\cite{gilyen2019optimizing}.
Here, the target point $\bm{x}=\bm{0}$ can be taken as $\bm{x}\neq \bm{0}$ by trivially redefining $f(\bm{x})$.
The quantum algorithm consists of three steps: (i) prepare a superposition state of grid points $\bm{x}$ around the target point $\bm{x}=\bm{0}$, (ii) apply the blackbox of the target function $f$ to the state and evaluate a phase $e^{if(\bm{x})}$ at each grid point, and (iii) measure the resulting state by the computational basis labeled by the grid points after the inverse quantum Fourier transformations.

To begin with, we define a set of grid points $G_{p}^M$ around $\bm{x}=\bm{0}$ to evaluate $f$, as follows:
\begin{equation}\label{supple_eq:G_p}
    G_{p}^M:=\left\{\frac{\mu}{2^p}-\frac12 +\frac{1}{2^{p+1}}:\mu\in\{0,1,\cdots,2^p-1\}\right\}^M,
\end{equation}
where $p$ denotes a positive integer which specifies the estimation precision later.
Note that because there is a bijection map
\begin{equation}\label{eq:varphi_bijection}
    \varphi:\mu\mapsto \varphi(\mu):=\frac{\mu}{2^p}-\frac12 +\frac{1}{2^{p+1}}\in G_p,
\end{equation}
we always label the $p$-qubit computational basis $\ket{\mu}$ by the corresponding element $x\equiv\phi(\mu)\in G_p$.
Then, applying the Hadamard gates to initialized $pM$-qubit registers, we have the superposition state of the grid points $\bm{x}:=(x_1,...,x_M)\in G^M_{p}$:
\begin{align}\label{supple_eq:suppos}
    &\frac{1}{\sqrt{2^{pM}}}\sum_{\bm{x}\in G_p^M} \ket{\bm{x}}\notag\\
    &~~~=\frac{1}{\sqrt{2^{pM}}}\sum_{(x_1,...,x_M)\in G_p^M} \ket{x_1}\ket{x_2}\cdots \ket{x_M}.
\end{align}
Here, each $\ket{x_j}$ contains $p$-qubit registers.
Next, we assume access to the \textit{phase oracle} $O_{f}$ for $f(\bm{x})$ 
defined as~\cite{gilyen2019optimizing}
\begin{equation}
    O_f:\ket{\bm{x}}\to e^{if(\bm{x})}\ket{\bm{x}}~~\mbox{for~all}~~\bm{x}\in G_p^M.
\end{equation}
Note that Ref.~\cite{gilyen2019optimizing} provides a generic analysis of Jordan's algorithm based on the phase (and probability) oracle, while the original paper~\cite{jordangradest} assumes another powerful oracle (i.e., $\eta$-accurate Binary oracle for $f$ that outputs $f(\bm{x})$ binarily with accuracy $\eta$ from an input $\bm{x}$).

Now, to clarify the basic idea of Jordan's algorithm, we consider a special case where the target function $f$ is an affine linear function, i.e., for a target gradient vector $\bm{g}\in [-1/3,1/3]^M$ and some constant $f_{0}\in \mathbb{R}$, $f(\bm{x})=\bm{g}\cdot \bm{x}+f_{0}$.
For such linear functions, the application of the (modified) phase oracle $O_f^{2\pi 2^{p}}$ to the input state Eq.~(\ref{supple_eq:suppos}) yields
\begin{align}\label{supple_eq:product_QPE}
    &\frac{1}{\sqrt{2^{pM}}}\sum_{\bm{x}\in G_p^M} e^{2\pi i 2^p f(\bm{x})}\ket{\bm{x}}\notag\\
    &~~~=e^{2\pi i 2^p f_0}\cdot\bigotimes_{j=1}^M\left[\frac{1}{\sqrt{2^{p}}}\sum_{x_j\in G_p} e^{2\pi i2^p x_j g_j }\ket{x_j}\right].
\end{align}
Then, we apply 
a slightly modified version of the inverse quantum Fourier transformation ${\rm QFT}^\dagger_{G_p}$ over the $p$-qubit system: for all $x\in G_p$, 
\begin{equation}
    {\rm QFT}_{G_p}:\ket{x}\mapsto \frac{1}{\sqrt{2^p}}\sum_{k\in G_p} e^{2\pi i 2^p xk}\ket{k}.
\end{equation}
Note that ${\rm QFT}_{G_p}$ is the same as the usual $p$-qubit QFT up to conjugation with a tensor product of $p$ single-qubit gates~\cite{gilyen2019optimizing}.
The statistics of the computational basis measurement to Eq.~(\ref{supple_eq:product_QPE}) with ${\rm QFT}^\dagger_{G_p}$ are similar to that of the output of standard quantum phase estimation algorithm~\cite{nielsen2010quantum}.
More precisely, let $(k_1,k_2,...,k_M)\in G_p^M$ be a result of the computational basis measurement, then the following holds~\cite{gilyen2019optimizing}:
\begin{equation}\label{supple_eq:ideal_gradest_stat}
    {\rm Pr}\left[\left|k_j-g_j\right|>\frac{3}{2^p}\right]\leq \frac{1}{4}~~\mbox{for~every}~~i=1,2,\cdots,M.
\end{equation}

In the above description, we assumed that the target function is affine linear. 
Moreover, if a target function is very close to some affine linear function and thereby the equality of Eq.~(\ref{supple_eq:product_QPE}) approximately holds, we can prove similar results to Eq.~(\ref{supple_eq:ideal_gradest_stat}), as follows.
\begin{lem}
[\cite{gilyen2019optimizing,van2023quantum}]\label{supple_lem:original_gradest}
    Let $\bm{g}\in \mathbb{R}^M$ such that $\|\bm{g}\|_{\infty}\leq 1/3$. 
    Suppose we can prepare the quantum state $\ket{\Psi}$ that is $1/12$-close in the Euclidean distance to the following state
    \begin{equation}\label{supple_eq:grad_est_approx_target}
        \left({\rm QFT}^\dagger_{G_p}\right)^{\otimes M}\frac{1}{\sqrt{2^{pM}}}\sum_{\bm{x}\in G_p^M} e^{2\pi i 2^p \bm{g}\cdot\bm{x}}\ket{\bm{x}}.
    \end{equation}
    Then, measuring the quantum state $\ket{\Psi}$ in the computational basis, we obtain a coordinate-wise estimate $(k_1,...,k_M)\in G_{p}^M$ satisfying
    $$
    {\rm Pr}\left[\left|k_j-g_j\right|>\frac{3}{2^p}\right]\leq \frac{1}{3}~~\mbox{for~every}~~j=1,2,\cdots,M.
    $$
\end{lem}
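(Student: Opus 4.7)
My plan is to reduce the approximate case to the exact case by a standard trace-distance argument, exploiting the fact that the ideal failure probability $1/4$ in Eq.~\eqref{supple_eq:ideal_gradest_stat} leaves enough slack to absorb a $1/12$ perturbation and still land below $1/3$.

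First, I would denote by $\ket{\Psi_{\rm ideal}}$ the exact state in Eq.~\eqref{supple_eq:grad_est_approx_target}. This state is (up to the unitary $({\rm QFT}^\dagger_{G_p})^{\otimes M}$) precisely the product state that arises from applying $O_f^{2\pi 2^p}$ to a uniform superposition on $G_p^M$ for the affine linear function $f(\bm{x})=\bm{g}\cdot\bm{x}$, so the analysis preceding Eq.~\eqref{supple_eq:ideal_gradest_stat} applies verbatim. Consequently, a computational-basis measurement of $\ket{\Psi_{\rm ideal}}$ yields an outcome $(k_1,\dots,k_M)\in G_p^M$ with
\begin{equation*}
    \mathrm{Pr}_{\ket{\Psi_{\rm ideal}}}\!\left[\,|k_j-g_j|>3/2^p\,\right]\leq \tfrac14
\end{equation*}
for every coordinate $j$.

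Next, I would transfer this bound to the actual state $\ket{\Psi}$ by controlling how much measurement statistics can drift under a small Euclidean perturbation. Writing $\eta:=\|\ket{\Psi}-\ket{\Psi_{\rm ideal}}\|\leq 1/12$, the trace distance between the two pure states satisfies
\begin{equation*}
    D\bigl(\ket{\Psi}\!\bra{\Psi},\ket{\Psi_{\rm ideal}}\!\bra{\Psi_{\rm ideal}}\bigr)=\sqrt{1-|\langle\Psi|\Psi_{\rm ideal}\rangle|^2}\leq \eta,
\end{equation*}
using the elementary bound $|\langle\Psi|\Psi_{\rm ideal}\rangle|\geq 1-\eta^2/2$. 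Since trace distance upper-bounds the total-variation distance between the induced classical distributions over \emph{any} POVM outcome set, the probability of the per-coordinate failure event $E_j:=\{|k_j-g_j|>3/2^p\}$ changes by at most $\eta$:
\begin{equation*}
    \mathrm{Pr}_{\ket{\Psi}}[E_j]\leq \mathrm{Pr}_{\ket{\Psi_{\rm ideal}}}[E_j]+\eta \leq \tfrac14 + \tfrac{1}{12}=\tfrac13.
\end{equation*}
This yields the claimed bound for every $j=1,\dots,M$.

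There is no real obstacle here; the only subtlety is to invoke the per-coordinate statement of Eq.~\eqref{supple_eq:ideal_gradest_stat} rather than a union bound over all $M$ coordinates (which would worsen the constant), and to ensure the Euclidean-to-trace-distance passage is stated for pure states so that the factor is $1$ rather than $1/2$. The slack $1/3-1/4=1/12$ is exactly matched by the allowed approximation error, which explains the specific $1/12$ constant appearing in the hypothesis.
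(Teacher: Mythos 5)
Your proof is correct and follows essentially the same route as the paper: apply the exact-case bound of Eq.~\eqref{supple_eq:ideal_gradest_stat} to the ideal state, convert the $1/12$ Euclidean distance into a trace-distance bound, and use the fact that trace distance dominates the total-variation distance of measurement statistics to degrade the failure probability from $1/4$ to at most $1/4+1/12=1/3$. The only difference is that you spell out the pure-state Euclidean-to-trace-distance inequality explicitly, which the paper leaves implicit.
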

\begin{proof}
    By the closeness assumption, the trace distance between $\ket{\Psi}$ and Eq.~(\ref{supple_eq:grad_est_approx_target}) is upper bounded by 1/12.
    Since the trace distance provides the upper bound of the total variation distance of two probability measures defined by the two quantum states and an arbitrary common POVM (Theorem 9.1~\cite{nielsen2010quantum}), the probability~\eqref{supple_eq:ideal_gradest_stat} is modified at most 1/12 by measuring $\ket{\Psi}$ instead of Eq.~(\ref{supple_eq:grad_est_approx_target}).
\end{proof}
\noindent
For more regular functions, Ref.~\cite{gilyen2019optimizing} provides an improved version of Jordan's algorithm, using the higher-order finite-difference formulas to enhance the linearity of the target functions in a certain domain.

\subsection{Heisenberg limit}
The phase oracle $O_f$ for a smooth real function $f(\bm{x})$ on $\bm{x}\in \mathbb{R}^{M}$ with gradient $\bm{g}:=\nabla f(\bm{0})$ corresponds to the time evolution operator (with unit time) generated by the following Hamiltonian
\begin{equation}
    H(\bm{g})+V,~~H(\bm{g}):=\sum_{\bm{x}\in G_p^M}\left(\bm{g}\cdot\bm{x}\right)\ket{\bm{x}}\bra{\bm{x}},
\end{equation}
where $V$ is a $\bm{g}$-independent Hermitian operator that commutes with $H(\bm{g})$.
Thus, the estimation of $\bm{g}$ from $O_f$ with appropriate choice of an input state and a final measurement can be considered as one of the (multi-parameter) unitary estimation problems.
Such unitary estimation problems are widely studied in the field of quantum metrology, and one of the major topics in this field is to devise quantum estimation protocols that can achieve the higher precision than any classical protocol~\cite{Qmetrology2006,giovannetti2011advances}.
In particular, the Heisenberg limit provides a fundamental bound on the estimation precision (in terms of root mean squared error) under given resources, and it is typically given as $1/t$ where $t$ denotes the total number of resources to be used~\cite{giovannetti2011advances, PicorrectedHL2020}.
Here, we derive the Heisenberg limit in quantum gradient estimation.

To derive the fundamental bound, we here consider a general adaptive estimation protocol~\cite{Qmetrology2006}, which is illustrated in Fig.~\ref{supple_fig:adaptive_protocol}.
In the general protocol, we first prepare an arbitrary input state $\rho_{\rm in}$ between the system on which $O_f$ acts and an ancilla system with an arbitrary number of qubits.
Then, we apply the following sequence of quantum operations to the input state:
\begin{align}
    U_t(O_f\otimes \bm{1}) \cdots U_2(O_f\otimes \bm{1})U_1(O_f\otimes \bm{1}),
\end{align}
where $\bm{1}$ denotes the identity and $U_i~(i=1,2,...,t)$ are $\bm{g}$-independent arbitrary unitary operators acting on the whole system.
This sequence contains $t$ uses of the phase oracle in total.
Finally, the output state is measured by an arbitrary POVM $\{M_{\bm{g}}\}$, and we write the corresponding single-shot estimator for ${g}_j$ as $\hat{g}_j$.
Because the interaction $U_i$ to the ancilla systems can extract the information during the estimation process, this protocol also includes adaptive techniques.
Obviously, the original method for gradient estimation is captured in this general protocol, by setting $\rho_{\rm in}$ to the uniform superposition state without ancilla qubits, $U_i=\bm{1}$ for all $i$, and $M_{\bm{g}}$ to the computational basis measurement with the inverse quantum Fourier transformation.
Note that the $\bm{g}$-independent part of $O_f$ can be included in each $U_j$, and therefore in the following, we consider $O_f=e^{iH(\bm{g})}$ without loss of generality.

\begin{figure*}[tb]
    \centering
    \includegraphics[width=0.65\textwidth]{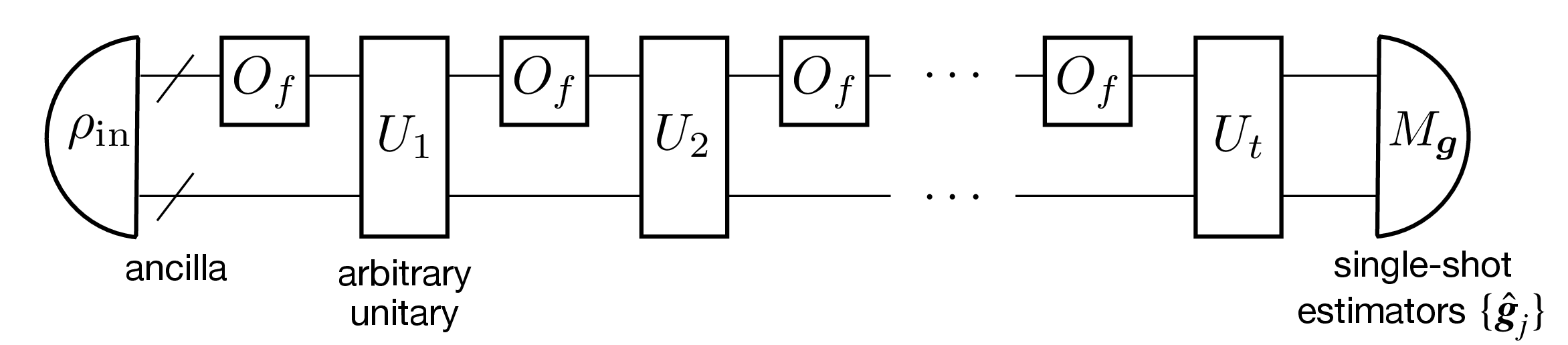}
    \caption{A general adaptive protocol for estimating $\bm{g}$ in $O_f$.}
    \label{supple_fig:adaptive_protocol}
\end{figure*}

We here remark that
$e^{iH(\bm{g})}$ can be written as a tensor product of time evolutions 
\begin{equation}\label{supple_eq:prod_time_evolutions}
    \bigotimes_{j=1}^M \left(\sum_{x_j\in G_p}e^{ ig_jx_j}\ket{x_j}\bra{x_j}\right)
\end{equation} 
by the generators $H_j:=\sum_{x_j}x_j\ket{x_j}\bra{x_j}$ with the bounded spectral norm $\|H_j\|\leq 1/2$.
Therefore, the estimation of $\bm{g}=(g_1,...,g_M)$ is essentially equal to the estimation of evolving time $g_j$ in each $e^{ig_j H_j}$.
Now, we are ready to prove the Heisenberg limit in quantum gradient estimation.
\begin{thm}[Heisenberg limit in quantum gradient estimation]\label{supple_thm:HL_in_GradEst}
    Suppose we have access to the phase oracle $O_f$ for a smooth real function $f$ on $\mathbb{R}^M$ with gradient $\bm{g}:=\nabla f(\bm{0})$ and $g_j$ belongs to some finite interval $\Theta \subset\mathbb{R}$ for all $j$.
    Then, the single-shot estimators $\hat{g}_j$, obtained from the general adaptive estimation protocol in Fig.~\ref{supple_fig:adaptive_protocol} with $t$ uses of the phase oracle $O_f$, satisfy the following inequality:
    \begin{equation}
        \min_{j=1,2,\cdots,M}~\widehat{{\rm MSE}}\left[\hat{g}_j\right] \geq \frac{\pi^2}{t^2}~~\mbox{as}~~t\to \infty,
    \end{equation}
    where $\widehat{{\rm MSE}}[\hat{g}_j]$ denotes the supremum of the mean squared error (MSE) for $\hat{g_j}$ over the known interval $\Theta$, i.e., the maximum value of MSE over all possible target values $g_j\in \Theta$.
\end{thm}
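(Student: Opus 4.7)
The plan is to reduce the multiparameter gradient estimation problem to a single-parameter phase estimation problem, and then invoke the tight asymptotic Heisenberg bound of Ref.~\cite{PicorrectedHL2020}. The starting point is the tensor-product decomposition in Eq.~\eqref{supple_eq:prod_time_evolutions}, namely $e^{iH(\bm{g})}=\bigotimes_{i=1}^M e^{ig_i H_i}$ with generators $H_i$ of spectral norm $\|H_i\|\leq 1/2$.

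First, I would fix an arbitrary coordinate $j$ and reduce to the restricted problem in which all $g_i$ with $i\neq j$ are \emph{known}. The worst-case MSE of $\hat{g}_j$ over $\Theta^M$ is at least the worst-case MSE obtained when $(g_i)_{i\neq j}$ is held at any fixed value in $\Theta^{M-1}$, so any lower bound for the restricted problem descends to the full problem. In the restricted problem the factors $e^{ig_i H_i}$ for $i\neq j$ are fixed known unitaries and can be absorbed into the $\bm{g}$-independent control operations $U_k$ of Fig.~\ref{supple_fig:adaptive_protocol}. Consequently every adaptive $t$-query protocol for $O_f$ becomes, after this absorption, an adaptive $t$-query protocol for the single-parameter unitary $e^{ig_j H_j}$ (with an arbitrary ancilla), keeping the query count exactly $t$ and preserving the adaptive feedback structure.

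Second, I would apply the $\pi$-corrected Heisenberg limit. Since $\|H_j\|\leq 1/2$, the eigenvalue spread $r_j:=\lambda_{\max}(H_j)-\lambda_{\min}(H_j)$ is at most $1$. The tight asymptotic bound of Ref.~\cite{PicorrectedHL2020} for adaptive phase estimation with generator spread $r_j$ yields
\[
\widehat{\rm MSE}[\hat{g}_j] \;\geq\; \frac{\pi^2}{r_j^{\,2}\, t^2}\bigl(1+o(1)\bigr) \;\geq\; \frac{\pi^2}{t^2}\bigl(1+o(1)\bigr), \qquad t\to\infty.
\]
Because $j$ was arbitrary, this inequality holds simultaneously for every coordinate, and in particular for the minimizing index, which is exactly the claim of the theorem. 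The overall constant $\pi^2$ is imported from Ref.~\cite{PicorrectedHL2020} without further tuning; no quantum Fisher information or Cram\'er--Rao computation is needed, which is important because the uniform $r$-dependence on the spectrum of $H_j$ is what makes the single-parameter reduction yield the $r$-independent bound $\pi^2/t^2$.

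The main obstacle is a clean execution of the reduction step in the fully adaptive and ancilla-assisted setting. One must verify that (i) the supremum of the MSE over $\Theta^M$ dominates the supremum over the one-parameter slice obtained by fixing $(g_i)_{i\neq j}$, which is routine, and (ii) absorbing the known tensor factors $\bigotimes_{i\neq j}e^{ig_i H_i}$ into the interleaved $U_k$ does not silently introduce extra queries to $e^{ig_j H_j}$; this requires rewriting the sequential adaptive protocol while tracking where each of the $t$ oracle invocations comes from. Both are essentially bookkeeping tasks, after which the theorem follows from the black-box application of the asymptotic lower bound of Ref.~\cite{PicorrectedHL2020}.
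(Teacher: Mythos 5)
Your proposal is correct and follows essentially the same route as the paper: both reduce to single-parameter estimation of $g_j$ in $e^{ig_jH_j}$ (using the tensor-product decomposition of $O_f$ and the fact that the remaining factors can be treated as $\bm{g}$-independent), and both then invoke the asymptotic $\pi$-corrected Heisenberg limit of Ref.~\cite{PicorrectedHL2020} together with the generator spread $\Delta G_p\leq 1$ to obtain $\pi^2/t^2$. Your write-up merely makes the reduction step (absorbing the known factors into the adaptive controls $U_k$ while preserving the query count) more explicit than the paper's contradiction-style phrasing.
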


\begin{proof}
    We prove this theorem by contradiction.
    Suppose there is a protocol such that for an index $j\in \{1,...,M\}$, $\widehat{{\rm MSE}}\left[\hat{g}_j\right]<{\pi^2}/(t^2\Delta G^2_p)$ holds as $t$ goes to $\infty$, where $\Delta G_p$ denotes the difference between extreme eigenvalues of $ H_j=\sum_{x_j\in G_p} x_j\ket{x_j}\bra{x_j}$ i.e., $\max{G_p}-\min{G_p}\leq 1$.
    Now, we recall that the phase oracle $O_f$ can be considered as the product of $e^{ig_jH_j}$ without loss of generality.
    Then, focusing on the estimation of $g_j$ in the single-parameter unitary $e^{ig_jH_j}$, it can be shown that there is no protocol using the total $t$ uses of $e^{ig_jH_j}$ such that $\widehat{{\rm MSE}}[\hat{g}_j]$ can be less than ${\pi^2}/{t^2\Delta G^2_p}$ as $t$ increases~\cite{PicorrectedHL2020,gorecki2023heisenberg}, which contradicts the above assumption.
    Therefore, we conclude that
    $$
    \widehat{{\rm MSE}}[\hat{g}_j]\geq \frac{\pi^2}{t^2\Delta G_p^2}\geq  \frac{\pi^2}{t^2}~~\mbox{as}~~t\to \infty$$ 
    holds for all $j$, which completes the proof of Theorem~\ref{supple_thm:HL_in_GradEst}.    
\end{proof}

As for the achievability of the lower bound, the same methodology as the phase estimation with the minimum phase uncertainty~\cite{PhysRevA.54.4564,PhysRevLett.98.090501,PhysRevA.90.062313,PhysRevX.8.041015} can be applied to design an optimal protocol, because the gradient estimation can be considered as separable applications of the phase estimation protocol.
Hereafter, we describe an optimal protocol of gradient estimation that is applicable to an affine linear function $f(\bm{x})=\bm{x}\cdot \bm{g}$ without any adaptive operations.
Let us consider $M$ tensor products of an ansatz state $\sum_{x\in G_p} a_x \ket{x}$, where $a_x$ is a real amplitude, as an initial input state instead of the equal superposition state Eq.~(\ref{supple_eq:suppos}).
Then, applying the phase oracle $O_f$ $2^{p+1}$ times to the input state and performing the computational basis measurement after ${\rm QFT}_{G_p}^{\dagger}$s, we obtain the measurement result $k_j$ with the probability
\begin{equation}
    {\rm Pr}\left[k_j\right] = \left|\sum_{x_j=0}^{2^p-1} \frac{a_{x_j}}{\sqrt{2^p}} e^{2\pi i 2^p x_j(g'_j-k_j)}\right|^2,
\end{equation}
where $g_j':=g_j/\pi$.
For simplicity, we drop the subscript $j$ in the following.
Taking the measurement outcome $k$ as an estimator of $g'$, we can approximately evaluate the MSE of $k$ as follows:
\begin{equation}\label{supple_eq:collet_var}
    \mathbb{E}\left[(k-g')^2\right] \simeq \frac{1}{2\pi^2}\left(1-\mathbb{E}\left[\cos{\left(2\pi(k-g')\right)}\right]\right),
\end{equation}
where we can check the right hand side is close to the MSE when $(k-g')$ is small by the Taylor expansion.
As shown in Ref.~\cite{Ji2008estqchannel}, the expectation of cosine can be analytically calculated:
\begin{align}
    &\mathbb{E}\left[\cos{\left(2\pi(k-g')\right)}\right]\notag\\
    &= \sum_{k=1}^{2^p-1} a_{k-1}a_k+a_0a_{2^p-1}\cos{\left[2\pi 2^p \left(g'+\frac{1}{2}-\frac{1}{2^{p+1}}\right)\right]}.
\end{align}
Here, $a_k$ is equivalent to $a_{\varphi(k)}$ for the bijection map defined in Eq.~(\ref{eq:varphi_bijection}).
Therefore, if we take $a_0=0$, then the right hand side of Eq.~(\ref{supple_eq:collet_var}) can be written as a quadratic form $(1/2\pi^{2})\sum_{k,l=1}^{2^{p}-1}A_{kl}a_ka_l$ for the symmetric matrix $A$:
\begin{equation}
    A=\begin{pmatrix}
      1&-1/2&& &\\
      -1/2&1&-1/2&&\\
      &-1/2&1&\ddots&\\
      &&\ddots&\ddots&\\
    \end{pmatrix}.
\end{equation}
The minimum eigenvalue and the corresponding eigenvector of $A$ is known as 
\begin{equation}\label{supple_eq:lowest_eigVal}
    2\sin^2{{\frac{\pi}{2^{p+1}}}}
\end{equation}
and 
\begin{equation}\label{supple_eq:qpe_opt_input}
   a_k = \sqrt{\frac{2}{2^{p}}} \sin \frac{k\pi}{2^{p}},
\end{equation}
respectively.
Consequently, combining Eqs.~(\ref{supple_eq:collet_var}) and (\ref{supple_eq:lowest_eigVal}), we can confirm that this protocol achieves the lower bound of Theorem~\ref{supple_thm:HL_in_GradEst} when $t=2^{p+1}$ is sufficiently large, as
\begin{equation}
    \mathbb{E}\left[(\pi k-g)^2\right] \simeq \left(\frac{\pi}{2^{p+1}}\right)^2.
\end{equation}

Finally, we remark that the original implementation of the gradient estimation cannot achieve the quadratic speedup regarding total queries $t$, i.e., $\varepsilon^2 = \mathcal{O}(1/t^2)$ for a MSE $\varepsilon^2$, even in the target function is an affine linear.
This can be checked from Eq.~(\ref{supple_eq:collet_var}); the uniform superposition state $a_{x}=1/\sqrt{2^p}$ for all $x$ gives 
\begin{align}
    \min_{g'}\mathbb{E}\left[\cos{\left(2\pi(k-g')\right)}\right] 
    =1-\frac{2}{2^p},
\end{align}
and thus, $\max_{g} \mathbb{E}\left[(\pi k-g)^2\right]=\mathcal{O}(1/t)$, which is the same scaling as the shot noise or the standard quantum limit (SQL).
Note that this result is consistent with that the quantum phase estimation algorithm fails to achieve the Heisenberg limit when the input ancilla qubits are initialized as the uniform superposition state, unlike the optimal entangled state as in Eq.~(\ref{supple_eq:qpe_opt_input})~\cite{higgins2007entanglement,higgins2009demonstrating,PhysRevA.80.052114}.
We remark that the proposed adaptive method in this work performs gradient estimation at the same scaling as the Heisenberg limit $\mathcal{O}(1/\varepsilon)$ even when the target function is \textit{approximately} affine linear.

\section{Circuit implementation and theoretical guarantees of the proposed method}\label{supple_sec:proposed_alg}

In Sec.~\ref{subsec:sp4itergradest_rev} in the main text, we have provided two methods to prepare the probing state $\ket{\Upsilon(q)}$ that was defined in Eq.~\eqref{eq:targetstate} as
\begin{align}
    &\ket{\Upsilon(q)}:=\notag\\
    &~~~\frac{1}{\sqrt{2^{pM}}}\sum_{\bm{x}\in G_p^M} e^{2\pi i2^p \sum_{j=1}^M x_j {2^{q}\pi^{-1}\braket{{O}_j-\tilde{u}^{(q)}_j\bm{1}}}} \ket{\boldsymbol{x}}\notag.
\end{align}
In the following, we give more details regarding the state preparation in Appendix~\ref{supple_sec;sp4iterativegradest}, and then analyze the algorithm of the adaptive observables estimation using those state preparations in Appendix~\ref{supple_sec;complexity_of_alg}.
The quantum circuits are provided in Appendix~\ref{sec:list_circuits}.
In this appendix, we often refer to the detailed version Algorithm~\ref{supp_alg:main}, instead of Algorithm~\ref{alg:main} in Sec.~\ref{sec:main_III} in the main text.

\renewcommand{\baselinestretch}{1.2}
\setcounter{algorithm}{0}
\renewcommand{\thealgorithm}{1*}
\begin{figure}[htb]
\begin{algorithm}[H]
    \caption{Adaptive gradient estimation\\for multiple observables (A detailed version.)}\label{supp_alg:main}
    \begin{algorithmic}[1]
    \smallskip
    \REQUIRE $\log_2{d}$-qubit state preparation unitary $U_{\psi}$ and $U_{\psi}^\dagger$; observables $\{O_j\}_{j=1}^M$ with the spectral norm $\|O_j\|\leq 1$ such that \begin{equation}\label{supple_eq:num_obs_condition}
        M > 2\ln d + 24=\mathcal{O}(\log d)
    \end{equation}
    holds; confidence parameter $c\in (0,3/8(1+\pi)^2]$; target precision parameter $\varepsilon\in (0,1)$.

    \smallskip
    \ENSURE A sample $(\tilde{u}_1,...,\tilde{u}_M)$ from an estimator $\hat{\boldsymbol{u}}=(\hat{u}_1,...,\hat{u}_M)$ whose $j$-th element estimates $\braket{O_j}:=\langle\psi|O_j|\psi\rangle$ within MSE $\varepsilon^2$ as
    $$
    \max_{j=1,2,...,M}~\mathbb{E}\left[\left(\hat{u}_j-\braket{O_j}\right)^2\right]\leq \varepsilon^2
    $$

    \STATE Set $p:=3$ and $\tilde{u}_j^{(0)}:= 0,~(j=1,2,\cdots,M)$
    
    \FOR{$q=0,1,...,{q_{\rm max}}:=\lceil \log_2(1/\varepsilon)\rceil$}
    
    \STATE Set $$\tilde{O}^{(q)}_j:=\frac{O_j-\tilde{u}^{(q)}_j\bm{1}}{2}~~~\mbox{and}~~~\delta^{(q)}:=\frac{c}{8^{{q_{\rm max}}-q}}$$
    

    \STATE Prepare $\mathcal{O}(\log(M/\delta^{(q)}))$ copies of a quantum state that is $1/12$-close (in Euclidean distance) to
    $$
    \left({\rm QFT}^\dagger_{G_p}\right)^{\otimes M}\ket{\Upsilon(q)},~\mbox{where}
    $$
    $$
    \ket{\Upsilon(q)}:=\frac{1}{\sqrt{2^{pM}}}\sum_{\bm{x}\in G_p^M} e^{2\pi i2^p \sum_{j=1}^M x_j {2^{q+1}\pi^{-1}\braket{\tilde{O}^{(q)}_j}}} \ket{\boldsymbol{x}},
    $$
    using Lemma~\ref{supple_lem:statepre_HS} or Lemma~\ref{supple_lem:statepre_Grover} and perform the computational basis measurement on each of them.
    Note that each measurement outputs a result in the form of $(x_1,...,x_M)\in G_p^M$.

    \STATE Set coordinate-wise medians of the measurement results as ${g}_j^{(q)}$

    \STATE Set $\tilde{u}_j^{(q+1)}:=\tilde{u}_j^{(q)}+{\pi}{2^{-q}}g_j^{(q)}$
    
    \IF{there are some $j$ such that $\tilde{u}_j^{(q+1)}\geq 1$ (or $\leq -1$)}
    \STATE Set $\tilde{u}_j^{(q+1)}=1$ (or $-1$) for such $j$
    \ENDIF
    
    \ENDFOR
    
    \RETURN $\tilde{u}_j:= \tilde{u}_j^{({q_{\rm max}}+1)}$
    \end{algorithmic}
\end{algorithm}
\end{figure}
\renewcommand{\baselinestretch}{1}

\subsection{Probing-state preparation}\label{supple_sec;sp4iterativegradest}
Here, we provide two quantum algorithms to prepare the probing state $\ket{\Upsilon(q)}$ in Eq.~\eqref{eq:targetstate}, using block encodings of observables $\{{O}_j\}$ and the state preparation oracle $U_{\psi}$.
First, we describe a method to construct the block encoding of a linear combination of observables.

\subsubsection{Amplified block encoding for observables}\label{supple_sec:amp_BE4H}

Let $B_j~(j=1,...,M)$ be an $a$-block-encoding of a $d\times d$ observable $O_j$, and let $\mathbf{p}:=(p_1,...,p_M)$ be a sequence of positive integers.
Here, we define a set $G_{\mathbf{p}}$ of grid points as 
$$
G_{\mathbf{p}}:=G_{p_1}\times \cdots\times G_{p_M},
$$
where each $G_p$ is defined as in Eq.~\eqref{supple_eq:G_p}.
For any element $\bm{x}=(x_1,...,x_M)\in G_{\mathbf{p}}$, we can construct an $(a+\lceil \log_2 M\rceil+1)$-block-encoding of $$\tilde{H}^{(\bm{x})}:=M^{-1}\sum_{j=1}^M x_j O_j$$ 
by using the LCU method.
The corresponding SELECT operation denoted by $U_{\rm SEL}$ is given by
$$
U_{\rm SEL}:=\sum_{j=1}^M \ket{j}\bra{j}\otimes B_j.
$$
As for the PREPARE operation, we consider implementation with and without reflecting the signs of coefficients as
$P^{(\bm{x})}_{\rm R}$ and  $P^{(\bm{x})}_{\rm L}$, defined as
\begin{equation}\label{supple_eq:BE_obs_preR}
    P^{(\bm{x})}_{\rm R}:\ket{0}^{\otimes \lceil \log_2 M\rceil} \mapsto \sum_{j=1}^M {\rm sgn}(x_j)\sqrt{\frac{|x_j|}{\|\bm{x}\|_1}}\ket{j}.
\end{equation}
\begin{equation}\label{supple_eq:BE_obs_preL}
    P^{(\bm{x})}_{\rm L}:\ket{0}^{\otimes \lceil \log_2 M\rceil} \mapsto \sum_{j=1}^M \sqrt{\frac{|x_j|}{\|\bm{x}\|_1}}\ket{j}
\end{equation}
Then, we can confirm that the unitary $(P^{(\bm{x})}_{\rm L}\otimes \bm{1})^\dagger U_{\rm SEL}(P^{(\bm{x})}_{\rm R}\otimes \bm{1})$ is a block encoding of $\propto \tilde{H}^{(\bm{x})}$ because for any $\ket{\psi}$, 
\begin{align}
    &(P^{(\bm{x})}_{\rm L}\otimes \bm{1})^\dagger U_{\rm SEL}(P^{(\bm{x})}_{\rm R}\otimes \bm{1})\ket{0}^{\otimes \lceil \log_2 M\rceil} \ket{\psi}\notag\\
    &~~~= \sum_{j=1}^M {\rm sgn}(x_j)\sqrt{\frac{|x_j|}{\|\bm{x}\|_1}}(P^{(\bm{x})}_{\rm L})^\dagger\ket{j}\otimes B_j\ket{\psi}\notag\\
    &~~~= \ket{0}^{\otimes \lceil \log_2 M\rceil}\otimes  \sum_{j=1}^M \frac{x_j}{\|\bm{x}\|_1}B_j\ket{\psi}+\ket{\tilde{\psi}^{\perp}},
\end{align}
where $\bra{0}^{\otimes \lceil \log_2 M\rceil}\ket{\tilde{\psi}^{\perp}}={0}$.
While the block encoding has the normalization factor $\|\bm{x}\|_1$ which depends on $\bm{x}$, it can be modified to $M$ by introducing an additional single ancilla qubit and a single rotation gate $R_y(\theta)=e^{i\theta Y}$.
As a result, the following unitary is $(a+\lceil \log_2 M\rceil+1)$-block-encoding of $\tilde{H}^{(\bm{x})}$:
\begin{align}\label{supple_eq:BE_pre_amp_obs}
    (I\otimes P^{(\bm{x})}_{\rm L}\otimes \bm{1})^\dagger \left(I\otimes U_{\rm SEL}\right)(R_{y}(\theta_{\bm{x}})\otimes P^{(\bm{x})}_{\rm R}\otimes \bm{1}),
\end{align}
where $I$ denotes the 1-qubit identity, and $R_y(\theta_{\bm{x}}):\ket{0}\mapsto \frac{\|\bm{x}\|_1}{M}\ket{0}+\sqrt{1-\left(\frac{\|\bm{x}\|_1}{M}\right)^2}\ket{1}$.

In this block encoding of $\tilde{H}^{(\bm{x})}$, only PREPARE operation depends on the grid point $\bm{x}\in G_{\mathbf{p}}$. 
Thus, by adding a control to $P_{\rm L}^{(\bm{x})}$, $P_{\rm R}^{(\bm{x})}$, and $R_y$, and then sequentially applying them to each $\bm{x}$, we can implement the following unitary     \begin{equation}\label{supple_eq:obs_BE_oracle}
    U'_{\rm SEL}:=\sum_{\bm{x}\in G_{\mathbf{p}}} \ket{\bm{x}}\bra{\bm{x}}\otimes U^{(\bm{x})},
\end{equation}
where $U^{(\bm{x})}$ denotes the block encoding of $\tilde{H}^{(\bm{x})}$.
Because the detailed circuit description of $U'_{\rm SEL}$ is not required in the following, we deal with the unitary $U'_{\rm SEL}$ as an oracle for observables $\{O_j\}_{j=1}^M$, instead of $U_{\rm SEL}$, for a simple expression of gate complexity.

\begin{rem}\label{rem:design_prepare}
    The above straightforward construction of $U'_{\rm SEL}$ results in a very large circuit with $\mathcal{O}(2^{\|\mathbf{p}\|_1})$-depth.
    Fortunately, we can construct a more efficient circuit by further modifying the PREPARE operation, using a similar way to Ref.~\cite{laneve2023robust}.
    As shown in the Sec.~3 of Ref.~\cite{gilyen2019optimizing}, we can implement $\sum_{x\in G_p}\ket{x}\bra{x}\otimes e^{-ixZ}$ with $\mathcal{O}(p)$-depth circuit using $p+1$ (controlled) single-qubit rotation gates.
    By sequentially applying the $\ket{j}$-controlled version of this gate (see Fig.~\ref{fig:block_enc_Hx}) over ${x}_j$-register, the additional single ancilla qubit, and the $\lceil \log_2 M\rceil$ qubits, we obtain a $\bm{x}$-controlled block encoding of $e^{i\mathcal{H}^{(\bm{x})}}$, where $\mathcal{H}^{(\bm{x})}:=\sum_{j=1}^M x_j\ket{j}\bra{j}$.
    The corresponding circuit diagram is shown in Fig.~\ref{fig:block_enc_Hx2}.
    Then, we can obtain an $\epsilon$-precise block encoding of $(2/\pi)\mathcal{H}^{(\bm{x})}$, via the eigenvalue transformation for the logarihm of unitaries: $e^{i\mathcal{H}^{(\bm{x})}}\mapsto \mathcal{H}^{(\bm{x})}$ (Corollary 71 in~\cite{gilyen2019quantum}).
    This implementation consists of $\mathcal{O}(\log(1/\epsilon))$ uses of block-encoding of $e^{i\mathcal{H}^{(\bm{x})}}$ or its inverse; {see Fig.~\ref{fig:BEofsinH}}.
    Now, introducing the SELECT $U_{\rm SEL}$ and Hadamard gates, we have an $\epsilon$-precise block-encoding of $(2M/(\pi2^{\lceil \log_2 M\rceil}))\tilde{H}^{(\bm{x})}$ controlled by 
    $\bm{x}$, using at most $\mathcal{O}(\|\mathbf{p}\|_1\log(1/\varepsilon))$-depth circuit over $\mathcal{O}(\|\mathbf{p}\|_1+a+\lceil \log_2 M\rceil +\log_2 d)$ qubits in total:
    \begin{align}
        &\bra{\bm{0}}H^{\otimes \lceil\log_2 M\rceil}\otimes \bm{1}\cdot U_{\rm SEL}\cdot \frac{2}{\pi}\mathcal{H}^{(\bm{x})} H^{\otimes \lceil\log_2 M\rceil}\ket{\bm{0}}\otimes \bm{1} \notag\\
        &=\frac{2M}{\pi2^{\lceil \log_2 M\rceil}}\frac{1}{M}\sum_{j=1}^M x_j B_j.
    \end{align}
    Note that this block encoding has a single use of $U_{\rm SEL}=\sum_{j}\ket{j}\bra{j}\otimes B_j$ regardless of $\epsilon$; {see Fig.~\ref{fig:BEofnormalizedUsel}}. 
    
\end{rem}

Using $U'_{\rm SEL}$, $U_{\psi}$, and its conjugation $U^\dag_{\psi}$, we obtain a block encoding of the following Hamiltonian:
$$
\sum_{\bm{x}} \langle\psi|\tilde{H}^{(\bm{x})}|\psi\rangle \ket{\bm{x}}\bra{\bm{x}}=\sum_{\bm{x}} \left(\sum_j{x}_j\cdot\frac{\langle O_j\rangle}{M}\right) \ket{\bm{x}}\bra{\bm{x}}.
$$
This allows us to simulate the phase oracle for the affine linear function $\sum_j x_j\langle O_j\rangle/M$ by using the block-Hamiltonian simulation (Lemma~\ref{supple_lem:opt_blockHS}).
However, in estimating the expectation value $\langle O_j\rangle$, we need to amplify the function by the factor $M$ due to the normalization factor $1/M$ in $\tilde{H}^{(\bm{x})}$, which may result in no speedup for the total queries to $U_{\psi}$ regarding the number $M$ of observables.
To obtain the quadratic speedup regarding $M$, Ref.~\cite{van2023quantum} (Lemma 36) uses a linear amplification of $\tilde{H}^{(\bm{x})}$ by Lemma~\ref{supple_lem:uniform_amp}, and this amplification can be performed with no use of $U_{\psi}$.
Here, we provide a slightly modified version of the Lemma 36 in Ref.~\cite{van2023quantum} as follows.

\begin{lem}\label{supple_lem:BE_ampH}
    Let $\delta'>0$, $\varepsilon'\in (0,1/2)$, and let $\mathbf{p}=(p_1,p_2,...,p_M)$ be any sequence of positive integers. 
    Suppose that for observables $\{O_j\}_{j=1}^M$ in $d$ dimension ($d$ is a power of $2$) with $\|O_j\|\leq 1$, we have access to the oracle $$U'_{\rm SEL}=\sum_{\bm{x}\in G_{\mathbf{p}}}\ket{\bm{x}}\bra{\bm{x}}\otimes U^{(\bm{x})},$$ where $U^{(\bm{x})}$ is an $(a+\lceil \log_2 M\rceil)$-block encoding of observable $\tilde{H}^{(\bm{x})}:=M^{-1}\sum_{j=1}^M x_j O_j$ for some $a\in \mathbb{N}$.
    If the following inequality holds
    \begin{equation}\label{eq:condition_M}
        \sigma := \left\lceil{\sqrt{2M\ln(2d/\delta')}}\right\rceil<M,
    \end{equation}
    then there exists a subset $F$ of $G_{\mathbf{p}}$ with the cardinality $|F|$ satisfying 
    $$|F|\geq (1-\delta')|G_{\mathbf{p}}|,$$ 
    and we can implement a unitary (with $\|\mathbf{p}\|_1+\lceil\log_2{M}\rceil+\log_2 d+a+1$ qubits in total) 
    \begin{equation}\label{supple_eq:amplified_BE_obs}
            U_{\rm obs}:=\sum_{\bm{x}\in G_{\mathbf{p}}}\ket{\bm{x}}\bra{\bm{x}}\otimes U_{\rm obs}^{(\bm{x})}
    \end{equation}
    such that $U_{\rm obs}^{(\bm{x})}$ is a $(1,a+\lceil\log_2{M}\rceil+1,\varepsilon')$-block-encoding of the Hamiltonian
    \begin{equation}\label{eq:amplified_BE_temp}
            {H}^{(\bm{x})}:=\frac{1}{\sigma}\sum_{j=1}^M x_jO_j~~\mbox{if}~~\bm{x}\in F\subset G_{\mathbf{p}}.    
    \end{equation}
    For $m:=\mathcal{O}(M\sigma^{-1}\log(M\sigma^{-1}/\varepsilon'))$, this implementation of $U_{\rm obs}$ uses $m$ queries to $U'_{\rm SEL}$ or its inverse, $2m$ NOT gates controlled by $(a+\lceil\log_2{M}\rceil)$-qubit, and $\mathcal{O}(m)$ single-qubit gates, with $\mathcal{O}({\rm poly}(m))$ classical precomputation for finding quantum circuit parameters.
    The circuit diagram for $U_{\rm obs}$ is shown in Fig.~\ref{fig:amplified_BE_circuit}.

\end{lem}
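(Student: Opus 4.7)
The plan is to apply the uniform singular value amplification of Lemma~\ref{supple_lem:uniform_amp} to the controlled block encoding $U'_{\rm SEL}$ with amplification factor $\gamma = M/\sigma$, while isolating — via a matrix concentration inequality — a subset $F\subseteq G_{\mathbf{p}}$ of grid points on which the norm hypothesis $\|\tilde{H}^{(\bm{x})}\|\leq (1-\delta)/\gamma$ required by that lemma is satisfied. The $\bm{x}$-controlled structure of $U_{\rm obs}$ comes for free: Lemma~\ref{supple_lem:uniform_amp} produces its amplified block encoding by interleaving $U^{(\bm{x})}$, $(U^{(\bm{x})})^\dagger$, NOT gates controlled on the inner ancilla register, and a fixed sequence of single-qubit processing rotations. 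Since the processing sequence depends only on the amplification polynomial and not on $\bm{x}$, replacing every $U^{(\bm{x})}$ by $U'_{\rm SEL}$ (and every $(U^{(\bm{x})})^\dagger$ by $(U'_{\rm SEL})^\dagger$) yields a circuit of the desired form $U_{\rm obs}=\sum_{\bm{x}}\ket{\bm{x}}\bra{\bm{x}}\otimes U_{\rm obs}^{(\bm{x})}$ with the gate and query counts inherited from the lemma applied with $\gamma = M/\sigma$ and a fixed constant $\delta$ (say $\delta = 1/2$).

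The heart of the argument is producing the set $F$. I would treat $\bm{x}=(x_1,\dots,x_M)\in G_{\mathbf{p}}$ as drawn from the uniform distribution on $G_{\mathbf{p}}$. Because $G_{p_j}$ is a symmetric grid around $0$, the coordinates $x_j$ are independent, zero-mean, and bounded with $|x_j|\leq 1/2$. Consequently, $\sum_{j=1}^M x_j O_j$ is a sum of independent zero-mean Hermitian $d\times d$ matrices with $\|x_j O_j\|\leq 1/2$ and bounded variance proxy $\sum_j \|O_j\|^2/4\leq M/4$. The matrix Hoeffding inequality then yields
\[
\Pr_{\bm{x}\sim G_{\mathbf{p}}}\!\left[\Bigl\|\sum_{j=1}^M x_j O_j\Bigr\|\geq \sigma\right]\;\leq\; 2d\,\exp\!\left(-\tfrac{\sigma^2}{2M}\right)\;\leq\;\delta',
\]
where the last inequality uses the definition $\sigma = \lceil\sqrt{2M\ln(2d/\delta')}\rceil$. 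Defining $F$ as the complement of the bad event gives $|F|\geq (1-\delta')|G_{\mathbf{p}}|$, and for $\bm{x}\in F$ we have $\|\tilde{H}^{(\bm{x})}\|\leq \sigma/M = (1-\delta)/\gamma$ with $\gamma=M/\sigma$, provided $\sigma<M$ (which is exactly the hypothesis of the lemma).

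For each $\bm{x}\in F$, Lemma~\ref{supple_lem:uniform_amp} now directly certifies that $U_{\rm obs}^{(\bm{x})}$ is a $(1,a+\lceil\log_2 M\rceil+1,\varepsilon')$-block-encoding of $\gamma\tilde{H}^{(\bm{x})} = \sigma^{-1}\sum_{j=1}^M x_j O_j = H^{(\bm{x})}$, using $m=\mathcal{O}(M\sigma^{-1}\log(M\sigma^{-1}/\varepsilon'))$ queries to $U'_{\rm SEL}$ and its inverse, $2m$ controlled NOT gates, $\mathcal{O}(m)$ single-qubit gates, and $\mathcal{O}(\mathrm{poly}(m))$ classical phase finding. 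The ancilla accounting is straightforward: $\|\mathbf{p}\|_1$ qubits encode $\bm{x}$, $\log_2 d$ qubits support the target system, $a+\lceil\log_2 M\rceil$ ancillas support the block-encoded $\tilde{H}^{(\bm{x})}$, and the amplification contributes one extra ancilla, totalling $\|\mathbf{p}\|_1+\lceil\log_2 M\rceil+\log_2 d+a+1$ qubits.

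The main conceptual obstacle is justifying that the single polynomial transformation used in the amplification simultaneously achieves the $\varepsilon'$ accuracy on every $\bm{x}\in F$ and degrades gracefully (though without guarantee) on $\bm{x}\notin F$. This is handled because the QSP polynomial approximating $\gamma x$ is $\bm{x}$-independent and the $\varepsilon'$ guarantee of Lemma~\ref{supple_lem:uniform_amp} applies pointwise in the input operator norm, so the claim for $F$ follows unconditionally and no uniform statement is needed outside $F$. The only real technical care lies in ensuring the matrix Hoeffding tail bound admits the lattice distribution; this requires nothing beyond independence, symmetry, and the uniform bound $|x_j|\leq 1/2$, all of which are built into the definition of $G_p$.
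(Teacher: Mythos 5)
Your proposal follows the same route as the paper's proof: apply the uniform singular value amplification (Lemma~\ref{supple_lem:uniform_amp}) with $\gamma=M/\sigma$ to the $\bm{x}$-controlled block encoding, and use a matrix concentration bound over $\bm{x}$ drawn uniformly from $G_{\mathbf{p}}$ to exhibit the good set $F$; the observation that the QSP processing sequence is $\bm{x}$-independent, so the controlled structure survives, is exactly what the paper's circuit (Fig.~\ref{fig:amplified_BE_circuit}) encodes.

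There is, however, a factor-of-two inconsistency in your definition of $F$ that, as written, breaks the application of the amplification lemma. You define the bad event as $\|\sum_j x_jO_j\|\geq\sigma$, so on $F$ you only get $\|\tilde{H}^{(\bm{x})}\|\leq\sigma/M=1/\gamma$; you then identify this with $(1-\delta)/\gamma$, which forces $\delta=0$, whereas Lemma~\ref{supple_lem:uniform_amp} requires $\|A\|\leq(1-\delta)/\gamma$ for some $\delta\in(0,1/2)$ and its query count scales as $\gamma\delta^{-1}$. To use $\delta=1/2$ as you intend, the good set must be $\{\|\tilde{H}^{(\bm{x})}\|<1/(2\gamma)\}$, i.e.\ the bad event is $\|\sum_j x_jO_j\|\geq\sigma/2$ (equivalently $\|\sum_j(2x_j)O_j\|\geq\sigma$, which is how the paper phrases it via the symmetric matrix series inequality). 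The repair costs nothing: the exponent you already wrote, $2d\exp(-\sigma^2/(2M))\leq\delta'$, is precisely the tail bound at the tighter threshold $\sigma/2$ given the variance proxy $\|\sum_j(x_jO_j)^2\|\leq M/4$, so the cardinality bound $|F|\geq(1-\delta')|G_{\mathbf{p}}|$ still holds for the corrected $F$ and the rest of your argument goes through unchanged.
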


\begin{proof}
    [Proof of Lemma~\ref{supple_lem:BE_ampH}]
    Let $\gamma:=M/\sigma>1$. 
    To perform the amplification in Lemma~\ref{supple_lem:uniform_amp}, it is required that the spectral norm $\|\tilde{H}^{(\bm{x})}\|$ is upper bounded by $(1-\delta)/\gamma$ for some $\delta\in (0,1/2)$.
    To this end, we define a subset $F\subset G_{\mathbf{p}}$ such that
    $$
    F:=\left\{\bm{x}\in G_{\mathbf{p}} : \|\tilde{H}^{(\bm{x})}\|<\frac{1}{2\gamma} \right\}.
    $$
    If $\bm{x}\in F$, we can perform the linear amplification by $\gamma$ and obtain $(1,a+\lceil \log_2 M\rceil+1,\varepsilon')$-block-encoding of $\gamma\tilde{H}^{(\bm{x})}$ from Lemma~\ref{supple_lem:uniform_amp}.
    Next, we show the cardinality of $F$ is equal or more than $(1-\delta')\prod_{j=1}^M 2^{p_j}$.
    Let us consider independent random variables $X_j~(j=1,...,M)$ that are uniformly distributed on each $G_{p_j}$.
    Because the random variables are also symmetrically distributed on each $G_{p_j}$, the following matrix series inequality holds (Theorem 35 in~\cite{van2023quantum}):
    \begin{align}\label{eq:prf_lm12_temp1}
        &{\rm Pr}_{X_1,...,X_M}\left[\left\|\sum_{j=1}^M (2X_j)O_j\right\|\geq t\right]\notag\\
        &~~~ \leq 2d\cdot{\exp}\left[-\frac{t^2}{2\left\|\sum_{j=1}^M O_j^2\right\|}\right].
    \end{align}
    Thus, inserting $M/\gamma$ into $t$, we obtain
    \begin{align}\label{eq:prf_lm12_tem2}
        {\rm Pr}_{X_1,...,X_M}\left[\left\|\tilde{H}^{(\bm{X})}\right\|\geq 1/2\gamma\right] \leq \delta'.
    \end{align}
    Since the probability of an event $X_j=x_j~(j=1,...,M)$ is equal to $1/\prod_{j=1}^M 2^{p_j}$ for all $x_j$, we obtain $|F|\geq (1-\delta')\prod_{j=1}^M 2^{p_j}$.
    The gate complexity follows from that of Lemma~\ref{supple_lem:uniform_amp}.
    
\end{proof}

Here, we make some remarks for Lemma~\ref{supple_lem:BE_ampH}.
In the proof of this lemma, we used the loose inequality: $\|\sum_{j=1}^M O_j^2\|\leq M$ (we assumed $\|O_j\|\leq 1$) to derive Eq.~\eqref{eq:prf_lm12_tem2} from Eq.~\eqref{eq:prf_lm12_temp1}, because explicitly calculating the factor $\|\sum_j O_j^2\|$ may be involved.
If we have some evaluation of the factor $\|\sum_j O_j^2\|$ in a particular setup, we can further improve the block encoding Eq.~\eqref{eq:amplified_BE_temp}; see Ref.~\cite{van2023quantum}.

Also, the condition Eq.~(\ref{eq:condition_M}) for the amplification, $M>\mathcal{O}({\log (d/\delta')})$, is crucial in our algorithm, and for a constant $\delta'$, this condition is satisfied when the number of observables is bigger than the number of qubits such as Eq.~(\ref{supple_eq:num_obs_condition}).

When we use the circuit construction of $U'_{\rm SEL}$ in Remark~\ref{rem:design_prepare}, $U^{(\bm{x})}$ in the assumption of Lemma~\ref{supple_lem:BE_ampH} should be replaced as the $\epsilon$-precise block encoding of $(2M/(\pi2^{\lceil \log_2 M\rceil}))\tilde{H}^{(\bm{x})}$.
In this case, the error $\varepsilon'$ in the block encoding $U^{(\bm{x})}_{\rm obs}$ of $H^{\bm{(x)}}$ becomes $\varepsilon'+\mathcal{O}(m\sqrt{\epsilon})$, by the error propagation associated with the uniform amplification Lemma~\ref{supple_lem:uniform_amp} (see; Lemma 22 in Ref.~\cite{gilyen2019quantum}).
Thus, replacing $\varepsilon'$ and $\epsilon$ with $\varepsilon'/2$ and $\mathcal{O}((\varepsilon'/m)^2)$, respectively, we obtain the $\varepsilon'$-precise block encoding $U^{(\bm{x})}_{\rm obs}$.

\subsubsection{Approximated state preparation}
Now, we are ready to show two quantum algorithms to prepare the probing state $\ket{\Upsilon(q)}$.
Note that in the following lemmas, we explicitly construct a quantum circuit in its proof.

\begin{lem}
    [State preparation by Hamiltonian simulation]\label{supple_lem:statepre_HS}
    
    Let $O_j~(j=1,2,...,M)$ be observables in $d$ dimension ($d$ is a power of $2$) with $\|O_j\|\leq 1$, and let $p$ be a positive integer.
    Also, let $U_{\psi}$ be a $\log_2 d$-qubit state preparation oracle, and $\braket{O_j}:=\bra{\psi}O_j\ket{\psi}$.
    Suppose we have access to $$U'_{\rm SEL}:=\sum_{\bm{x}\in G_p^M}\ket{\bm{x}}\bra{\bm{x}}\otimes U^{(\bm{x})},$$ where $U^{(\bm{x})}$ is an $(a+\lceil\log_2{M}\rceil)$-block-encoding of $M^{-1}\sum_{j=1}^M x_jO_j$ for some $a\in \mathbb{N}$.
    We assume that $$\sigma := \left\lceil{\sqrt{2M\ln(2d/\delta')}}\right\rceil<M$$ holds for $\delta'=2^{-10}$. 
    Then, for any non-negative integer $q$, we can prepare the $pM$-qubit quantum state 
    $$
    \frac{1}{\sqrt{2^{pM}}}\sum_{\bm{x}\in G_p^M} e^{2\pi i2^p \sum_{j=1}^M x_j {2^{q+1}\pi^{-1}\braket{{O}_j}}} \ket{\boldsymbol{x}}
    $$
    up to 1/12 Euclidean distance error, with $2Q$ uses of $U_{\psi}$ or $U^\dagger_{\psi}$, $4mQ$ uses of controlled $U'_{\rm SEL}$ or its inverse, $\mathcal{O}(mQ)$ uses of NOT gates controlled by at most $\mathcal{O}(a+\lceil\log_2{M}\rceil+\log_2 d)$ qubits, $\mathcal{O}(mQ+pM)$ uses of single-qubit or two-qubit gates, additional $(\lceil\log_2{M}\rceil+\log_2 d+a+3)$ ancilla qubits, and $\mathcal{O}({\rm poly}(Q)+{\rm poly}(m))$ classical precomputation for finding quantum circuit parameters, where $$
    Q=\mathcal{O}\left(2^{p+q+2}\sigma\right)~~\mbox{and}~~
    m=\mathcal{O}\left(\frac{{M}}{\sigma}(p+q+\log M)\right).
    $$
\end{lem}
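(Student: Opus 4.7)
The plan is to stack three ingredients from the preceding subsections, in the pattern already sketched for Lemma~\ref{thm:sp_HS}: the amplified block encoding of the weighted observable sum (Lemma~\ref{supple_lem:BE_ampH}), conjugation by the state-preparation oracle, and optimal block-Hamiltonian simulation (Lemma~\ref{supple_lem:opt_blockHS}). First, I would invoke Lemma~\ref{supple_lem:BE_ampH} with $\mathbf{p}=(p,\ldots,p)$, the stated $\delta'=2^{-10}$ (the hypothesis $\sigma<M$ is precisely its applicability condition), and a precision $\varepsilon'$ to be fixed later, obtaining $U_{\rm obs}=\sum_{\bm{x}\in G_p^M}\ket{\bm{x}}\bra{\bm{x}}\otimes U_{\rm obs}^{(\bm{x})}$ together with a subset $F\subset G_p^M$ of fractional size at least $1-\delta'$ on which $U_{\rm obs}^{(\bm{x})}$ is a $(1,\lceil\log_2 M\rceil+a+1,\varepsilon')$-block-encoding of $H^{(\bm{x})}=\sigma^{-1}\sum_j x_j O_j$.

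Second, I would form the block-encoded diagonal Hamiltonian
\begin{equation*}
V := (\bm{1}\otimes U_\psi^\dagger)\, U_{\rm obs}\, (\bm{1}\otimes U_\psi),
\end{equation*}
which exactly block-encodes $\mathbf{H}':=\sum_{\bm{x}}\tilde f(\bm{x})\ket{\bm{x}}\bra{\bm{x}}$ with $\tilde f(\bm{x})=\bra{\psi}A^{(\bm{x})}\ket{\psi}$ for the encoded block $A^{(\bm{x})}$ of $U_{\rm obs}^{(\bm{x})}$; Hermiticity of $A^{(\bm{x})}$ is inherited from the odd-polynomial QSVT used inside Lemma~\ref{supple_lem:BE_ampH}, so $\mathbf{H}'$ is genuinely Hermitian, and on $F$ one has $|\tilde f(\bm{x})-\sigma^{-1}\sum_j x_j\braket{O_j}|\le\varepsilon'$. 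I would then apply Lemma~\ref{supple_lem:opt_blockHS} to $V$ with evolution time $t:=2^{p+q+2}\sigma$ and precision $\varepsilon''$, yielding a circuit $W$ that is an $\varepsilon''$-precise block-encoding of $e^{i\mathbf{H}'t}$ using $Q=\mathcal{O}(t+\log(1/\varepsilon''))$ queries to controlled $V$ and its inverse. Since $\mathbf{H}'$ is diagonal in the probe basis $\{\ket{\bm{x}}\}$, applying $W$ to $\ket{+}^{\otimes pM}\otimes\ket{\bm{0}}$ (with the Hadamards contributing $\mathcal{O}(pM)$ single-qubit gates) produces a joint state that is $\varepsilon''$-close to $\bigl(2^{-pM/2}\sum_{\bm{x}} e^{i\tilde f(\bm{x}) t}\ket{\bm{x}}\bigr)\otimes\ket{\bm{0}}$; on $F$ the exponent evaluates to the desired $2\pi\cdot 2^p\cdot 2^{q+1}\pi^{-1}\sum_j x_j\braket{O_j}$.

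The main technical step is the clean error accounting against the $1/12$ budget. By the triangle inequality, the distance to $\ket{\Upsilon(q)}$ is at most (i) $\varepsilon''$ from imperfect Hamiltonian simulation, (ii) an $\varepsilon' t$ term from the $\varepsilon'$-imprecision of $\tilde f$ on $F$ propagated via $|e^{i\alpha}-e^{i\beta}|\le|\alpha-\beta|$, and (iii) $2\sqrt{\delta'}$ from the uncontrolled phases on $G_p^M\setminus F$ whose total $\ell_2$-weight is at most $\sqrt{\delta'}$. With $\delta'=2^{-10}$ this gives $2\sqrt{\delta'}=1/16$; choosing $\varepsilon''$ a small absolute constant and $\varepsilon'=\Theta(1/t)=\Theta(1/(2^{p+q}\sigma))$ fits the remaining slack and keeps the total below $1/12$. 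Back-substituting, Lemma~\ref{supple_lem:BE_ampH} demands $m=\mathcal{O}((M/\sigma)\log(M\sigma^{-1}/\varepsilon'))=\mathcal{O}((M/\sigma)(p+q+\log M))$ uses of $U'_{\rm SEL}$ per call to $U_{\rm obs}$, so that the $Q=\mathcal{O}(2^{p+q+2}\sigma)$ from Lemma~\ref{supple_lem:opt_blockHS} combines to give the stated $4Q$ queries to $U_\psi,U_\psi^\dagger$, $4mQ$ queries to controlled $U'_{\rm SEL}$, and $\mathcal{O}(mQ+pM)$ single- and two-qubit gates. The ancilla tally $(\lceil\log_2 M\rceil+\log_2 d+a+1)$ from Lemma~\ref{supple_lem:BE_ampH} plus the two qubits introduced by Lemma~\ref{supple_lem:opt_blockHS} reproduces the claimed $\lceil\log_2 M\rceil+\log_2 d+a+3$ additional qubits, and the classical precomputation separates as $\mathcal{O}(\text{poly}(Q)+\text{poly}(m))$ for the two QSVT/QSP optimizations.
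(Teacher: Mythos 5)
Your proposal is correct and follows essentially the same route as the paper's proof: Lemma~\ref{supple_lem:BE_ampH} for the amplified block encoding, conjugation by $U_\psi$ to get the diagonal Hamiltonian, Lemma~\ref{supple_lem:opt_blockHS} with $t=2^{p+q+2}\sigma$, and the same three-term error split with $\varepsilon'=\Theta(1/t)$. The only minor imprecision is that an $\varepsilon''$-precise block encoding of $e^{i\mathbf{H}'t}$ yields a state that is $\varepsilon''+\sqrt{2\varepsilon''}$-close (not $\varepsilon''$-close) to the ideal evolved state, but since you treat $\varepsilon''$ as a small tunable constant this does not affect the argument.
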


Note that if we have a block encoding $B_j$ of $O_j$, then a block encoding of $(O_j-\tilde{u}^{(q)}_{j}\bm{1})/2$ for any $\tilde{u}^{(q)}_{j}\in [-1,1]$ is easily constructed by the LCU method; see Fig.~\ref{supplefig:LCU_O_u}.
Therefore, Lemma~\ref{supple_lem:statepre_HS} provides a method to approximately prepare the probing state $\ket{\Upsilon(q)}$.

\begin{proof}
    [Proof of Lemma~\ref{supple_lem:statepre_HS}]
    Let $\varepsilon''=2^{-14}\in (0,1)$ and $\varepsilon'=\sqrt{\delta'}/(2^{p+q+2}\sigma)\in (0,1/2)$.
    From the assumption and Lemma~\ref{supple_lem:BE_ampH}, we have a unitary 
    \begin{equation}
        U_{\rm obs}:=\sum_{\bm{x}\in G_{{p}}^M}\ket{\bm{x}}\bra{\bm{x}}\otimes U_{\rm obs}^{(\bm{x})},
    \end{equation}
    where $U_{\rm obs}^{(\bm{x})}$ is a $(1,a''=a+\lceil\log_2{M}\rceil+1,\varepsilon')$-block-encoding of the Hamiltonian ${H}^{(\bm{x})}:={\sigma}^{-1}\sum_{j=1}^M x_jO_j$ if $\bm{x}\in F$ with $|F|\geq (1-\delta')|G_p^M|$.
    Thus, using $U_{\psi}, U_{\psi}^\dagger$, and $U_{\rm obs}$, we can implement a $(1,a''+\log_2 d,0)$-block-encoding of the following Hamiltonian 
    \begin{equation}\label{eq:phase_fn_HS}
        \sum_{\bm{x}\in G_p^M} \tilde{f}(\bm{x})\ket{\bm{x}}\bra{\bm{x}},
    \end{equation}
    where
    $$
    \tilde{f}(\bm{x}):=\bra{0}^{\otimes a''+\log_2 d}U_{\psi}^\dagger\cdot U_{\rm obs}^{(\bm{x})}\cdot U_{\psi}\ket{0}^{\otimes a''+\log_2 d}.
    $$
    The block-Hamiltonian simulation (Lemma~\ref{supple_lem:opt_blockHS}) yields the quantum circuit $W$ for a $(1,a''+\log_2 d+2,\varepsilon'')$-block-encoding of time evolution operator 
    \begin{equation}
        \sum_{\bm{x}\in G_p^M} e^{i\tilde{f}(\bm{x})t}\ket{\bm{x}}\bra{\bm{x}}~~\mbox{and}~~t:=2^{p+2+q}\sigma.
    \end{equation}
    An example of this quantum circuit is shown in Fig.~\ref{fig:spbyHS_fullcircuit}.
    Note that a block encoding $W'$ of $\varepsilon''$-precise time evolution operator $e^{iHt}$ for some Hamiltonian $H$ yields $\mathcal{O}(\sqrt{\varepsilon''})$-precise time evolved states, that is, for any state $\ket{\psi}$, 
    $$
    \left\|W'\ket{\bm{0}}\ket{\psi}-\ket{\bm{0}}e^{iHt}\ket{\psi}\right\| \leq \varepsilon''+\sqrt{2\varepsilon''},
    $$
    where $\ket{\bm{0}}$ is the signal state for $W'$.
    Therefore, for $\ket{\bm{0}}=\ket{0}^{\otimes a''+\log_2 d+2}$ and the $+1$ eigenstate $\ket{+}$ of Pauli X , we obtain
    \begin{widetext}
    \begin{align}
        &\left\|W\ket{\bm{0}}\ket{+}^{\otimes pM}-\frac{1}{\sqrt{2^{pM}}}\sum_{\bm{x}\in G_p^M} e^{2\pi i 2^p \sum_{j=1}^M x_j \pi^{-1}2^{q+1}\langle O_j\rangle} \ket{\bm{0}}\ket{\bm{x}}\right\|\notag\\[6pt]
        &\leq\varepsilon''+\sqrt{2\varepsilon''}+\frac{1}{\sqrt{2^{pM}}}\left\|\sum_{\bm{x}\in G_p^M} e^{i\tilde{f}(\bm{x})t}\ket{\bm{x}}-\sum_{\bm{x}\in G_p^M} e^{2\pi i 2^p \sum_{j=1}^M x_j \pi^{-1}2^{q+1}\langle O_j\rangle} \ket{\bm{x}}\right\|\notag\\[6pt]
        &\leq \varepsilon''+\sqrt{2\varepsilon''}+\sqrt{5\delta'}<\frac{1}{12}.
    \end{align}
    Here, in the second inequality we used the following evaluation.
    \begin{align}
        &\left\|\sum_{\bm{x}\in G_p^M} e^{i\tilde{f}(\bm{x})t}\ket{\bm{x}}-\sum_{\bm{x}\in G_p^M} e^{2\pi i 2^p \sum_{j=1}^M x_j \pi^{-1}2^{q+1}\langle O_j\rangle} \ket{\bm{x}}\right\|^2\notag\\[6pt]
        &\leq \sum_{\bm{x}\in G_p^M} \left|e^{i\tilde{f}(\bm{x})t}-e^{2\pi i 2^p \sum_{j=1}^M x_j \pi^{-1}2^{q+1}\langle O_j\rangle} \right|^2\notag\\[6pt]
        &\leq 4\times 2^{pM}\delta'+\sum_{\bm{x}\in F} \left|e^{i\tilde{f}(\bm{x})t}-e^{2\pi i 2^p \sum_{j=1}^M x_j \pi^{-1}2^{q+1}\langle O_j\rangle} \right|^2~~~(\because |F|\geq (1-\delta')|G_p^M|)\notag\\[6pt]
        &\leq 4\times 2^{pM}\delta'+\sum_{\bm{x}\in F} \left|\tilde{f}(\bm{x})t-2\pi 2^p \sum_{j=1}^M x_j \pi^{-1}2^{q+1}\langle O_j\rangle \right|^2~~~(\because |e^{ia}-e^{ib}|\leq|a-b|)\notag\\[6pt]
        &\leq 4\times 2^{pM}\delta'+\sum_{\bm{x}\in F} (t\varepsilon')^2\leq 2^{pM}\times 5\delta'~~~(\because \varepsilon':=\sqrt{\delta'}/t).
    \end{align}
    The gate complexity follows from that of Lemma~\ref{supple_lem:BE_ampH} and Lemma~\ref{supple_lem:opt_blockHS}.
    \end{widetext}
\end{proof}

\begin{lem}
    [State preparation by Grover-like repetition]\label{supple_lem:statepre_Grover}

    Let $O_j~(j=1,2,...,M)$ be observables in $d$ dimension ($d$ is a power of $2$) with $\|O_j\|\leq 1$, and let $p$ be a positive integer. We assume the following two conditions holds for $\delta':=2^{-14}\in(0,1)$:
    \begin{itemize}
            \item[(i)] $\sigma := \left\lceil{\sqrt{2(M+1)\ln(2d/\delta')}}\right\rceil<M+1$.
            \item[(ii)] For a given $(\log_2 d)$-qubit quantum state $\ket{\psi}$ prepared by $U_{\psi}$, $|\langle{O_j}\rangle |\leq 2^{-q-1}$ holds for all $j=1,2,...,M$, where $q$ denotes a non-negative integer satisfying 
            \begin{equation}\label{supple_eq:grver_sp_condition}
                q\geq \log_4\left[\frac{2^{p}\cdot 33^3}{625\ln (2d/\delta')}\frac{\left\lceil{\sqrt{2(M+1)\ln(2d/\delta')}}\right\rceil}{\sqrt{\ln (2d/\delta')}}\right]
            \end{equation}
    \end{itemize}
    Suppose we have access to $$U'_{\rm SEL}:=\sum_{\bm{x}\in G_p^M\times G_1}\ket{\bm{x}}\bra{\bm{x}}\otimes U^{(\bm{x})},$$
    where $U^{(\bm{x})}$ is an $(a+\lceil\log_2{(M+1)}\rceil)$-block-encoding of
    $$\frac{1}{M+1}\left(\sum_{j=1}^{M+1} x_jO_j\right)
    $$
    for some $a\in \mathbb{N}$. Here, $O_{M+1}$ denotes a $2\times 2$ observable defined as
    $$
    O_{M+1}:=\frac{\pi(1/4+4l)}{2^{p+q}}I,
    $$
    where $l$ is an arbitrary integer satisfying $\|2^{q+1}O_{M+1}\|\leq 1$ and $I$ denotes the 1-qubit identity.
    Then, we can successfully prepare the following $pM$-qubit quantum state with a measurement result of ancilla qubits indicating the success:
    $$
    \frac{1}{\sqrt{2^{pM}}}\sum_{\bm{x}\in G_p^M} e^{2\pi i2^p \sum_{j=1}^M x_j {2^{q+1}\pi^{-1}\braket{{O}_j}}} \ket{\boldsymbol{x}}
    $$
    up to 1/12 Euclidean distance error with the success probability $\geq 0.462$.
    This implementation consists of $2t$ uses of $U_{\psi}$ or $U_{\psi}^\dagger$, $mt$ uses of $U'_{\rm SEL}$ or its inverse, $(2m+1)t$ uses of NOT gates controlled by at most $(a+\lceil \log_2(M+1)\rceil+\log_2 d)$ qubits, $\mathcal{O}(mt+pM)$ uses of single-qubit or two-qubit gates, additional $(\lceil \log_2(M+1)\rceil+\log_2 d+a+2)$ ancilla qubits, and $\mathcal{O}({\rm poly}(m))$ classical precomputation for finding quantum circuit parameters, where 
    $$
    t=2^{p+q+2}\sigma~~~\mbox{and}~~~m = \mathcal{O}\left(\frac{M}{\sigma}\left(p+q+\log M\right)\right).
    $$
\end{lem}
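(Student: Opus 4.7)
The plan is to follow the sketch of Lemma~\ref{thm:sp_Grover} given in the main text and fill in the quantitative bookkeeping. First I would invoke Lemma~\ref{supple_lem:BE_ampH} with parameters $\delta'=2^{-14}$ and some $\varepsilon'>0$ to be fixed later, applied to the oracle $U'_{\rm SEL}$ over the grid $G_p^M\times G_1$: this yields an amplified oracle $U_{\rm obs}$ such that on a subset $F$ of density at least $1-\delta'$ the inner block is an $\varepsilon'$-precise encoding of $\sigma^{-1}\sum_{j=1}^{M+1}x_jO_j$. Conjugation by $U_\psi,U_\psi^\dagger$ as in Fig.~\ref{fig:phase_fn_HS} then gives a block encoding of the diagonal Hamiltonian $\mathbf{H}_{\rm G}=\sum_{(\bm{x},y)}\tilde f'(\bm{x},y)\ket{\bm{x},y}\bra{\bm{x},y}$, where for $(\bm{x},y)\in F$ the eigenvalue $\tilde f'(\bm{x},y)$ equals $\sigma^{-1}\bigl(y\langle O_{M+1}\rangle+\sum_{j=1}^M x_j\langle O_j\rangle\bigr)$ up to additive error $\varepsilon'$. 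Applying Lemma~\ref{supple_lem:QET_by_Tmx} with the Chebyshev polynomial $T_t$ of degree $t=2^{p+q+2}\sigma$ then produces a block encoding $W$ of $T_t(\mathbf{H}_{\rm G})$.

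Next I would analyze $W\ket{\bm{0}}\ket{+}^{\otimes pM+1}$ by inserting $T_t(\cos\theta)=\cos(t\theta)$, which splits the post-selected state into two phase branches $e^{\pm it\arccos\tilde f'(\bm{x},y)}\ket{\bm{x},y}$. Using the Taylor bound $|\arccos x-(\pi/2-x)|=\mathcal{O}(|x|^3)$, each phase is replaceable by $\pm t(\pi/2-\tilde f'(\bm{x},y))$ at a cubic cost which is then multiplied by $t$. To make this amplified error small enough I need $|\tilde f'(\bm{x},y)|=\mathcal{O}(\sqrt{M}/(2^q\sigma))$ on a large fraction of the grid, which follows from assumption (ii) $|\langle O_j\rangle|\le 2^{-q-1}$ together with Hoeffding's inequality applied to the uniformly distributed signs $\bm{x},y$ on $F$. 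Substituting $\sigma=\Theta(\sqrt{M\log(d/\delta')})$ and squaring over all grid points yields a total squared error of order $t^2\cdot (\sqrt{M}/(2^q\sigma))^6\cdot 2^{pM}$, whose normalized version becomes $\mathcal{O}(M^{3/2}/(4^q\sigma^2))$; inequality \eqref{supple_eq:grver_sp_condition} is exactly the threshold guaranteeing this is bounded by a small constant.

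The arithmetic choice $O_{M+1}=\pi(1/4+4l)/2^{p+q}\,I$ is engineered so that the $y$-factor in the linearized phase becomes $s\cdot 2\pi\cdot 2k\cdot(1/4)$ modulo $2\pi$ for $s=\pm 1$. Factoring out $s$, the linearized state decouples as the target $pM$-qubit probing state tensored with $2^{-1/2}\sum_{k\in G_1}e^{2\pi i\cdot 2k(s/4)}\ket{k}={\rm QFT}_{G_1}\ket{s}$; applying the inverse $G_1$-QFT to the extra ancilla qubit and then a $\ket{1}$-controlled $X^{\otimes pM}$ (which sends $\ket{\bm{x}}$ to $\ket{-\bm{x}}$) merges the two branches coherently, leaving the probing state in the $pM$-qubit register. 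A triangle-inequality bound then combines (a) the uniform-amplification error $\varepsilon'$ per grid point, (b) the exceptional fraction $\delta'$ of grid points outside $F$ (contributing $\mathcal{O}(\sqrt{\delta'})$ in Euclidean distance), and (c) the amplified cubic $\arccos$ remainder. Picking $\varepsilon'=\sqrt{\delta'}/t$ as in Lemma~\ref{supple_lem:statepre_HS} makes the total at most $1/12$.

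The main obstacle will be tracking the normalization $\mathcal{N}_t$ of the post-selected branch carefully: I must show $|\mathcal{N}_t-1|\le 2\sqrt{6\delta'}$ so that the success probability $\mathcal{N}_t^2/2$ exceeds $0.462$, \emph{and} simultaneously that the renormalized state remains within the $1/12$ ball. This couples the constant in \eqref{supple_eq:grver_sp_condition} (the factor $2^p\cdot 33^3/625$) to the explicit $\delta'=2^{-14}$, so the numerical constants need careful tracking rather than asymptotic bounds. Once the error estimates are in place, the stated gate and space counts follow mechanically from Lemma~\ref{supple_lem:BE_ampH} (which contributes the $m$ factor and the only classical parameter-finding cost) and Lemma~\ref{supple_lem:QET_by_Tmx} (whose Chebyshev phase sequence is analytic and hence free of numerical instability), with the extra single ancilla for $y\in G_1$ accounting for the space overhead difference from Lemma~\ref{supple_lem:statepre_HS}.
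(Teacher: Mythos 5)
Your proposal follows essentially the same route as the paper's proof: amplified block encoding via Lemma~\ref{supple_lem:BE_ampH} over $G_p^M\times G_1$, conjugation by $U_\psi$, Chebyshev eigenvalue transformation $T_t$ via Lemma~\ref{supple_lem:QET_by_Tmx}, the cubic $\arccos$ remainder controlled by Hoeffding and condition~\eqref{supple_eq:grver_sp_condition}, the $O_{M+1}$ ancilla trick with ${\rm QFT}_{G_1}^\dagger$ and controlled $X^{\otimes pM}$ to merge the $s=\pm1$ branches, and the normalization bound $|\mathcal{N}_t-1|\le 2\sqrt{6\delta'}$ yielding success probability $>0.462$, with $\varepsilon'=\sqrt{\delta'}/t$ and the gate counts inherited from the two lemmas. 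The only discrepancies are cosmetic (e.g., the control polarity of the $X^{\otimes pM}$ gate and whether Hoeffding is applied on $F$ or on the full grid before intersecting with $F$), so the plan is correct as stated.
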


\begin{widetext}
\begin{proof}
    [Proof of Lemma~\ref{supple_lem:statepre_Grover}]
    Let $\varepsilon'=\sqrt{\delta'}/(2^{p+q+2}\sigma)\in (0,1/2)$.
    From the assumption and Lemma~\ref{supple_lem:BE_ampH}, we have a unitary 
    \begin{equation}
        U_{\rm obs}:=\sum_{(\bm{x},y)\in G_{{p}}^M\times G_1}\ket{\bm{x},y}\bra{\bm{x},y}\otimes U_{\rm obs}^{(\bm{x},y)},
    \end{equation}
    where $U_{\rm obs}^{(\bm{x},y)}$ is a $(1,a''=a+\lceil\log_2{(M+1)}\rceil+1,\varepsilon')$-block-encoding of the Hamiltonian $$
    {H}^{(\bm{x},y)}:=\frac{1}{\sigma}\left(yO_{M+1}+\sum_{j=1}^{M} x_jO_j\right)$$
    if $(\bm{x},y)\in F$ with $|F|\geq (1-\delta')2|G_p^M|$.
    Thus, using $U_{\psi}, U_{\psi}^\dagger$, and $U_{\rm obs}$, we can implement a $(1,a''+\log_2 d,0)$-block-encoding of the following Hamiltonian 
    \begin{equation}\label{suppleeq:BE_grover_hamiltonian}
        \sum_{(\bm{x},y)\in G_p^M\times G_1} \tilde{f}(\bm{x},y)\ket{\bm{x},y}\bra{\bm{x},y},~~~\mbox{where}~~~\tilde{f}(\bm{x},y):=\bra{0}^{\otimes a''+\log_2 d}U_{\psi}^\dagger\cdot U_{\rm obs}^{(\bm{x},y)}\cdot U_{\psi}\ket{0}^{\otimes a''+\log_2 d}.
    \end{equation}
    Since $U^{(\bm{x},y)}_{\rm obs}$ is the $\varepsilon'$-precise block encoding of ${H}^{(\bm{x},y)}$, $\tilde{f}(\bm{x},y)$ is close to an affine linear function as
    \begin{equation}\label{supple_eq:spbyparallelG_tildef}
        \left|\tilde{f}(\bm{x},y)-\frac{1}{\sigma}\left(y\langle O_{M+1}\rangle+\sum_{j=1}^{M}x_j\langle O_j\rangle\right)\right|\leq \varepsilon'~~\mbox{for~all}~~(\bm{x},y)\in F.
    \end{equation}
    Thus, Grover-like repetition on Eq.~\eqref{suppleeq:BE_grover_hamiltonian}, that is, Lemma~\ref{supple_lem:QET_by_Tmx} yields the quantum circuit $W$ for an $(a''+\log_2 d)$-block-encoding of the following operator 
    $$
        \sum_{(\bm{x},y)\in G_p^M\times G_1} T_{t}\left(\tilde{f}(\bm{x},y)\right)\ket{\bm{x},y}\bra{\bm{x},y}~~\mbox{and}~~t:=2^{p+q+2}\sigma.
    $$
    Applying $W$ to the state $\ket{+}^{\otimes pM+1}\ket{\bm{0}}$ with $\ket{\bm{0}}=\ket{0}^{\otimes a''+\log_2 d}$, we obtain
    \begin{align}\label{supple_eq:spbyparallelg_bfpostselect}
        W\ket{+}^{\otimes pM+1}\ket{\bm{0}}=\frac{\mathcal{N}_t}{\sqrt{2}}\left[\frac{1}{\mathcal{N}_t}\frac{1}{\sqrt{2^{pM}}}\sum_{\bm{x},y} T_{t}\left(\tilde{f}(\bm{x},y)\right)\ket{\bm{x},y}\right]\ket{\bm{0}}+\sqrt{1-\left(\frac{\mathcal{N}_t}{\sqrt{2}}\right)^2}\ket{{\Phi}^{\perp}},
    \end{align} 
    where $\ket{{\Phi}^{\perp}}$ denotes a normalized quantum state satifying $(\bm{1}\otimes \ket{\bm{0}}\bra{\bm{0}})\ket{{\Phi}^{\perp}}=\bm{0}$ and $\mathcal{N}_t$ is defined as:
    \begin{equation}\label{supple_eq:spbyparallelg_successprob}
        \mathcal{N}_t := \left\|\frac{1}{\sqrt{2^{pM}}}\sum_{\bm{x},y} T_{t}\left(\tilde{f}(\bm{x},y)\right)\ket{\bm{x},y}\right\|.
    \end{equation}
    Measuring the $(a''+\log_2 d)$-qubit ancilla registers in \eqref{supple_eq:spbyparallelg_bfpostselect}, we obtain the outcome $\ket{\bm{0}}$ and the following state with the probability $\mathcal{N}_t^2/2$:
    \begin{equation}\label{supple_eq:spbyparallelg_successstate}
        \frac{1}{\mathcal{N}_t}\frac{1}{\sqrt{2^{pM}}}\sum_{\bm{x},y} T_{t}\left(\tilde{f}(\bm{x},y)\right)\ket{\bm{x},y}.
    \end{equation}
    In the following, we show this quantum state is close to the following state 
    \begin{align}\label{supple_eq:spbyparallelg_approximatedstate}
        &\frac{1}{\sqrt{2^{pM}}}\sum_{(\bm{x},x_{M+1})\in G_p^M\times G_1}\cos\left[t\left(\frac{\pi}{2}-\frac{1}{\sigma}\sum_{j=1}^{M+1}x_j\langle O_j\rangle\right)\right]\ket{\bm{x}}\ket{x_{M+1}}\notag\\[6pt]
        &=\frac{1}{\sqrt{2}}\sum_{s=\pm 1}\sum_{\bm{x}\in G_p^M} \frac{1}{\sqrt{2^{pM}}}e^{2\pi i 2^p \sum_{j=1}^M (sx_j) \pi^{-1}2^{q+1}\langle O_j\rangle}\ket{\bm{x}}\otimes {\rm QFT}_{G_1}\ket{{s}/{4}},
    \end{align}
    where we recall that $\ket{-1/4}\equiv \ket{0}$ and $\ket{1/4}\equiv \ket{1}$ hold by the labelling in Eq.~\eqref{eq:varphi_bijection}.
    Note that the equality in Eq.~(\ref{supple_eq:spbyparallelg_approximatedstate}) can be shown as follows.
    \begin{align}
        &\sum_{(\bm{x},x_{M+1})\in G_p^M\times G_1}\cos\left[t\left(\frac{\pi}{2}-\frac{1}{\sigma}\sum_{j=1}^{M+1}x_j\langle O_j\rangle\right)\right]\ket{\bm{x}}\ket{x_{M+1}}\notag\\[6pt]
        &=\frac{1}{2}\sum_{\bm{x},y} \sum_{s=\pm 1}e^{-is\frac{t}{\sigma}\left(y\langle O_{M+1}\rangle+\sum_{j=1}^{M}x_j\langle O_j\rangle\right)}\ket{\bm{x}}\ket{y}~~~(\because e^{ist{\pi}/2}=e^{2\pi is2^{p+q}\sigma}=1)\notag\\[6pt]
        &=\frac{1}{\sqrt{2}}\sum_{s=\pm 1}\sum_{\bm{x}\in G_p^M} e^{-ist\frac{1}{\sigma}\sum_{j=1}^{M}x_j\langle O_j\rangle}\ket{\bm{x}}\otimes \frac{1}{\sqrt{2}}\sum_{y\in G_1}e^{-ist\frac{1}{\sigma}y\langle O_{M+1}\rangle}\ket{y}\notag\\[6pt]
        &=\frac{1}{\sqrt{2}}\sum_{s=\pm 1}\sum_{\bm{x}\in G_p^M} e^{-ist\frac{1}{\sigma}\sum_{j=1}^{M}x_j\langle O_j\rangle}\ket{\bm{x}} \otimes {\rm QFT}_{G_1}\ket{{-s}/{4}}~~~(\because \langle O_{M+1}\rangle=\frac{\pi(1/4+4l)}{2^{p+q}})\notag\\[6pt]
        &=\frac{1}{\sqrt{2}}\sum_{s=\pm 1}\sum_{\bm{x}\in G_p^M} e^{-is2\pi 2^{p}\sum_{j=1}^{M}x_j\pi^{-1}(2^{q+1})\langle O_j\rangle}\ket{\bm{x}} \otimes {\rm QFT}_{G_1}\ket{{-s}/{4}}.
    \end{align}
    Thus, by applying ${\rm QFT}_{G_1}^\dagger$ and controlled $X^{\otimes pM}$ gate to the state \eqref{supple_eq:spbyparallelg_bfpostselect}, we can prepare the target state:
    \begin{equation}\label{supple_eq:spbyparallelg_bfpostselect2}
        \frac{1}{\sqrt{2^{pM}}}\sum_{\bm{x}\in G_p^M} e^{2\pi i2^p \sum_{j=1}^M x_j {2^{q+1}\pi^{-1}\braket{{O}_j}}} \ket{\boldsymbol{x}}\otimes \ket{+}
    \end{equation}
    up to some Euclidean distance error with the probability $\mathcal{N}_t^2/2$, because $X^{\otimes pM}\ket{\bm{x}}=\ket{-\bm{x}}$ holds.
    Note that ${\rm QFT}_{G_1}^\dagger$ and controlled $X^{\otimes pM}$ does not change the probability to obtain the success flag $\ket{\bm{0}}$.
    Here, we show the corresponding quantum circuit to prepare Eq.~(\ref{supple_eq:spbyparallelg_bfpostselect2}) in Fig.~\ref{fig:sp_Grover_diagram}.
        
    For subsequent proof, we here show that for any $\delta'\in (0,1)$ and any real weight $w_j~(j=1,2,...,M+1)$, a subset $F'$ of $G_{p}^M\times G_1$ defined as
    \begin{equation}
        F':=\left\{(\bm{x},y)\in G_{p}^M\times G_1: \left|\sum_{j=1}^{M+1}w_j x_j\right|< \sqrt{{{\frac{\ln(2/\delta')}{2}\sum_{j=1}^{M+1}|w_j|^2}}}\right\}
    \end{equation}
    has the cardinality $|F'|\geq (1-\delta')2^{pM+1}$.
    This fact can be proved from the Hoeffding's inequality as follows.
    Let us consider independent random variables $X_j~(j=1,2,...,M)$ and $Y$. Here, $X_j$ are identically and uniformly distributed in $G_p$ and $Y$ is uniformly distributed in $G_1$.
    Note that the random variables are upper bounded by $1/2$ from the definition of $G_p$.
    Therefore, from the Hoeffding's inequality, we obtain
    \begin{align}
        {\rm Pr}_{X_1,...,X_M,Y}\left[\left|\sum_{j=1}^{M+1} w_j X_j \right|\geq c \right]\leq 2{\rm exp}\left[{-\frac{2c^2}{{\sum_j |w_j|^2}}}\right].
    \end{align}
    Thus, taking $c$ as
    $c=\sqrt{\frac{\ln (2/\delta')}{2}\sum_{j=1}^{M+1}|w_j|^2}$, we obtain 
    \begin{align}
        {\rm Pr}_{X_1,...,X_M,Y}\left[\left|\sum_{j=1}^{M+1} w_j X_j \right|\geq \sqrt{\frac{\ln (2/\delta')}{2}\sum_{j=1}^{M+1}|w_j|^2} \right]\leq \delta'.
    \end{align}
    Since the independent random variables $X_1,...,X_M,Y$ has uniform distributions, $F'$ has the cardinality $\geq 2^{pM+1}(1-\delta')$.    

    Now, we are ready to show the closeness between Eq.~\eqref{supple_eq:spbyparallelg_successstate} and Eq.~\eqref{supple_eq:spbyparallelg_approximatedstate}.
    Defining $F'$ for $\delta'=2^{-14}$ and $w_j=\braket{O_j}$ as
    \begin{equation}\label{supple_eq:dfn_Fp_for_obs}
        F':=\left\{(\bm{x},y)\in G_{p}^M\times G_1: \left|\sum_{j=1}^{M+1}x_j\langle O_j\rangle\right|< \sqrt{{{\frac{\ln(2/\delta')}{2}\sum_{j=1}^{M+1}|\langle O_j\rangle|^2}}}\right\}
    \end{equation}
    and recalling that $|F'|,|F|\geq (1-\delta')2^{pM+1}$, we proceed as follows.
    \begin{align}
    &\left\|\sum_{(\bm{x},y)\in G_p^M\times G_1}T_t(\tilde{f}\left(\bm{x},y)\right)\ket{\bm{x}}\ket{y}-\sum_{(\bm{x},y)\in G_p^M\times G_1}\cos\left[t\left(\frac{\pi}{2}-\frac{1}{\sigma}\sum_{j=1}^{M+1}x_j\langle O_j\rangle\right)\right]\ket{\bm{x}}\ket{y}\right\|^2\notag\\[6pt]
    &=\sum_{(\bm{x},x_{M+1})\in G_p^M\times G_1} \left|T_t(\tilde{f}(\bm{x},x_{M+1}))-\cos\left[t\left(\frac{\pi}{2}-\frac{1}{\sigma}\sum_{j=1}^{M+1}x_j\langle O_j\rangle\right)\right]\right|^2\notag\\[6pt]
    &\leq 4\times 2\times 2^{pM+1}\delta'+\sum_{(\bm{x},x_{M+1})\in F\cap F'} \left|t\cos^{-1}\left[\tilde{f}(\bm{x},x_{M+1})\right]-t\left(\frac{\pi}{2}-\frac{1}{\sigma}\sum_{j=1}^{M+1}x_j\langle O_j\rangle\right)\right|^2,
    \end{align}
    where we used $|\cos(a)-\cos(b)|\leq |a-b|$ and $G_p^M\times G_1=(F\cap F')\cup(F^{\rm c}\cup (F')^{\rm c})$ in the third line.
    Now, we focus on the case $(\bm{x},x_{M+1})\in F \cap F'$.
    \begin{align}\label{eq:B33}
    |\tilde{f}(\bm{x},x_{M+1})|&\leq \varepsilon'+\left|\frac{1}{\sigma}\sum_{j=1}^{M+1}x_j\langle O_j\rangle\right|~~~(\because \mbox{Eq.~(\ref{supple_eq:spbyparallelG_tildef})})\notag\\[6pt]
    &\leq \varepsilon'+\sqrt{\frac{\sum_{j}|\langle O_j\rangle|^2}{4(M+1)}}\sqrt{\frac{\ln(2/\delta')}{\ln(2d/\delta')}}~~~(\because \mbox{Eq.~(\ref{supple_eq:dfn_Fp_for_obs})})\notag\\[6pt]
    &\leq \frac{\sqrt{\delta'}}{2^{p+q+2}\sigma}+\frac{1}{2^{q+2}}\sqrt{\frac{\ln(2/\delta')}{\ln(2d/\delta')}}<\frac{33/10}{2^{q+2}\sqrt{\ln (2d/\delta')}}<1/4.
    \end{align}
    In the case of $|x|\leq 1/4$,
    $\left|\cos^{-1}(x)-\pi/2 + x \right|\leq \frac{1}{5}|x|^3$ holds, and therefore we obtain
    \begin{align}\label{eq:B34}
    &t\left|\cos^{-1}\left[\tilde{f}(\bm{x},x_{M+1})\right]-\left(\frac{\pi}{2}-\frac{1}{\sigma}\sum_{j=1}^{M+1}x_j\langle O_j\rangle\right)\right|\notag\\[6pt]
    &=t\left|\cos^{-1}\left[\tilde{f}(\bm{x},x_{M+1})\right]-\frac{\pi}{2}+\tilde{f}(\bm{x},x_{M+1})+\left(\frac{1}{\sigma}\sum_{j=1}^{M+1}x_j\langle O_j\rangle-\tilde{f}(\bm{x},x_{M+1})\right)\right|\notag\\[6pt]
    &\leq t\varepsilon'+t\left|\cos^{-1}\left[\tilde{f}(\bm{x},x_{M+1})\right]-\frac{\pi}{2}+\tilde{f}(\bm{x},x_{M+1})\right|\notag\\[6pt]
    &\leq t\varepsilon'+\frac{t}{5}\left|\frac{33/10}{2^{q+2}\sqrt{\ln (2d/\delta')}}\right|^{3}.
    \end{align}
    From the assumption, the non-negative integer $q$ satisfies $\frac{t}{5}\left|\frac{33/10}{2^{q+2}\sqrt{\ln (2d/\delta')}}\right|^{3} \leq \sqrt{\delta'}$,
    then we conclude that
    \begin{align}
    &\left\|\sum_{(\bm{x},y)\in G_p^M\times G_1}T_t(\tilde{f}\left(\bm{x},y)\right)\ket{\bm{x}}\ket{y}-\sum_{(\bm{x},y)\in G_p^M\times G_1}\cos\left[t\left(\frac{\pi}{2}-\frac{1}{\sigma}\sum_{j=1}^{M+1}x_j\langle O_j\rangle\right)\right]\ket{\bm{x}}\ket{y}\right\|\notag\\[6pt]
    &\leq \sqrt{4\times 2\times 2^{pM+1}\delta'+\sum_{(\bm{x},x_{M+1})\in F\cap F'} (2\sqrt{\delta'})^2}\notag\\
    &\leq 2\sqrt{6\delta'}\times \sqrt{2^{pM}}.
    \end{align}
    This leads to 
    \begin{equation}
        1-2\sqrt{6\delta'}\leq \mathcal{N}_t = \left\|\frac{1}{\sqrt{2^{pM}}}\sum_{\bm{x},y} T_{t}\left(\tilde{f}(\bm{x},y)\right)\ket{\bm{x},y}\right\|\leq 1+2\sqrt{6\delta'}
    \end{equation}
    and
    \begin{align}
        &\left\|\frac{1}{\mathcal{N}_t}\sum_{(\bm{x},y)\in G_p^M\times G_1}T_t(\tilde{f}\left(\bm{x},y)\right)\ket{\bm{x}}\ket{y}-\sum_{(\bm{x},y)\in G_p^M\times G_1}\cos\left[t\left(\frac{\pi}{2}-\frac{1}{\sigma}\sum_{j=1}^{M+1}x_j\langle O_j\rangle\right)\right]\ket{\bm{x}}\ket{y}\right\|\notag\\
        &\leq \sqrt{2^{pM}}\times \frac{4\sqrt{6\delta'}}{1-2\sqrt{6\delta'}} < \sqrt{2^{pM}}\times \frac{1}{12}.
    \end{align}
    Therefore, we can prepare the target state up to $1/12$ Euclidean distance error with the probability $\mathcal{N}_t^2/2>0.462$.

    Finally, the gate complexity follows from that of Lemma~\ref{supple_lem:BE_ampH} and Lemma~\ref{supple_lem:QET_by_Tmx}.

\end{proof}
\end{widetext}

Note that we can coherently amplify the success probability in Lemma~\ref{supple_lem:statepre_Grover} by quantum amplitude amplification~\cite{brassard2002quantum}, while this requires a quantum circuit with 3-fold depth compared to the quantum circuit (before measurement) in Fig.~\ref{fig:sp_Grover_diagram}.


\subsection{Maximum mean squared error (MSE) and query complexity}\label{supple_sec;complexity_of_alg}

\subsubsection{The proof of Theorem~\ref{thm:main_query_complexity}}

Here, we evaluate the relation between the root mean squared error $\varepsilon$ and the total queries to the state preparation in Algorithm~\ref{supp_alg:main}.


\begin{proof}
    [Proof of Theorem~\ref{thm:main_query_complexity}]
    The core idea of this proof is similar to the previous methods~\cite{kimmel2015robust,dutkiewicz2022heisenberg,dutkiewicz2023advantage}, but we need carefully to deal with the condition of gradient estimation.
    In the following, we prove this theorem based on the state preparation by Lemma~\ref{supple_lem:statepre_HS}; see remarks after this proof for the case of Lemma~\ref{supple_lem:statepre_Grover}. 
    
    We start by clarifying the statistical property of $g^{(q)}_{j}$ in the step 5 of Algorithm~\ref{supp_alg:main}.
    The truncation in Step 7 guarantees $\tilde{u}^{(q)}_j\in [-1,1]$ for all $q$; we can construct a block-encoding $U'_{\rm SEL}$ for $\{\tilde{O}^{(q)}_j\}$, which is assumed to be accessible in Lemma~\ref{supple_lem:statepre_HS}, from the block encodings of $\{O_j\}$; see Fig.~\ref{supplefig:LCU_O_u}.
    Therefore, using the Lemma~\ref{supple_lem:statepre_HS}, we can prepare the quantum state in Step 4.
    
    Now, we consider the case that the following condition for gradient estimation holds at the beginning of an iteration $q$: for all $j$,
    \begin{equation}\label{supple_eq:initcondition4gradest}
        \left|\braket{\tilde{O}_j^{(q)}}\right|=\left|\frac{\braket{{O}_j}-\tilde{u}_j^{(q)}}{2}\right|\leq 2^{-q-1}.
    \end{equation}
    Then, a single shot measurement result $\bm{k}:=(k_1,...,k_M)\in G_p^M$ in Step 4 follows 
    $$
    {\rm Pr}\left[\left|k_j-\frac{2^q(\braket{O_j}-\tilde{u}_j^{(q)})}{\pi}\right|>\frac{3}{2^p}\right]\leq \frac{1}{3},
    $$
    for every $j=1,2,...,M$, from the analysis of gradient estimation in Lemma~\ref{supple_lem:original_gradest}.
    If we take $p=5$, then the additive error $3/2^p$ is smaller than $1/2\pi$, and therefore, the temporal estimate $\tilde{u}^{(q)}_j+\frac{\pi}{2^q}k_j$ becomes a 1-bit more precise estimate of $\braket{O_j}$ at least 2/3 probability.
    However, the choice of $p=5$ is sufficient but not tight.
    Here, we employ the following tighter bound that we numerically found~\footnote{More precisely, in the setup of Lemma~\ref{supple_lem:original_gradest} without any Euclidean distance error, we numerically evaluate the probability of an event $|k_j-g_j|>1/2\pi$ for $g_j\in [-1/\pi,1/\pi]$ when $p=3$.}: if we take $p=3$, 
    \begin{equation}\label{supple_eq:tightbound_g}
        {\rm Pr}\left[\left|k_j-\frac{2^q(\braket{O_j}-\tilde{u}_j^{(q)})}{\pi}\right|>\frac{1}{2\pi}\right]< 0.18+\frac{1}{12},
    \end{equation}
    holds for every $j$, where the term $1/12$ arises from the Euclidean distance error as well as Lemma~\ref{supple_lem:original_gradest}.
    Since the $g_{j}^{(q)}$ is defined as the coordinate-wise median of independent samples $\bm{k}^{(1)},...,\bm{k}^{(\#)}$, 
    the Hoeffding's inequality for the independent and bounded random variables $\{\chi[|k^{(i)}_j-{2^q(\braket{O_j}-\tilde{u}_j^{(q)})}/{\pi}|>1/(2\pi)]\}_i$ yields
    \begin{equation}
        {\rm Pr}\left[\left|g_j^{(q)}-\frac{2^q(\braket{O_j}-\tilde{u}_j^{(q)})}{\pi}\right|>\frac{1}{2\pi}\right]\leq e^{-\frac{\#}{9}},
    \end{equation}
    for every $j$. Here, $\chi[\bullet]$ denotes the indicator function.
    Therefore, taking $\#:=9\ln (M/\delta^{(q)})$ and using the union bound, 
    we can bound the probability of the event 
    $$\mathbf{A}^{(q)}: \max_j \left|g_j^{(q)}-{2^q(\braket{O_j}-\tilde{u}_j^{(q)})}/{\pi}\right|\leq \frac{1}{2\pi}$$ 
    as ${\rm Pr}[\mathbf{A}^{(q)}]\geq 1-\delta^{(q)}$.
    In the event $\mathbf{A}^{(q)}$, $|\braket{\tilde{O}^{(q+1)}_j}|\leq 2^{-q-2}$ holds even if we truncate $\tilde{u}^{(q+1)}_j$ in Step 7.
    On the other hand, if the condition Eq.~(\ref{supple_eq:initcondition4gradest}) is false, the gradient estimation does not work, and we only say that the measurement result $g^{(q)}_j$ is bounded as $g^{(q)}_j\in [-1/2,1/2]$ because of the definition of $G_p^M$.

    In the case of $q=0$, the condition Eq.~(\ref{supple_eq:initcondition4gradest}) holds, and thus, the event $\mathbf{A}^{(0)}$ occurs with the probability $1-\delta^{(0)}$.
    By repeating this, in branches such that all of $\{\mathbf{A}^{(q)}\}_{q=0}^{q'-1}$ occur, the temporal estimate $\tilde{u}^{(q')}_j$ satisfy $$|\tilde{u}^{(q')}_j-\braket{O_j}|\leq 1/2^{q'}$$ 
    for all $j$.
    Moreover, considering branches such that all of $\{\mathbf{A}^{(q)}\}_{q=0}^{q'-1}$ occur but the complement of $\mathbf{A}^{(q')}$ occurs at the iteration $q'$, 
    we bound the additive error of the final estimate $\tilde{u}_j$ in such branches as follows:
    \begin{align}
        |\tilde{u}_j-\braket{O_j}|&\leq|\tilde{u}_j-\tilde{u}^{(q')}_j|+|\tilde{u}^{(q')}_j-\braket{O_j}|\notag\\
        &\leq \pi\left|\sum_{q\geq q'}\frac{g_j^{(q)}}{2^{q}}\right|+\frac{1}{2^{q'}}\notag\\
        &\leq \frac{1+\pi}{2^{q'}}.
    \end{align}
    In the third line, we use the fact $|g_j^{(q)}|\leq 1/2$.
    Thus, we can calculate the mean squared error of $\hat{u}_j$ as
    \begin{align}
        &\mathbb{E}\left[\left(\hat{u}_j-\braket{O_j}\right)^2\right] \notag\\
        &\leq \frac{1}{2^{2(q_{\rm max}+1)}}+ (1+\pi)^2 \sum_{q=0}^{q_{\rm max}} \frac{\delta^{(q)}}{4^q}\notag\\
        &\leq \frac{1}{2^{2(q_{\rm max}+1)}}+ (1+\pi)^2 c 2^{-2q_{\rm max}+1}\notag\\
        &\leq \varepsilon^2,
    \end{align}
    where we defined $q_{\rm max}:=\lceil \log_2 (1/\varepsilon)\rceil$.
    In the final line, we used the fact $c\in (0,3/(8(1+\pi)^2)]$.
    The inequality holds for all $j$, which completes the proof of Eq.~(\ref{eq:thm2_maxmse}).

    Next, we count the total queries to the state preparation $U_{\psi}$.
    At each iteration $q$, we prepare $\#:=9\ln(M/\delta^{(q)})$ copies of a quantum state that approximates the probing state $\ket{\Upsilon(q)}$.
    If we prepare a single copy of this state using the method in Lemma~\ref{supple_lem:statepre_HS}, then the $\mathcal{O}(2^{p+q+2}\sigma)$ queries to $U_{\psi}$ and its inverse are required, and therefore the total queries are calculated as
    \begin{align}
        &\sum_{q=0}^{q_{\rm max}} 2^{p+q+2}\sigma \times 9\ln(M/\delta^{(q)}) \notag\\
        &=9\cdot 2^{p+2}\sigma \times \sum_{q=0}^{q_{\rm max}} 2^{q}\ln(M/\delta^{(q)}) \notag\\
        &=9\cdot 2^{p+2}\sigma \left[2^{q_{\rm max}+1}\ln \frac{8M}{c}-(q_{\rm max}+2)\ln 8 - \ln\frac{M}{c}\right]\notag\\
        &=\mathcal{O}(\varepsilon^{-1}\sqrt{M}\log M).\notag
    \end{align}
\end{proof}

Here, we discuss the case that we employ Lemma~\ref{supple_lem:statepre_Grover} to the state preparation in Step 4 of Algorithm~\ref{supp_alg:main}.
This alternative method can prepare the probing state $\ket{\Upsilon(q)}$ under the following conditions: (i) $|\braket{\tilde{O}^{(q)}_j}|\leq 2^{-q-1}$ and (ii) the iteration step $q$ satisfies Eq.~(\ref{supple_eq:grver_sp_condition}).
In the case that only the condition (i) (or equivalently, Eq.~(\ref{supple_eq:initcondition4gradest})) is violated, the quantum circuit in Fig.~\ref{fig:sp_Grover_diagram} for Lemma~\ref{supple_lem:statepre_Grover} yields a $pM$-qubit quantum state that may be far from the target probing state.
However, this does not affect the proof of Theorem~\ref{thm:main_query_complexity} because it is only required that the measurement results at the end of Fig.~\ref{fig:sp_Grover_diagram} are in the range of $[-1/2,1/2]$.
On the other hand, the condition (ii) restricts the usage of the alternative method, as discussed in Sec.~\ref{sec:numerical_exp}.

\subsubsection{The proof of Theorem~\ref{thm:main_thm_improved}}\label{sec:proof_improved_thm}

In this Appendix, we prove Theorem~\ref{thm:main_thm_improved} by explicitly constructing a quantum algorithm.
The quantum algorithm is the same as Algorithm~\ref{supp_alg:main} except for the Step 4. 
For Theorem~\ref{thm:main_thm_improved}, we use an improved version of Lemma~\ref{supple_lem:statepre_HS} or Lemma~\ref{supple_lem:statepre_Grover}, with a \textit{known} parameter 
\begin{equation}
    \mathcal{B}_M\in \left[\left\|\sum_{j=1}^M O_j^2\right\|,M\right]
\end{equation}
for the state preparation in Step 4.
At the level of quantum circuit, we modify the following in Fig.~\ref{fig:spbyHS_fullcircuit} for the improved Lemma~\ref{supple_lem:statepre_HS}: 
\begin{itemize}
    \item Replace $U_{\rm obs}^{(\bm{x})}$ with a quantum circuit $V_{\rm obs}^{(\bm{x})}$ for $(\sqrt{\delta'}/2^{p+q+1}\overline{\sigma})$-precise block encoding of the Hamiltonian
    $(\overline{\sigma})^{-1}\sum_{j=1}^M x_j O_j$, where $\overline{\sigma}$ is defined as
    \begin{equation}
    \overline{\sigma}:=\sqrt{2\mathcal{B}_M\ln(2d/\delta')}=\mathcal{O}(\sqrt{\mathcal{B}_M\log d}).
    \end{equation}
    \item Set the evolution time $t=2^{p+q+1}\overline{\sigma}$
    \item Apply the following phase oracle for given $\tilde{u}^{(q)}_{j}\in [-1,1]$ ($j=1,2,...,M$) at the end of circuit
    \begin{align}\label{eq:phase_oracle_for_final_phase_adjust}
        &\ket{\bm{x}}\mapsto e^{-2\pi i 2^p \sum_{j=1}^M x_j2^{q}\pi^{-1} \tilde{u}^{(q)}_j}\ket{\bm{x}}\notag\\
        &=\bigotimes_{j=1}^M \left(e^{-i (2^{p+q+1}\tilde{u}^{(q)}_j)x_j}\ket{x_j}\right),
    \end{align}
    which can be easily constructed due to the separable structure.
\end{itemize}
The total number of qubits is given by Eq.~\eqref{eq:Lemma13_total_ancilla_reply}, where the additive constant 9 can be reduced to 7 because the block encoding of $\tilde{O}_j^{(q)}$ is unnecessary.

If the iteration step $q$ in Algorithm~\ref{supp_alg:main} exceeds a new threshold specified below, we can also use an improved version of Lemma~\ref{supple_lem:statepre_Grover}.
In this case, the modification for the circuit in Fig.~\ref{fig:sp_Grover_diagram} is as follows.
\begin{itemize}
    \item Replace $U_{\rm obs}^{(\bm{x},y)}$ with a quantum circuit $V_{\rm obs}^{(\bm{x},y)}$ for $(\sqrt{\delta'}/2^{p+q+2}\overline{\sigma}')$-precise block encoding of the Hamiltonian $(\overline{\sigma}')^{-1}\sum_{j=1}^{M+1} x_j \tilde{O}^{(q)}_j,$
    where $\tilde{O}_{M+1}^{(q)}:=O_{M+1}$ and $\overline{\sigma}'$ is defined as
    \begin{align}
        \overline{\sigma}'&:=\left\lceil\sqrt{2\left(\mathcal{B}_M+4^{-q-1}\right)\ln(2d/\delta')}\right.\notag\\
        &~~~~~+\left.\frac{1}{2^{q+1}}\sqrt{2(M+1){\ln(2/\delta')}}\right\rceil\notag\\
        &=\mathcal{O}(\sqrt{\mathcal{B}_M\log d}),
    \end{align}
    where the final evaluation holds because of the threshold Eq.~\eqref{eq:improved_G_SP_condition1}.
    \item Set $t=2^{p+q+2}\overline{\sigma}'$.
\end{itemize}
Here, the total number of qubits is given by Eq.~\eqref{eq:Lemma14_total_ancilla_reply}.
This state preparation is available when the iteration step $q$ satisfies (we recall $p=3$ and $\delta'=2^{-14}$)
\begin{equation}\label{eq:improved_G_SP_condition2}
    \frac{t}{5}\left(\frac{1}{2^{q+2}\overline{\sigma}'}\left(\frac{\sqrt{\delta'}}{2^{p}}+\sqrt{2(M+1){\ln(2/\delta')}}\right)\right)^3\leq \sqrt{\delta'}
\end{equation}
for ensuring the approximation error in the final state and
\begin{equation}\label{eq:improved_G_SP_condition1}
    2^{q+1}\sqrt{2\left(\mathcal{B}_M+4^{-q-1}\right)\ln(2d/\delta')}>\sqrt{2(M+1){\ln(2/\delta')}}
\end{equation}
for $\overline{\sigma}'=\mathcal{O}(\sqrt{\mathcal{B}_M\log d})$.

Because the final state in Figs.~\ref{supple_lem:statepre_HS} or \ref{fig:sp_Grover_diagram} after these modifications is 1/12-close to the target state Eq.~\eqref{eq:targetstate} in the Euclidean distance as proved below, we complete the Step~4 by measuring the final $3M$-qubit state by Fourier basis (see Appendix~\ref{sec:list_circuits}).
Thus, the same analysis on MSE in the proof of Theorem~\ref{thm:main_query_complexity} holds, and as for the query complexity, we simply replace $\sigma$ with $\overline{\sigma}$ or $\overline{\sigma}'$.
Consequently, the total number of queries to $U_{\psi}$ and $U_{\psi}^\dagger$ is given by $\mathcal{O}(\varepsilon^{-1}\sqrt{\mathcal{B}_M\log d}\log M)$.

In the following, we show the correctness of these modifications.
First of all, Lemma~\ref{supple_lem:BE_ampH} holds when we replace $\sigma$ with $\overline{\sigma}$ from the discussion after the proof of Lemma~\ref{supple_lem:BE_ampH}.
The same replacement is available in Lemma~\ref{supple_lem:statepre_HS}, so we can prepare the $pM$-qubit quantum state
\begin{equation}\label{eq:apdx_lem13_improved_prestate}
    \frac{1}{\sqrt{2^{pM}}}\sum_{\bm{x}\in G_p^M} e^{2\pi i2^p \sum_{j=1}^M x_j {2^{q}\pi^{-1}\braket{{O}_j}}} \ket{\boldsymbol{x}}
\end{equation}
up to 1/12 Euclidean distance error, along with the parameter $Q=\mathcal{O}(2^{p+q+1}\overline{\sigma})$ in the statement of Lemma~\ref{supple_lem:statepre_HS}.
Thus, applying the phase oracle Eq.~\eqref{eq:phase_oracle_for_final_phase_adjust}, we obtain the target state Eq.~\eqref{eq:targetstate}.
Note that the Euclidean distance error remains unchanged under the unitary transformation.

Under the same assumption in Lemma~\ref{supple_lem:statepre_Grover}, 
a subset $F'\subset G_{p}^M\times G_1$ defined as Eq.~\eqref{supple_eq:dfn_Fp_for_obs} with $\tilde{O}^{(q)}_j$ instead of $O_j$ satisfies $|F'|\geq (1-\delta')|G_{p}^M\times G_1|$.
Also, we have another subset $F\subset G_{p}^M\times G_1$ with $|F|\geq (1-\delta')|G_p^M \times G_1|$, and in this subset, we obtain
\begin{equation}
    \left\|\sum_{j=1}^{M+1} x_j O_j\right\|\leq \frac{1}{2}{\sqrt{2\left(\mathcal{B}_M+4^{-q-1}\right)\ln(2d/\delta')}}.
\end{equation}
This follows from the Lemma~\ref{supple_lem:BE_ampH} and the discussion after this lemma.
By combining the above observations, we can say that if $(\bm{x},y)\in F\cap F'$, then 
\begin{align}
    &\left\|\sum_{j=1}^{M}x_j\tilde{u}^{(q)}_j\bm{1}-yO_{M+1}\right\|\notag\\
    &\leq    \left|\sum_{j=1}^{M}x_j\tilde{u}^{(q)}_j-y\frac{\pi(1/4+4l)}{2^{p+q}}\right|\notag\\
    &=\left|\sum_{j=1}^{M+1}x_j\langle O_j\rangle-2\sum_{j=1}^{M+1}x_j\langle \tilde{O}^{(q)}_j\rangle\right| \notag\\
    &\leq \left|\sum_{j=1}^{M+1}x_j\langle O_j\rangle\right|+\sqrt{{{2{\ln(2/\delta')}\sum_{j=1}^{M+1}\langle \tilde{O}^{(q)}_j\rangle^2}}}\notag\\
    &\leq \left\|\sum_{j=1}^{M+1}x_j O_j\right\|+\frac{1}{2}\frac{1}{2^{q}}\sqrt{2(M+1){\ln(2/\delta')}}
\end{align}
and thus,
\begin{align}
    &\left\|\sum_{j=1}^{M+1} x_j \tilde{O}^{(q)}_j\right\|\notag\\
    &\leq \frac{1}{2}\left\|\sum_{j=1}^{M+1} x_j {O}_j\right\|+\frac{1}{2}\left\|\sum_{j=1}^{M} x_j \tilde{u}^{(q)}_j\bm{1}-yO_{M+1}\right\|\notag\\
    &\leq \left\|\sum_{j=1}^{M+1}x_j O_j\right\|+\frac{1}{2}\frac{1}{2^{q+1}}\sqrt{2(M+1){\ln(2/\delta')}}\notag\\
    &\leq \frac{\overline{\sigma}'}{2}.
\end{align}
Therefore, by using Lemma~\ref{supple_lem:uniform_amp} for a quantum circuit for the block encoding of $\propto \sum_{j} x_j\tilde{O}^{(q)}_j$, we obtain
\begin{equation}
    V_{\rm obs}=\sum_{(\bm{x},y)\in  G_{p}^M\times G_1} \ket{\bm{x},y}\bra{\bm{x},y}\otimes V^{(\bm{x},y)}_{\rm obs},
\end{equation}
where $V^{(\bm{x},y)}_{\rm obs}$ is $(\sqrt{\delta'}/2^{p+q+2}\overline{\sigma}')$-precise block encoding of
\begin{equation}
    \frac{1}{\overline{\sigma}'}\left(y\tilde{O}^{(q)}_{M+1}+\sum_{j=1}^{M} x_j \tilde{O}^{(q)}_j\right)\equiv \frac{1}{\overline{\sigma}'}\sum_{j=1}^{M+1} x_j \tilde{O}^{(q)}_j,
\end{equation}
if $(\bm{x},y)\in F\cap F'$.
Then, replacing the $U_{\rm obs}$ and $t=2^{p+q+2}\sigma$ with $V_{\rm obs}$ and $t=2^{p+q+2}\overline{\sigma}'$, respectively, in the circuit Fig.~\ref{fig:sp_Grover_diagram}, we can prepare the target state $\ket{\Upsilon(q)}\ket{+}$ \textit{approximately}
after we obtain all $0$ outcomes in the other registers of the circuit Fig.~\ref{fig:sp_Grover_diagram}.

To ensure that the approximation error is at most 1/12 in the Euclidean distance, the non-negative integer $q$ needs to satisfy Eq.~\eqref{eq:improved_G_SP_condition2}. 
We obtain this condition in a similar way to prove Eqs.~\eqref{eq:B33} and \eqref{eq:B34}.
Also, this condition is roughly evaluated as
\begin{equation}\label{eq:lem14_modified_condition_1}
    2^{q+1}\sqrt{2\left(\mathcal{B}_M+4^{-q-1}\right)\ln(2d/\delta')} \geq \mathcal{O}\left(M^{\frac{3}{4}}\right)
\end{equation}
which is asymptotically tighter than the other condition Eq.~\eqref{eq:improved_G_SP_condition1}.

\subsection{Circuit implementation}\label{sec:list_circuits}
We first show quantum circuits for Lemma~\ref{supple_lem:BE_ampH}.
For clarification, input oracle unitary gates are provided in Fig.~\ref{fig:input_oracles}.
Final output circuit of Lemma~\ref{supple_lem:BE_ampH} is given by Fig.~\ref{fig:amplified_BE_circuit}, which corresponds to $U_{\rm obs}$ in Eq.~\eqref{supple_eq:amplified_BE_obs}. 
In Figs.~\ref{fig:BEofnormalizedUsel}--\ref{fig:BEofsinH}, we describe an efficient circuit construction of (a slightly modified version of) $U_{\rm SEL}'$ with a single use of $U_{\rm SEL}$; see Remark~\ref{rem:design_prepare}.

We give quantum circuits for preparing $\ket{\Upsilon(q)}$ via Hamiltonian simulation protocol Lemma~\ref{supple_lem:statepre_HS}.
The full circuit is shown in Fig.~\ref{fig:spbyHS_fullcircuit}.
In particular, the gate $U_{\rm obs}^{(\bm{x})}$ and its inverse contained in this figure are decomposed as in Fig.~\ref{fig:amplified_BE_circuit}, where $\mathbf{p}=(3,...,3)$.
When we use the circuits in Figs.~\ref{fig:BEofnormalizedUsel}--\ref{fig:BEofsinH} for $U_{\rm SEL}'$, the total number of qubits in the full circuit Fig.~\ref{fig:spbyHS_fullcircuit} is given by
\begin{equation}\label{eq:Lemma13_total_ancilla_reply}
    3M+\lceil\log_2 M\rceil+\log_2 d+a+9.
\end{equation}
Note that each $B_j$ in Fig.~\ref{fig:BEofnormalizedUsel} is replaced by the circuit in Fig.~\ref{supplefig:LCU_O_u}.

Then, we show quantum circuits for preparing $\ket{\Upsilon(q)}$ via Grover-like repetition Lemma~\ref{supple_lem:statepre_Grover}.
In this case, the full quantum circuit is shown in Fig.~\ref{fig:sp_Grover_diagram}, which is simpler than that of Hamiltonian-simulation based protocol.
When we construct $U_{\rm obs}^{(\bm{x},y)}$ by the circuits in Figs.~\ref{fig:BEofnormalizedUsel}--\ref{supplefig:LCU_O_u}, the total number of qubits in the full circuit Fig.~\ref{fig:sp_Grover_diagram} is calculated as
\begin{equation}\label{eq:Lemma14_total_ancilla_reply}
    3M+\lceil\log_2 (M+1)\rceil+\log_2 d+a+8.
\end{equation}

Finally, we make a remark on Step~3 in Algorithm~\ref{alg:main} (or Step~4 in Algorithm~\ref{supp_alg:main}).
To complete this step, we simply measure only the $3M$ qubits of Fig.~\ref{fig:spbyHS_fullcircuit} in the computational basis after inverse QFTs.
This step does not require any post-selection because the final output state without post-selection is ensured to be $1/12$-close to the separable state $\ket{\bm{0}}\ket{\Upsilon(q)}$.
Similarly, we can remove the post-selection in Lemma~\ref{supple_lem:statepre_Grover} by using a single-step amplitude amplification and adjusting $\delta'$ to ensure the closeness condition on a final output state.

\section{Details on numerical simulation}\label{apdx:num_sim_details}


To solve the problem in Sec.~\ref{sec:num_simulation}, our adaptive Algorithm~\ref{alg:main} with Lemma~\ref{thm:sp_HS} uses
\begin{equation}\label{eq:upperbound_our_query_alg1}
    T_{\rm adapt}:=\sum_{q=0}^{q_{\rm max}} 2Q(q)\times 9\ln\left(\frac{M}{\delta^{(q)}}\right)=\mathcal{O}(\varepsilon^{-1}\sqrt{M}\log M)
\end{equation}
queries of $U_{\psi}$ and its inverse in total, where $q_{\rm max}:=\lceil \log_2(1/\varepsilon)\rceil$ for a target root MSE $\varepsilon$; see the proof of Theorem~\ref{thm:main_query_complexity} in Appendix~\ref{supple_sec;complexity_of_alg}.
Here, $Q(q)$ is the minimal integer $Q$ satisfying $4t^{Q}/(2^Q Q!)\leq 2^{-17}$ for the total evolving time $t=2^{5+q}\sqrt{2M\ln(2^{11} d)}$, and we use an explicit upper bound $Q(q)< 1.5t+126$ for the numerical simulation.
It should be noted that $T_{\rm adapt}$ itself is an upper bound of total queries in Algorithm~\ref{alg:main} to theoretically ensure root MSE $\leq \varepsilon$ for every target observables, and thus the required queries might be smaller than the upper bound in practice.
In addition, from the entire circuit diagrams in Appendix~\ref{sec:list_circuits}, the total number of qubits required for quantum circuits in our algorithm is calculated as Eqs.~\eqref{eq:Lemma13_total_ancilla_reply} and \eqref{eq:Lemma14_total_ancilla_reply}.

For comparison, we briefly review the previous method in Ref.~\cite{PhysRevLett.129.240501}.
The method essentially uses the gradient estimation method~\cite{gilyen2019optimizing} that can be summarized as the following theorem.
\begin{thm}[Theorem~25 in Ref.~\cite{gilyen2019optimizing}]\label{thm:gilyen_gradest_thm25}
    Let $\bm{x}\in \mathbb{R}^M$, $\varepsilon_{\rm add}\leq c\in \mathbb{R}_{+}$ be fixed constants and suppose $f:\mathbb{R}^M\to \mathbb{R}$ is analytic and satisfies the following: for every $k\in \mathbb{N}$ and $\alpha\in [M]^k$, $k$-th order partial derivatives of $f$ at $\bm{x}$ (denoted by $\partial_{\alpha} f(\bm{x})$) are upper bounded as $|\partial_{\alpha} f(\bm{x})|\leq c^k 2^{k/2}$.
    Then, there is a quantum algorithm that works for all such functions, and outputs an $\varepsilon_{\rm add}$-approximate gradient $\tilde{\bm{g}}\in \mathbb{R}^M$ such that $\|\nabla f(\bm{x})-\tilde{\bm{g}}\|_{\infty}\leq \varepsilon_{\rm add}$ with probability at least $1-\delta$, using $\mathcal{\widetilde{O}}(c\sqrt{M}\log(M/\delta)/\varepsilon_{\rm add})$ queries to the phase oracle $O_f$.
\end{thm}
\noindent
Then, taking the function $f$ in this theorem as
\begin{equation}\label{eq:google_targ_fn_gradest}
    f(\bm{x}):=\frac{1}{2}-\frac{1}{2}{\rm Im}\left[\langle \psi|\prod_{j=1}^M e^{-2ix_j O_j}|\psi\rangle\right],
\end{equation}
we can estimate the expectation values of target observables because $\nabla f(\bm{0})$ is the collection of target expectation values.
Note that this function for observables with $\|O_j\|\leq 1$ satisfies the condition in Theorem~\ref{thm:gilyen_gradest_thm25} by taking $c=2$.

For circuit implementation, Ref.~\cite{PhysRevLett.129.240501} first provides an explicit quantum circuit for the \textit{probability} oracle to encode the function Eq.~\eqref{eq:google_targ_fn_gradest}.
Here, the probability oracle $U_f$ for a function $f$ is defined as
\begin{align}
    U_f:&\ket{\bm{x}}\ket{\bm{0}}\notag\\
    &\mapsto \ket{\bm{x}}\left(\sqrt{f(\bm{x})}\ket{1}\ket{\phi_1(\bm{x})}+\sqrt{1-f(\bm{x})}\ket{0}\ket{\phi_0(\bm{x})}\right)
\end{align}
for all $\bm{x}\in G_n^{M}$ ($n$ is a positive integer), where $\ket{\phi_0(\bm{x})}$ and $\ket{\phi_1(\bm{x})}$ are arbitrary (normalized) quantum states.
The quantum circuit for the probability oracle has a single use of the state preparation oracle $U_{\psi}$, controlled time evolutions for target observables, and additional 1 ancilla qubit except for the target system qubits.
Then, we obtain the phase oracle for the function Eq.~\eqref{eq:google_targ_fn_gradest} from the probability oracle.
\begin{thm}[Theorem~14 in Ref.~\cite{gilyen2019optimizing}]\label{thm:oracle_conversion_thm}
    Let $U_f$ be a probability oracle for a function $f(\bm{x})$. For any $\epsilon\in (0,1/3)$, we can implement an $\epsilon$-approximate phase oracle $\tilde{O}_f$ such that
    \begin{equation}
        \left\|\tilde{O}_f\ket{\Psi}\ket{\bm{0}}-{O}_f\ket{\Psi}\ket{\bm{0}}\right\|\leq \epsilon
    \end{equation}
    for all input states $\ket{\Psi}$, where $O_f$ denotes an exact phase oracle for $f$. This implementation uses $\mathcal{O}(\log(1/\epsilon))$ invocations of the probability oracle $U_f$ or its inverse and $\mathcal{O}(\log \log(1/\epsilon))$ additional ancilla qubits beyond those required by $U_f$.
\end{thm}
\noindent
In the numerical simulation, we assume the oracle conversion (from the probability oracle to a fractional phase oracle) is perfect $\epsilon=0$ and ignore the overhead in space complexity, for simplicity.
As for the query complexity, we only consider the overhead $10$ (for the $2$-round oblivious amplitude amplification), while the actual overhead is larger.

Given the phase oracle (and its fractional version) for the function Eq.~\eqref{eq:google_targ_fn_gradest}, the estimation method in Theorem~\ref{thm:gilyen_gradest_thm25} effectively constructs a phase oracle for the following function 
\begin{align}
    h(\bm{x})&:= \frac{1}{2^{\lceil \log_2(3cr)\rceil}}\sum_{l=-m}^m a_{l}^{(2m)}
    f(lr\bm{x})\notag\\
    &\simeq \sum_{j=1}^M \frac{r\braket{O_j}}{2^{\lceil \log_2(3cr)\rceil}}{x_j}+\cdots,
\end{align}
where $a_{l}^{(2m)}=0$ $(l=0)$ and for $l\neq 0$, 
\begin{equation}
    a_{l}^{(2m)}:=\frac{(-1)^{l-1}}{l}\frac{\binom{m}{|l|}}{\binom{m+|l|}{|l|}}.
\end{equation}
Here, $m=\lceil \log (c\sqrt{M}/\varepsilon_{\rm add})\rceil$, and $r$ is a rescaled parameter defined as
\begin{equation}\label{eq:rescale_param_r}
    r^{-1}:=9cm \sqrt{M}\left(81\cdot 8\cdot 42\pi m \frac{c\sqrt{M}}{\varepsilon_{\rm add}}\right)^{1/2m}.
\end{equation}
This function $h$ is more close to a linear function in the grid $G_{n}^M$ than the original function $f$, due to the higher-order central-difference formula.
Using the (effective) phase oracle $O_h$ on $G_n^M$, we can estimate the gradient of $h$ at $\bm{x}=0$, i.e., the rescaled expectation values ${r\braket{O_j}}/{2^{\lceil \log_2(3cr)\rceil}}$.
More precisely, measuring the following state by the Fourier basis, we obtain estimates $\bm{k}=(k_1,...,k_M)\in G_n^M$ for the gradient:
\begin{equation}\label{eq:gilyen_google_meas_state}
    \frac{1}{\sqrt{2^{nM}}}\sum_{\bm{x}\in G_n^M} e^{2\pi i 2^n h(\bm{x})}\ket{\bm{x}}.
\end{equation}
Then, we can estimate $\braket{O_j}$ by $(2^{\lceil \log_2(3cr)\rceil}/r)k_j$.
According to Ref.~\cite{gilyen2019optimizing} (see Theorem~21 of Ref.~\cite{gilyen2019optimizing}), we take the number of qubits $n$ for the 1 dimensional grid $G_n$ as
\begin{equation}
    n=\left\lceil \log_2({4}/{\varepsilon_{\rm add}r})\right\rceil+\lceil \log_2 (3cr)\rceil
\end{equation}
to successfully bound the estimation error.
Thus, the total number of qubits required for quantum circuits in this method is given by
\begin{equation}\label{eq:num_ancilla_google}
    \left(\left\lceil \log_2({4}/{\varepsilon_{\rm add}r})\right\rceil+\lceil \log_2 (3cr)\rceil\right)M+1+\log_2 d.
\end{equation}
To enhance the success probability to $1-\delta$, we take the coordinate-wise median of independent $N_{\rm med}=\mathcal{O}(\log(M/\delta))$ samples $\bm{k}^{(1)},\bm{k}^{(2)}...$ from the gradient estimation procedure as a final estimate.
As a result, the entire procedure for the observable estimation based on the Theorem~\ref{thm:gilyen_gradest_thm25} uses 
\begin{align}\label{eq:total_query_thm25_googlef}
    &T_{\rm NonIter}:=N_{\rm med}\times \sum_{l=-m}^m  \left\lceil2\pi\cdot 2^{\lceil \log_2(4/\varepsilon_{\rm add}r)\rceil} \cdot \left|a_{l}^{(2m)}\right| \right\rceil \notag\\
    &= N_{\rm med}\times\mathcal{O}\left(\frac{c\sqrt{M}}{\varepsilon_{\rm add}}\log \left(\frac{c\sqrt{M}}{\varepsilon_{\rm add}}\right)\log\log \left(\frac{c\sqrt{M}}{\varepsilon_{\rm add}}\right)\right)
\end{align}
queries to the state preparation $U_{\psi}$ and its inverse (up to the oracle conversion overhead in Theorem~\ref{thm:oracle_conversion_thm}); the derivation of this query count is provided in the proof of Theorem~25 in Ref.~\cite{gilyen2019optimizing}.

To evaluate MSE, we need full information regarding the output probability distribution for the Fourier-basis measurement on the state Eq.~\eqref{eq:gilyen_google_meas_state}. 
However, direct calculation of the distribution is difficult because of many qubits $nM=\mathcal{O}(M\log(1/\varepsilon_{\rm add}))$.
To overcome this, we perform the following numerical simulation.
Focusing on the top registers related to the estimation for the 1st gradient in Eq.~\eqref{eq:gilyen_google_meas_state}, the reduced state can be written as
\begin{equation}
    \sum_{(x_2,...,x_M)\in G_n^{M-1}} \frac{1}{2^{n(M-1)}}\ket{\Psi(x_2,...,x_M)}\bra{\Psi(x_2,...,x_M)},
\end{equation}
where 
\begin{equation}
    \ket{\Psi(x_2,...,x_M)} = \frac{1}{\sqrt{2^n}} \sum_{x_1\in G_n} e^{2\pi i 2^n {h}(x_1,x_2,...,x_M)} \ket{x_1}.
\end{equation}
Therefore, we can estimate the full probability distribution $p(k_1)$ for the Fourier-basis measurement on the $x_1$ registers by
\begin{equation}
    p(k_1) \simeq \frac{1}{\#} \sum_{i=1}^{\#} \left|\bra{k_1}{\rm QFT}_{G_n}^\dagger \ket{\Psi(x^{(i)}_2,...,x^{(i)}_M)}\right|^2,
\end{equation}
where $i=1,2,...,\#$ denotes the index of trials and $(x^{(i)}_2,...,x^{(i)}_M)$ is independently and uniform-randomly sampled over $G_{n}^{M-1}$.
Here, we take $\#=10000$ for the numerical simulation in this work.

Given the probability distribution of a single shot ${p(k_1)}$, we can calculate the probability distribution of median $k^{{\rm med}}$ calculated by (an odd number) $N_{\rm med}$ independent samples of $k_1$, from the theory of order statistics~\cite{shao2008mathematical}.
The explicit form of the median distribution is given by 
\begin{widetext}
\begin{equation}
    {\rm Pr}(k^{{\rm med}}=x_i)=\sum_{l=(N_{\rm med}+1)/2}^{N_{\rm med}} \binom{N_{\rm med}}{l} \left\{P_i^l(1-P_i)^{N_{\rm med}-l}-P_{i-1}^l(1-P_{i-1})^{N_{\rm med}-l}\right\},
\end{equation}
where $G_n=\{x_i\}_{i=1}^{2^n}$ for $x_i=(2i-1)/2^{n+1}-1/2$, and $P_i$ is the cumulative distribution $P_i:={\rm Pr}(k_1\leq x_i)$, $P_0=0$.
Using the median distribution, we can calculate MSE of estimates in the observable estimation method. 
The corresponding number of total queries is given by Eq.~\eqref{eq:total_query_thm25_googlef}.

To confirm the tightness of $N_{\rm med}=\mathcal{O}(\log(1/\varepsilon))$ ($M$ is fixed) for achieving MSE $\varepsilon^2$, we have focused on a modified factor $\tilde{T}_{\rm NonIter}$ in Fig.~\ref{fig:rmse_vs_T_graph_rescaled}.
In the following, we describe how the modification on $T_{\rm NonIter}$ was performed.
From Eqs.~\eqref{eq:rescale_param_r} and \eqref{eq:total_query_thm25_googlef}, we can obtain 
\begin{align}
    T_{\rm NonIter}&= N_{\rm med}\times \sum_{l=-m}^m  \left\lceil2\pi\cdot 2^{\lceil \log_2(4/\varepsilon_{\rm add}r)\rceil} \cdot \left|a_{l}^{(2m)}\right| \right\rceil \notag\\
    &\sim N_{\rm med} \times \frac{c m\sqrt{M}}{\varepsilon_{\rm add}} \sum_{l=-m}^m \left|a_{l}^{(2m)}\right| \times \left(81\cdot 8\cdot 42\pi m \frac{c\sqrt{M}}{\varepsilon_{\rm add}}\right)^{1/2m}\\
    &\sim N_{\rm med} \times \frac{c\sqrt{M}}{\varepsilon_{\rm add}} m\log(m) \times \left(81\cdot 8\cdot 42\pi m \frac{c\sqrt{M}}{\varepsilon_{\rm add}}\right)^{1/2m}
\end{align}
where we keep only dominant factors in the second and third lines.
As $m$ increases, the factor $\left(81\cdot 8\cdot 42\pi m \frac{c\sqrt{M}}{\varepsilon_{\rm add}}\right)^{1/2m}$ decreases and converges to a constant, while the other terms increase.
Thus, to mitigate the effect of the contrary and asymptotically negligible factor, we define $\tilde{T}_{\rm NonIter}$ as
\begin{equation}\label{eq:def_T_noniter_rescaled}
    \tilde{T}_{\rm NonIter}:=\frac{1}{\left(81\cdot 8\right)^{1/2m}}T_{\rm NonIter}=N_{\rm med}\times\mathcal{O}\left(\frac{c\sqrt{M}}{\varepsilon_{\rm add}}\log \left(\frac{c\sqrt{M}}{\varepsilon_{\rm add}}\right)\log\log \left(\frac{c\sqrt{M}}{\varepsilon_{\rm add}}\right)\right).
\end{equation}
\end{widetext}
In Fig.~\ref{fig:rmse_vs_T_graph_norescaled}, we show a similar plot as Fig.~\ref{fig:rmse_vs_T_graph_rescaled} using the total number of queries $T_{\rm NonIter}$ with no rescaling, instead of $\tilde{T}_{\rm NonIter}$.

While we have discussed the comparison with Ref.~\cite{PhysRevLett.129.240501}, we note that another previous method in Ref.~\cite{van2023quantum} follows a similar estimation procedure with measurements on a uniform superposition state evolved under a phase oracle.
In particular, it also uses the median trick essentially to obtain final estimates for expectation values.
Similar to the numerical simulation suggesting $N_{\rm med}={\Omega}(\log(1/\varepsilon))$ in order to achieve MSE $\leq \varepsilon^2$, such a logarithmic correction regarding $\varepsilon$ in the total queries might be \textit{necessary} in the method of Ref.~\cite{van2023quantum}.

\begin{figure*}[tb]
 \centering
 \begin{tabular}{ccc}
 \includegraphics[scale=0.5]{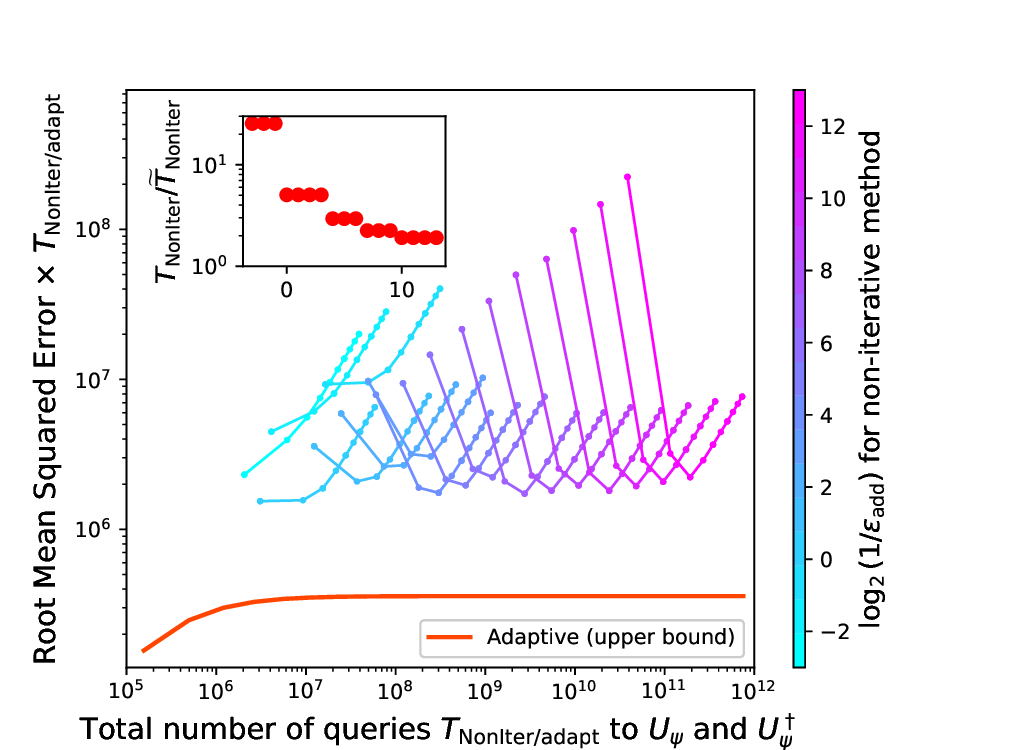}&&\includegraphics[scale=0.5]{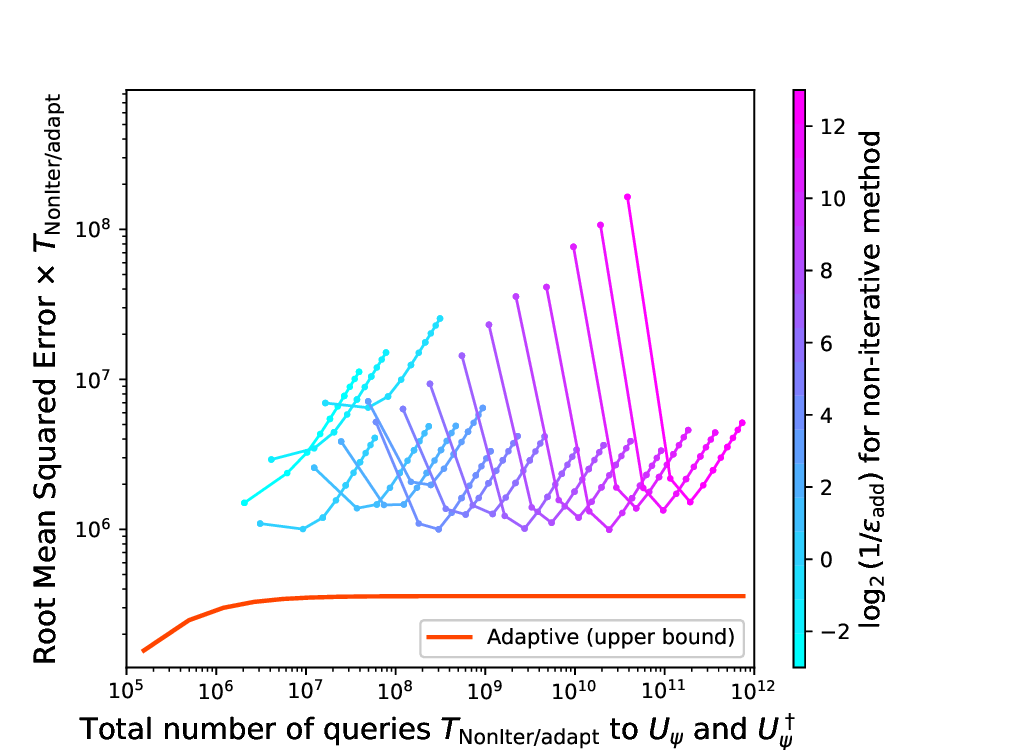}\\
 \\
 (a) Worst-case for the non-iterative method &&(b) Average-case for the non-iterative method\\
 \end{tabular}
 \caption{
 The comparison between the total number of queries (given by $T_{\rm adapt}$ and $T_{\rm NonIter}$ including the oracle-conversion overhead) and the root MSE.
 The colored dots show the (a) worst-case or (b) average-case results of the previous non-iterative method~\cite{PhysRevLett.129.240501} among randomly-generated 26 sets of $\{g_j\}$, with various input pairs $(\varepsilon_{\rm add}, \delta)$.
 For each color, we fix $\varepsilon_{\rm add}$ as in the color bar and vary the failure probability $\delta=2^{-j}$ $(j=0,1,...)$, which gives the number of samples $N_{\rm med}=2\lceil\log(1/\delta)\rceil+1$ for a final median calculation (see Eq.~\eqref{eq:total_query_thm25_googlef}).
 In the upper left subplot of (a), we show the value of rescaling in Eq.~\eqref{eq:def_T_noniter_rescaled}, where the horizontal axis is $\log_2(1/\varepsilon_{\rm add})$.
 The orange line for our adaptive method represents an upper bound.}
 \label{fig:rmse_vs_T_graph_norescaled}
\end{figure*}

\begin{figure*}[htbp]
    \centering
    \begin{tabular}{ccc}
         \Qcircuit @C=1em @R=1.5em {
         \\
         \ustick{\log_2 d}&{/}\qw &\gate{U_{\psi}}&\qw &&& \ustick{\log_2 d}&{/}\qw &\gate{U_{\psi}^\dagger}&\qw
         \\
         }
         ~~~~~&~~~~~
         \Qcircuit @C=1em @R=1.5em {
         \ustick{a}&{/}\qw &\qw &\multigate{2}{B_j}&\qw\\
         \\
         \ustick{\log_2 d}&{/}\qw &\qw&\ghost{B_j}&\qw
         }
         ~~~~~&~~~~~
         \Qcircuit @C=0.5em @R=1.6em {
        \ustick{a'}& {/} \qw&\qw&\qw &\multigate{2}{U_{\rm SEL}'}&\qw&\qw
        &&&&& {/} \qw&\qw&\qw &\multigate{1}{U^{(\bm{x})}}&\qw&\qw\\
        \ustick{\log_2 d}& {/} \qw&\qw&\qw &\ghost{U_{\rm SEL}'}&\qw&\qw
        &&~~~~~=~~~~~&&& {/} \qw&\qw&\qw &\ghost{U^{(\bm{x})}}&\qw&\qw\\
        \ustick{\|\mathbf{p}\|_1} & {/} \qw &\qw&\qw&\ghost{U_{\rm SEL}'}&\qw&\qw
        &&&&& {/} \qw &\dstick{~~~~~~~~~~~~~~~~~~~~\bm{x}\in G_{\mathbf{p}}} \qw &\qw&\ctrl{-1}&\qw&\qw
        }
        \\
        \\
    \end{tabular}
    \caption{Input oracles in this work.
    (Left) $U_{\psi}$ prepares a target state $\ket{\psi}$ from an initial state $\ket{\bm{0}}$. (Middle) For any $j\in\{1,2,...,M\}$, $B_j$ is an $a$-block-encoding of observable $O_j$ for a nonnegative integer $a$.
    (Right) $U^{(\bm{x})}$ is a block encoding of $\tilde{H}^{(\bm{x})}:=M^{-1}\sum_{j=1}^M x_jO_j$ with additional $a':=\mathcal{O}(\log M+a)$ qubits.
    For simplicity, we often assume $U_{\rm SEL}'$ to be an oracle for target observables $\{O_j\}$. Note that we can construct (a slightly modified) $U_{\rm SEL}'$ from $B_j$ with arbitrary precision; see Remark~\ref{rem:design_prepare} and Figs.~\ref{fig:BEofnormalizedUsel}--\ref{fig:BEofsinH}.
    }
    \label{fig:input_oracles}
\end{figure*}

\begin{figure*}[htbp]
  \centering
  \begin{minipage}[c]{0.4\textwidth}
    \centering
    \[
    \Qcircuit @C=0.5em @R=1.6em {
    \ustick{a'}& {/} \qw&\qw&\qw &\multigate{1}{\epsilon\mbox{-BE~of~}\frac{2M}{\pi 2^{\lceil\log_2 M\rceil}}\tilde{H}^{(\bm{x})}}&\qw&\qw\\
    \ustick{\log_2 d}& {/} \qw&\qw&\qw &\ghost{\epsilon\mbox{-BE~of~}\frac{2M}{\pi 2^{\lceil\log_2 M\rceil}}\tilde{H}^{(\bm{x})}}&\qw&\qw&~~~~~~~~~~~=\\
    \ustick{\|\mathbf{p}\|_1} & {/} \qw &\qw&\qw&\ctrl{-1}&\qw&\qw
    }
    \]
  \end{minipage}
  \begin{minipage}[c]{0.45\textwidth}
    \centering
    \[\Qcircuit @C=0.5em @R=1.8em {
  &&&&&\dstick{\mbox{(see caption of Fig.~\ref{fig:BEofsinH})}}\\
  \ustick{4} & {/} \qw & \qw& \qw& \qw &\multigate{1}{\epsilon\mbox{-BE~}\frac{2}{\pi}\mathcal{H}^{(\bm{x})}}&\qw&\qw&\qw&\qw&\qw&\qw&\qw&\qw&\qw&\qw&\qw&\qw&\qw\\
  \ustick{\lceil \log_2 M\rceil} & {/} \qw & \qw& \qw&\gate{H^{\otimes}}&\ghost{\epsilon\mbox{-BE~}\frac{2}{\pi}\mathcal{H}^{(\bm{x})}}&\ctrl{1}&\qw&\qw & \ustick{j\in \{1,...,M\}} \qw&\qw&\qw&\qw&\qw&\qw&\qw&\qw&\gate{H^{\otimes}}&\qw\\ 
  \ustick{a} & {/} \qw & \qw& \qw&\qw &\qw &\multigate{1}{B_j}&\qw&\qw&\qw&\qw&\qw&\qw&\qw&\qw&\qw&\qw&\qw&\qw\\ 
  \ustick{\log_2 d} & {/} \qw & \qw& \qw&\qw &\qw &\ghost{B_j}&\qw&\qw&\qw&\qw&\qw&\qw&\qw&\qw&\qw&\qw&\qw&\qw\\ 
  \ustick{\|\mathbf{p}\|_1} & {/} \qw & \qw & \qw &\dstick{~~~~~~~~~~~~~~~~~~~~~~~~~~~~~~~~~~\bm{x}\in G_{\mathbf{p}}}\qw&\ctrl{-3}&\qw&\qw&\qw&\qw&\qw&\qw&\qw&\qw&\qw&\qw&\qw&\qw&\qw\\ 
  \\
  }
  \]
  \end{minipage}
  \caption{Quantum circuit for a slightly modified version of $U'_{\rm SEL}$, where $a':=4+\lceil\log_2 M\rceil+a$. 
  For any $\bm{x}\in G_{\mathbf{p}}$, the circuit (except for the bottom line) is a $(1,4+\lceil\log_2 M\rceil+a,\epsilon<1/2)$-block-encoding of $(2M/\pi 2^{\lceil\log_2 M\rceil})\tilde{H}^{(\bm{x})}$.
  $H^{\otimes}$ means $H^{\otimes\lceil \log_2 M\rceil}$.}
  \label{fig:BEofnormalizedUsel}
\end{figure*}

\begin{figure*}[tb]
\centering
\begin{tabular}{c}
\\
\\
~~~~~\Qcircuit @C=0.8em @R=1em {
  \ustick{~~~~~~~p}&\ustick{~~~~~~~~~~~~~~~~~~~~~~~~~~~~~~~~~~~~~~~~~~~~~~~~~~~~~~~~~~\nu \in \{0,1,...,2^p-1\}}& {/} \qw& \qw & \ctrl{1} & \qw & \qw \\
  &\lstick{\ket{0}_{\rm w}}  & \qw& \targ & \ctrl{1}  & \targ & \rstick{\ket{0}_{\rm w}} \qw\\
  &  & \qw& \qw & \gate{e^{-iZ\varphi(\nu)}} & \qw & \qw\\
  \\
  \ustick{~~\lceil\log_2 M\rceil}&& {/} \qw & \ctrlo{-3} & \qw & \ctrlo{-3} & \qw\\
  }
\\
\\
\\
(a) Simplified diagram
\\
\\
\Qcircuit @C=0.8em @R=1.1em {
	\lstick{\ket{\nu^{(1)}}} & \qw &\qw & \ctrl{4} & \qw     &\qw&\cdots&&\qw&\qw&\qw \\
    \lstick{\ket{\nu^{(2)}}} & \qw &\qw & \qw      & \ctrl{3}&\qw&\cdots&&\qw&\qw&\qw\\
    &\vdots \\
    \lstick{\ket{\nu^{(p)}}} & \qw &\qw & \qw & \qw&\qw&\cdots&&\ctrl{1}& \qw &\qw\\
    \lstick{\ket{0}_{\rm w}}         &\targ&\ctrl{1}& \ctrl{1} & \ctrl{1}&\qw&\cdots&& \ctrl{1}&\targ &\rstick{\ket{0}_{\rm w}} \qw \\ 
    & \qw &\gate{R_z(\frac{1}{2^p}-1)} & \gate{R_z(\frac{2^1}{2^p})} & \gate{R_z(\frac{2^2}{2^p})} &\qw &\cdots&& \gate{R_z(\frac{2^p}{2^p})} &\qw  &\qw\\
    & \ctrlo{-2}                   &\qw & \qw& \qw &\qw &\cdots&& \qw& \ctrlo{-2} &\qw \\
    & \ctrlo{-1}                   &\qw & \qw& \qw &\qw &\cdots&& \qw& \ctrlo{-1}&\qw\\
    & \ctrlo{-1}                   &\qw & \qw& \qw &\qw &\cdots&& \qw& \ctrlo{-1} &\qw\\
    }
\\
\\
(b) Detailed diagram in the case of $\lceil \log_2 M\rceil=3$
\\
\end{tabular}
\caption{Quantum circuit for $\ket{0}^{\otimes \lceil \log_2 M\rceil}$-controlled unitary $\sum_{\nu}\ket{\nu}\bra{\nu}\otimes e^{-iZ\varphi(\nu)}$. The qubit $\ket{0}_{\rm w}$ is an additional work qubit. Here, we recall $\varphi(\mu):=\frac{\mu}{2^p}-\frac12 +\frac{1}{2^{p+1}}\in G_p$ in Eq.~(\ref{eq:varphi_bijection}). $R_z(x):=e^{-ixZ/2}$.}
\label{fig:block_enc_Hx}
\end{figure*}

\begin{figure*}[tb]
\centering
\begin{tabular}{c}
\\
\\
~~~~~\Qcircuit @C=0.5em @R=1.2em {
  \ustick{~~~~p_1}&\ustick{~~~~~~~~~~~~~~~~~~~~~~~~~~~~~~~~~~~~~~~~~~~\mu_1}& {/} \qw& \qw & \ctrl{4} & \qw & \qw & \qw &\qw  &\qw  &\qw&\cdots && \qw & \qw & \qw & \qw & \qw &\qw&\qw   \\
  \ustick{~~~~p_2}& & {/} \qw& \qw & \qw & \qw &\qw & \ustick{~~~~~~~~~~~~~~~~~~~~~~~~~\mu_2} \qw &\ctrl{3}  &\qw&\qw &\cdots && \qw & \qw & \qw & \qw & \qw &\qw   &\qw\\
  &&&\vdots \\
  \ustick{~~~~p_M}& & {/} \qw& \qw & \qw & \qw &\qw&\qw &\qw&\qw &\qw &\cdots && \qw & \qw & \ustick{~~~~~~~~~~~~~~~~~~~~~~~~~~~\mu_M} \qw & \ctrl{1} & \qw &\qw&\qw\\
  &\lstick{\ket{0}_{\rm w}}  & \qw& \targ & \ctrl{1}  & \targ &\qw & \targ & \ctrl{1} &\targ &\qw &\cdots&& \qw & \qw & \targ & \ctrl{1} & \targ &\qw& \rstick{\ket{0}_{\rm w}} \qw \\
  &\lstick{\ket{0}}  & \gate{X}& \qw & \gate{e^{-iZ\varphi(\mu_1)}} & \qw & \qw&\qw &\gate{e^{-iZ\varphi(\mu_2)}} &\qw & \qw &\cdots&& \qw & \qw & \qw & \gate{e^{-iZ\varphi(\mu_M)}} & \qw &\gate{X} & \qw \\
  \\
  \ustick{~~\lceil\log_2 M\rceil}&& {/} \qw & \ctrlo{-3} & \qw & \ctrlo{-3} & \gate{-1} &\ctrlo{-3} &\qw & \ctrlo{-3} &\qw &\cdots && \qw & \gate{-1} & \ctrlo{-3} & \qw & \ctrlo{-3} &\gate{-1} &\qw  \\
  }
\\
\\
\end{tabular}
\caption{Quantum circuit for a $2$-block-encoding of $\sum_{\bm{x}}\ket{\bm{x}}\bra{\bm{x}}\otimes e^{i\mathcal{H}^{(\bm{x})}}$, where $\mathcal{H}^{(\bm{x})}:=\sum_{j=1}^M x_j\ket{j}\bra{j}$.
The gate $-1$ denotes a bit decrement operation such as $\ket{M}\to \ket{M-1}\to\cdots \to \ket{1}\to \ket{M}$.}
\label{fig:block_enc_Hx2}
\end{figure*}

\begin{figure*}[tb]
\centering
\begin{tabular}{c}
\\
\\
~~~~~\Qcircuit @C=0.5em @R=1.2em {
  \ustick{~~~~p_1}&\ustick{~~~~~~~~~~~~~~~~~~~~~~~~~~~~~~~~~~~~~~~~~~~\mu_1}& {/} \qw&\qw& \qw & \ctrl{4} & \qw & \qw & \qw &\qw  &\qw  &\qw&\cdots && \qw & \qw & \qw & \qw & \qw &\qw&\qw   &\qw   \\
  \ustick{~~~~p_2}& & {/} \qw&\qw& \qw & \qw & \qw &\qw & \ustick{~~~~~~~~~~~~~~~~~~~~~~~~~\mu_2} \qw &\ctrl{3}  &\qw&\qw &\cdots && \qw & \qw & \qw & \qw & \qw &\qw   &\qw&\qw   \\
  &&&&\vdots \\
  \ustick{~~~~p_M}& & {/} \qw&\qw& \qw & \qw & \qw &\qw&\qw &\qw&\qw &\qw &\cdots && \qw & \qw & \ustick{~~~~~~~~~~~~~~~~~~~~~~~~~~~\mu_M} \qw & \ctrl{1} & \qw &\qw&\qw&\qw   \\
  & &\qw& \qw& \targ & \ctrl{1}  & \targ &\qw & \targ & \ctrl{1} &\targ &\qw &\cdots&& \qw & \qw & \targ & \ctrl{1} & \targ &\qw& \qw &\qw   \\
  & &\qw& \gate{X}& \qw & \gate{e^{-iZ\varphi(\mu_1)}} & \qw & \qw&\qw &\gate{e^{-iZ\varphi(\mu_2)}} &\qw & \qw &\cdots&& \qw & \qw & \qw & \gate{e^{-iZ\varphi(\mu_M)}} & \qw &\gate{X} & \qw &\qw   \\
  \\
  &&\gate{H}&\ctrl{-2}&\qw & \gate{iZ} &\qw & \qw & \qw & \qw &\qw &\qw  &\cdots && \qw & \qw & \qw  & \qw & \qw  &\ctrl{-2}& \gate{H}  &\qw   \\
  \dstick{\lceil\log_2 M\rceil~~~}&& {/} \qw &\qw& \ctrlo{-4} & \qw & \ctrlo{-4} & \gate{-1} &\ctrlo{-4} &\qw & \ctrlo{-4} &\qw &\cdots && \qw & \gate{-1} & \ctrlo{-4} & \qw & \ctrlo{-4} &\gate{-1} &\qw  &\qw   \\
  }
\\
\\
\end{tabular}
\caption{Quantum circuit for 3-block-encoding of $\sin\mathcal{H}^{(\bm{x})}$. 
Substituting this circuit and its inverse into $U$ and $U^\dagger$, respectively, in the circuit of Fig.~\ref{fig:qet_circuit} ($a$ is taken as 3) and setting $\mathcal{O}(\log(1/\epsilon))$ circuit parameters $\{\phi_i\}$ for $P(x)\approx(2/\pi)\arcsin(x)$, we obtain an $(1,4,\epsilon)$-block-encoding of $\sum_{\bm{x}}\ket{\bm{x}}\bra{\bm{x}}\otimes (2/\pi)\mathcal{H}^{(\bm{x})}$ for any $\epsilon\in (0,1/2)$.}
\label{fig:BEofsinH}
\end{figure*}


\begin{figure*}[htbp]
  \centering
  \begin{minipage}[c]{0.2\textwidth}
    \centering
    \[\Qcircuit @C=0.5em @R=1.6em {
    \ustick{a'+1}& {/} \qw&\qw&\qw &\multigate{1}{U_{\rm obs}^{(\bm{x})}}&\qw&\qw\\
    \ustick{\log_2 d}& {/} \qw&\qw&\qw &\ghost{U_{\rm obs}^{(\bm{x})}}&\qw&\qw&~~~~~~~=\\
    \ustick{\|\mathbf{p}\|_1} & {/} \qw &\qw&\qw&\ctrl{-1}&\qw&\qw
    \\
    }
    \]
  \end{minipage}
  \begin{minipage}[c]{0.7\textwidth}
    \centering
    \[\Qcircuit @C=0.5em @R=1.8em {
  &\gate{H}                           & \targ  & \gate{e^{-iZ\phi_m }} & \targ 
  &\qw & \targ & \gate{e^{-iZ\phi_{m-1} }} & \targ
  &\qw&\cdots&& \qw & \targ  & \gate{e^{-iZ\phi_1 }} & \targ
  &\gate{H} & \qw \\
  &\multigate{2}{U'_{\rm SEL}}  & \ctrlo{-1} & \qw & \ctrlo{-1}  
  &\multigate{2}{U'^{\dagger}_{\rm SEL}}& \ctrlo{-1}& {/} \qw & \ctrlo{-1} 
  &\dstick{a'~~~~~~~~~~~~~~~~~~~~~~~}\qw&\cdots& & \multigate{2}{U'_{\rm SEL}}& \ctrlo{-1}& \qw & \ctrlo{-1}
  & \qw&\qw \\
                              &\ghost{U'_{\rm SEL}}         & \ustick{~~~~~~~~\log_2 d} \qw & {/} \qw & \qw                             
  &\ghost{U'^{\dagger}_{\rm SEL}}& \qw & \qw & \qw
  &\qw &\cdots&& \ghost{U'_{\rm SEL}} & \qw & \qw & \qw
  & \qw&\qw \\
  &\ghost{U'_{\rm SEL}}& \ustick{~~~~~~~~~\|\mathbf{p}\|_1} \qw & {/} \qw & \qw & \ghost{U'^{\dagger}_{\rm SEL}} &\qw & \qw & \qw &\qw &\cdots&&\ghost{U'_{\rm SEL}} &\qw & \qw & \qw &\qw & \qw \\
  &\ustick{\mbox{(Fig.~\ref{fig:input_oracles}~or~Fig.~\ref{fig:BEofnormalizedUsel})}}
  \\
  }
    \]
  \end{minipage}
  \caption{Quantum circuit for $U_{\rm obs}$ in Eq.~(\ref{supple_eq:amplified_BE_obs}).
  Here, $a'=\mathcal{O}(\log M+a)$.
  The circuit parameters $\{\phi_j\}_{j=1}^m$ are adjusted for real QSP based on the degree-$m$ odd polynomial $P(x)\approx \gamma x,~\gamma=\mathcal{O}(M/\sigma)$.
  Also, $U_{\rm SEL}'$ can be replaced with the circuit in Fig.~\ref{fig:BEofnormalizedUsel} (where $a'$ is taken as $\lceil\log_2 M\rceil+a+4$), and in this case, the circuit equality holds by adjusting $\epsilon$ in Fig.~\ref{fig:BEofnormalizedUsel} and $m$.}
  \label{fig:amplified_BE_circuit}
\end{figure*}

\begin{figure*}[tb]
\centering
\begin{tabular}{c}
\\
\Qcircuit @C=0.8em @R=0.8em {
  \lstick{\ket{0}}& \gate{e^{-iY\phi/2}}   & \qw & \qw & \qw & \qw \\
  \lstick{\ket{0}}& \gate{e^{-iY\theta/2}} & \ctrlo{1} & \ctrl{1} & \gate{e^{iY\theta/2}} & \qw \\
  \lstick{\ket{0}^{\otimes a}} & {/} \qw & \multigate{1}{B_j} & \multigate{1}{-{\rm sgn}(\tilde{u}^{(q)}_j)\bm{1}} & \qw & \qw \\
  \dstick{~~~~~~~~~~~~~~~~~\log_2 d} & {/} \qw & \ghost{B_j} & \ghost{-{\rm sgn}(\tilde{u}^{(q)}_j)\bm{1}} & \qw & \qw \\
  }
\\
\\
\end{tabular}
\caption{Quantum circuit for an $(a+2)$-block-encoding of $\tilde{O}_j^{(q)}:=(O_j-\tilde{u}^{(q)}_j\bm{1})/2$, where $\tilde{u}^{(q)}_j\in [-1,1]$. $B_j$ is an $a$-block-encoding of $O_j$. 
The angles are defined as $\theta:=2\tan^{-1} \sqrt{|\tilde{u}^{(q)}_j|}$ and $\phi:=2\tan^{-1} \sqrt{4-(1+|\tilde{u}^{(q)}_j|)^2}/(1+|\tilde{u}^{(q)}_j|)$.}
\label{supplefig:LCU_O_u}
\end{figure*}

\newcommand{\arrep}[1]{\ar @<6pt> @/^/[#1]|-{\mbox{~Alternate~this~with~its~dagger~$2(Q-1)$~times~}}}
\begin{figure*}[htbp]
  \centering
  \begin{minipage}[c]{0.65\textwidth}
    \centering
    \[\Qcircuit @C=0.5em @R=1.6em {
    \lstick{\ket{0}}&\gate{H}&\qw &\qw&\gate{R^\dagger_z} & \gate{H} &\gate{S} & \ctrl{1}&\qw&\qw&\ctrl{1}&\gate{H}&\gate{R_z}&\qw&\gate{H}&\qw\\
    \lstick{\ket{0}}&\qw&\qw &\qw& \qw & \gate{H} &\ctrlo{1} &\targ &\ctrlo{1}&\gate{H}&\multigate{2}{\mbox{Ref}_{\ket{\bm{0}}}}&\qw&\qw&\qw&\qw&\qw\\
    \lstick{\ket{{0}}^{\otimes a''}}& \qw & {/} \qw&\qw&\qw&\qw &\multigate{1}{U_{\rm obs}^{(\bm{x})}}&\qw&\multigate{1}{U_{\rm obs}^{(\bm{x})\dagger}}&\qw&\ghost{\mbox{Ref}_{\ket{\bm{0}}}}&\qw&\qw&\qw&\qw&\qw&&&=\\
    \lstick{\ket{{0}}^{\otimes \log_2 d}}& \qw& {/} \qw&\qw&\qw&\gate{U_{\psi}} &\ghost{U_{\rm obs}^{(\bm{x})}}&\qw &\ghost{U_{\rm obs}^{(\bm{x})\dagger}}&\gate{U_{\psi}^\dagger}&\ghost{\mbox{Ref}_{\ket{\bm{0}}}}&\qw&\qw&\qw&\qw&\qw\\
    \lstick{\ket{+}^{\otimes 3M}}& \qw& {/} \qw &\qw&\qw&\qw&\ctrl{-1}&\qw&\ctrl{-1}&\qw&\qw&\qw&\qw&\qw&\qw&\qw\\
    & & & & & & & & & && & &\arrep{llllllllll}
    \\
    \\
    }
    \]
  \end{minipage}
  \begin{minipage}[c]{0.25\textwidth}
    \centering
    \[
    ~~~~~~~~~~~~~~~\Qcircuit @C=0.5em @R=1.6em {
    \lstick{\ket{0}^{\otimes 2}}&{/} \qw                      &\qw &\multigate{3}{\varepsilon''\mbox{-BE~of~}{e^{it\mathbf{H}}}}&\qw\\
    \lstick{\ket{0}^{\otimes a''+\log_2 d}}& {/} \qw&\qw&\ghost{\varepsilon''\mbox{-BE~of~}{e^{it\mathbf{H}}}}&\qw\\
    &&\\
    \lstick{\ket{+}^{\otimes 3M}}                    & {/} \qw&\qw&\ghost{\varepsilon''\mbox{-BE~of~}{e^{it\mathbf{H}}}}&\qw
    \\
    \\
    }
    \]
  \end{minipage}
  \caption{Quantum circuit for the approximate probing-state preparation $\ket{\Upsilon(q)}$ by Hamiltonian simulation.  
  When we use the circuits Figs.~\ref{fig:BEofnormalizedUsel} and~\ref{supplefig:LCU_O_u} for encoding of target observables $\{\tilde{O}^{(q)}_j\}$, $a''$ of $U^{(\bm{x})}_{\rm obs}$ is explicitly given by $a''=a+\lceil\log_2 M\rceil+7$, while Lemma~\ref{supple_lem:statepre_HS} itself assumes the oracles in Fig.~\ref{fig:input_oracles} (thereby, $a''=a+\lceil\log_2 M\rceil+1$ in Lemma~\ref{supple_lem:statepre_HS}).
  $Q=\mathcal{O}(2^{q}\sigma+\log (1/\varepsilon''))$ and Ref$_{\ket{\bm{0}}}$ is the reflection on $\ket{\bm{0}}$, i.e., Ref$_{\ket{\bm{0}}}:=2\ket{\bm{0}}\bra{\bm{0}}-\bm{1}$.
  The angles of rotations $R_z$ in the top line are adjusted by QSP to realize $\varepsilon''$-precise Hamiltonian simulation~\cite{low2017optimalHSbyQSP,Low2019hamiltonian}.}
  \label{fig:spbyHS_fullcircuit}
\end{figure*}

\renewcommand{\arrep}[1]{\ar @<6pt> @/^/[#1]|-{ \mbox{~Repeat~$t/2$~times~}}}
\begin{figure*}[bt]
\centering
\begin{tabular}{c}
  \Qcircuit @C=0.8em @R=1.4em {
  \lstick{\ket{+}^{\otimes 3M}} & {/} \qw 
  &\qw & \ctrl{1} &\qw &\qw 
  &\qw & \ctrl{1} &\qw &\qw 
  & \qw &\qw &\cdots&& \qw  & \gate{X^{\otimes 3M}} & \qw \\
  \lstick{\ket{+}} &\qw 
  &\qw & \ctrl{1} &\qw &\qw 
  &\qw & \ctrl{1} &\qw &\qw 
  & \qw &\qw&\cdots& &\gate{\mbox{QFT}_{G_1}^\dagger} &\ctrlo{-1} &\qw \\
  \lstick{\ket{{0}}^{\otimes a''}} &{/} \qw 
  &\qw & \multigate{1}{U_{\rm obs}^{(\bm{x},y)}} &\qw & \multigate{1}{\mbox{Ref}_{\ket{\bm{0}}}}
  &\qw & \multigate{1}{U_{\rm obs}^{(\bm{x},y)\dagger}} &\qw & \multigate{1}{\mbox{Ref}_{\ket{\bm{0}}}}
  & \qw &\qw &\cdots&& \meter & \rstick{\mbox{all}~0} \cw \\
  \lstick{\ket{{0}}^{\otimes \log_2 d}} &{/} \qw 
  & \gate{U_{\psi}} & \ghost{U_{\rm obs}^{(\bm{x},y)}} & \gate{U_{\psi^\dagger}} & \ghost{\mbox{Ref}_{\ket{\bm{0}}}}
  & \gate{U_{\psi}} & \ghost{U_{\rm obs}^{(\bm{x},y)\dagger}} & \gate{U_{\psi^\dagger}} & \ghost{\mbox{Ref}_{\ket{\bm{0}}}}
  &\qw&\qw&\cdots&& \meter & \rstick{\mbox{all}~0} \cw \\
  &&&&&&&&& \arrep{lllllll}
  }
  \\
\end{tabular}
\caption{Quantum circuit for the approximate probing-state preparation $\ket{\Upsilon(q)}$ by Grover-like repetition.
When we use the circuits Figs.~\ref{fig:BEofnormalizedUsel} and~\ref{supplefig:LCU_O_u} for encoding of target observables $\{\tilde{O}^{(q)}_j\}$, $a''$ of $U^{(\bm{x},y)}_{\rm obs}$ is explicitly given by $a''=a+\lceil\log_2 (M+1)\rceil+7$, while Lemma~\ref{supple_lem:statepre_Grover} itself assumes the oracles in Fig.~\ref{fig:input_oracles} (thereby, $a''=a+\lceil\log_2 (M+1)\rceil+1$ in Lemma~\ref{supple_lem:statepre_Grover}).
$\ket{\bm{0}}:=\ket{0}^{\otimes a''+\log_2 d}$ and Ref$_{\ket{\bm{0}}}$ is the reflection on $\ket{\bm{0}}$, i.e., Ref$_{\ket{\bm{0}}}:=2\ket{\bm{0}}\bra{\bm{0}}-\bm{1}$.
The measurement outcome $\ket{\bm{0}}$ is obtained with the probability $\mathcal{N}_t^2/2\geq 0.462$ (see Lemma~\ref{supple_lem:statepre_Grover}).
This post-selection can be removed by using a single-step amplitude amplification and adjusting the constant $\delta'$ for our algorithm.}
\label{fig:sp_Grover_diagram}
\end{figure*}

\end{document}